\DeclareMathOperator*{\argmin}{arg\,min}
\DeclareMathOperator*{\E}{\mathbb{E}}
\newcommand{\RN}[1]{\textup{\uppercase\expandafter{\romannumeral#1}}}
\newcommand{\Rmnum}[1]{\expandafter\@slowromancap\romannumeral #1@}
\newtheorem{theorem}{Theorem}
\newtheorem{lemma}{Lemma}
\newtheorem{proposition}{Proposition}
\newtheorem{corollary}{Corollary}
\theoremstyle{definition}
\newtheorem{definition}{Definition}
\newtheorem{assumption}{Assumption}
\newtheorem{remark}{Remark}
\newcommand{\R}{\mathbb R}
\newcommand{\M}{\mathcal M}
\newcommand{\K}{\mathcal K}
\newcommand{\AK}{A_{\Kb}}
\newcommand{\Kb}{\mathbb K}
\newcommand{\A}{\mathcal A}
\newcommand{\X}{\mathcal X}
\newcommand{\U}{\mathcal U}
\newcommand{\W}{\mathcal W}
\newcommand{\dbtilde}[1]{\accentset{\approx}{#1}}
\newcommand{\one}{\mathds{1}}
\newcommand{\nbf}{\noindent\textbf}
\title{Online Optimal Control with Affine Constraints}
\author {
    % Authors
        Yingying Li,\textsuperscript{\rm 1}
        Subhro Das,\textsuperscript{\rm 2} 
        Na Li\textsuperscript{\rm 1} \\
}
\begin{document}

\maketitle

\begin{abstract}
This paper considers online optimal control with affine constraints on the states and  actions under linear dynamics with bounded random disturbances. The system dynamics and constraints are assumed to be known and time invariant but the convex stage cost functions change adversarially.
To solve this problem,  we propose Online Gradient Descent with Buffer Zones (OGD-BZ). Theoretically, we show that OGD-BZ with proper parameters can guarantee the system to satisfy all the constraints despite any admissible disturbances. Further, we investigate the policy regret of OGD-BZ, which compares OGD-BZ's performance with the performance of the optimal linear policy in hindsight. We show that OGD-BZ can achieve a policy regret upper bound that is square root of the horizon length multiplied by some logarithmic terms of the horizon length under proper algorithm parameters.
\end{abstract}

%% ============================================
%% ============== Section 1 ===================
%% ============================================

\section{Introduction}

Recently, there is a lot of interest  in solving control problems by  learning-based  techniques, e.g. online learning and reinforcement learning \cite{agarwal2019online,li2019online,ibrahimi2012efficient,dean2018regret,fazel2018global,yang2019provably,li2019distributed}. This is motivated by    
 applications such as data centers \cite{lazic2018data,li2019onlineMDP}, robotics \cite{fisac2018general}, autonomous vehicles \cite{sallab2017deep}, power systems \cite{chen2021reinforcement}, etc. For real-world implementation, it is crucial to design safe algorithms that ensure the system to satisfy certain (physical) constraints
despite unknown disturbances. For example,  temperatures in  data centers should be within certain ranges to reduce task failures despite  disturbances from unmodeled heat sources,    quadrotors should avoid collisions even when perturbed by wind, etc.  In addition to safety, many applications involve time-varying environments, e.g. varying electricity prices, moving targets, etc.  Hence,  safe algorithms  should not be over-conservative and should adapt to varying environments for desirable    performance.

In this paper, we  design safe algorithms for time-varying environments by considering the following constrained online optimal control problem. Specifically, we consider a linear system with random disturbances,
\begin{align}\label{equ: system}
    x_{t+1}=A x_t + B u_t +w_t, \quad t\geq 0,
\end{align} 
where disturbance $w_t$ is random and satisfies $\|w_t\|_\infty \leq \bar w$. Consider affine constraints on the state $x_t$ and the action $u_t$:
\begin{align}\label{equ: sec1 affine constraints}
    D_x x_t \leq d_x, \quad D_u u_t \leq d_u, \quad \forall\  t\geq 0.
\end{align}
For simplicity, we assume the system parameters $A, B, \bar w$ and the constraints are known.
At  stage $0\leq t\leq T$, a convex cost function $c_t(x_t,u_t)$ is adversarially generated and the decision maker selects a feasible action $u_t$ before $c_t(x_t, u_t)$ is revealed.   We aim to achieve  two goals simultaneously:  (i) to minimize the sum of the adversarially varying costs, (ii) to satisfy the constraints \eqref{equ: sec1 affine constraints} for all $t$ despite the disturbances.

There are many studies to address each goal separately but  lack results on both goals together as discussed below.

Firstly, there is recent progress on online optimal control to address Goal (i).  A commonly adopted performance metric is policy regret, which compares the online cost with the cost of the optimal linear policy in hindsight \cite{agarwal2019online}. Sublinear policy regrets have been achieved for linear systems  with either stochastic disturbances \cite{cohen2018online,agarwal2019logarithmic} or adversarial disturbances \cite{agarwal2019online,foster2020logarithmic,goel2020power,goel2020regret}. However, most literature only considers the unconstrained control problem. Recently, \citet{nonhoff2020online} studies constrained online optimal control but  assumes no disturbances.

Secondly, there are many papers from the control community to address Goal (ii): constraints satisfaction. Perhaps the most famous algorithms  are Model Predictive Control (MPC)  \cite{rawlings2009model} and its variants, such as robust MPC which guarantees (hard) constraint satisfaction in the presence of disturbances \cite{bemporad1999robust,kouvaritakis2000efficient,mayne2005robust,limon2010robust,zafiriou1990robust} as well as stochastic MPC which considers soft constraints and allows constraints violation \cite{oldewurtel2008tractable,mesbah2016stochastic}. However, there lack algorithms with both regret/optimality guarantees and constraint satisfaction guarantees. 

Therefore, an important question remains to be addressed:
\textit{Q: how to design online algorithms to both satisfy the constraints despite  disturbances and  yield $o(T)$ policy regrets?}

\subsubsection{Our Contributions} In this paper, we answer the question above by proposing an online control algorithm: Online Gradient Descent with Buffer Zones (OGD-BZ). To develop OGD-BZ, we first convert the constrained online optimal control problem into an online convex optimization (OCO) problem with temporal-coupled stage costs and temporal-coupled stage constraints, and then convert the temporal-coupled OCO problem into a classical OCO problem. The problem conversion  leverages the techniques from recent unconstrained online control literature and robust  optimization literature. Since the  conversion is not exact/equivalent, we tighten the constraint set by adding buffer zones to account for approximation errors caused by the problem conversion. We then apply classical OCO method OGD to solve the problem and call the resulting algorithm as OGD-BZ.

Theoretically, we show that, with proper parameters, OGD-BZ can ensure all the states and actions to satisfy the constraints \eqref{equ: sec1 affine constraints} for any  disturbances bounded by $\bar w$.  
In addition, we  show that OGD-BZ's policy regret can be bounded by $\tilde O(\sqrt T)$ for general convex cost functions $c_t(x_t,u_t)$ under proper assumptions and parameters.  As far as we know, OGD-BZ is the first algorithm with theoretical guarantees on both  sublinear policy regret and robust constraint
satisfaction.  Further, our theoretical results explicitly characterize a trade-off between the constraint satisfaction and the low regret when deciding  the size of the buffer zone of OGD-BZ. That is, a larger buffer zone, which indicates a more conservative search space, is preferred for constraints satisfaction; while a smaller buffer zone is preferred for low regret.

\subsubsection{Related Work} We provide more literature review below.

%\vspace{1pt}

\noindent\textit{Safe reinforcement learning.} There is a rich body of literature on safe RL and safe learning-based control that studies how to learn optimal policies without violating constraints and without knowing the system \cite{fisac2018general,aswani2013provably,wabersich2018linear,garcia2015comprehensive,cheng2019end,zanon2019safe,fulton2018safe}. Perhaps the most relevant paper is \citet{dean2019safely}, which  proposes algorithms to learn optimal linear policies for a constrained linear quadratic regulator problem. However, most theoretical guarantees  in the safe RL literature require  time-invariant environment and there lacks policy regret analysis when facing time-varying objectives. This paper addresses the time-varying objectives but considers known system dynamics. It is our ongoing work to combine both safe RL and our approach to design safe learning algorithms with policy regret guarantees in time-varying problems. 

Another important notion of safety is  the system stability, which  is also  studied in the safe RL/learning-based control literature \cite{dean2018regret,dean2019sample,chow2018lyapunov}.

%\vspace{1pt}

\noindent\textit{Online convex optimization (OCO).} \citet{hazan2019introduction} provides a review on classical (decoupled) OCO. OCO with memory considers  coupled costs and decoupled constraints  \cite{anava2015online,li2020onlineTAC,li2020leveraging}. The  papers on OCO with coupled constraints usually allow constraint violation \cite{yuan2018online,cao2018virtual, kveton2008online}. Besides, OCO  does not consider system dynamics or  disturbances. 

%\vspace{1pt}

\noindent\textit{Constrained optimal control.}  Constrained optimal control enjoys a long history of research.  Without disturbances, it is known that the optimal controller for linearly constrained linear quadratic regulator  is  piecewise linear \cite{bemporad2002explicit}. With disturbances (as considered in this paper), the problem is much more challenging.  Current methods such as robust  MPC \cite{limon2008design,limon2010robust,rawlings2009model} and stochastic MPC \cite{mesbah2016stochastic,oldewurtel2008tractable} usually deploy linear policies for fast computation even though linear policies are suboptimal. Besides, most theoretical analysis of robust/stochastic MPC focus on stability, recursive feasibility, and constraints satisfaction, instead of policy regrets. 

\subsubsection{Notations and Conventions}
We let $\|\cdot \|_1, \|\cdot \|_2, \|\cdot\|_\infty$ denote the $L_1, L_2, L_\infty$ norms respectively for  vectors and matrices.  Let $\one_n$ denote an all-one vector in $\R^n$. For two vectors $a, b \in \R^n$, we write $a\leq b$ if $a_i \leq b_i$ for any entry $i$. Let $\text{vec}(A)$ denote the vectorization of matrix $A$. For better exposition, some  bounds use $\Theta(\cdot)$ to omit  constants that do not depend on  $T$ or  the  problem dimensions explicitly.

%% ============================================
%% ============== Section 2 ===================
%% ============================================

\section{Problem Formulation}
In this paper, we consider an online optimal control problem with linear dynamics and affine constraints. 
Specifically,  at each stage $t\in\{0,1, \dots, T\}$, an agent observes the current state $x_t$ and implements an action $u_t$, which incurs a cost $c_t(x_t, u_t)$. The stage cost function $c_t(\cdot, \cdot)$ is generated adversarially and revealed to the agent after the action $u_t$ is taken. The system evolves to the next state according to \eqref{equ: system}, 
where $x_0$ is fixed,  $w_t$ is a random disturbance bounded by $w_t \in \mathcal W=\{w\in \R^n:  \|w\|_\infty \leq \bar w\}$, and states and actions should satisfy the affine constraints \eqref{equ: sec1 affine constraints}. We denote the corresponding constraint sets as
$$
\mathcal X\!=\!\{x \in\R^n\!\!:\!  D_x x \leq \!d_x\}, \quad \mathcal U\!=\!\{u\in \R^m\!\!:\! D_u u \leq\! d_u\},
$$ 
where $d_x \in \R^{k_x}$ and $d_u \in \R^{k_u}$. Define $k_c=k_x+k_u$ as the total number of the constraints.

For simplicity, we consider that the parameters $A, B, \bar w, D_x, d_x, D_u, d_u$ are known a priori and that the initial value satisfies $x_0=0$. We leave the study of unknown parameters and general $x_0$  for the future.

\begin{definition}[Safe controller]\label{def: feasible controller}
	Consider a controller (or an algorithm) $\A$ that chooses action $u_t^{\A}\in \U$ based on history states $\{x_k^{\A}\}_{k=0}^t$  and cost functions $\{c_k(\cdot, \cdot)\}_{k=0}^{t-1}$. The controller $\A$ is called \textit{safe} if $x_t^{\A}\in\mathcal X$ and $u_t^{\A}\in \mathcal U$ for all $0\leq t \leq T$ and all disturbances $\{w_k\in \mathcal W\}_{k=0}^T$.
\end{definition}
\noindent Define the total cost of a safe algorithm/controller $\A$ as:
\begin{align}
J_T(\A)=\E_{\{w_k\}}\left[\sum_{t=0}^T c_t(x_t^{\A}, u_t^{\A})\right].
\end{align}

\subsubsection{Benchmark Policy and Policy Regret} In this paper,  we consider linear policies of the form $u_t=-K x_t$ as our benchmark policy for simplicity, though the optimal policy for the constrained  control of noisy systems may be  nonlinear \cite{rawlings2009model}. We leave the discussion on nonlinear policies as future work. 

Based on \cite{cohen2018online}, we define strong stability, which is a quantitative version of   stability  and is commonly introduced to ease non-asymptotic regret analysis in the online control literature \cite{agarwal2019logarithmic,agarwal2019online}.
\begin{definition}[Strong Stability]\label{def: diagonally strongly stable}
A linear controller $u_t=-K x_t$  is $(\kappa, \gamma)$-strongly stable for $\kappa\geq 1$ and $\gamma \in (0,1]$ if there exists a  matrix $L$ and an invertible matrix $Q$ such that $A-BK=Q^{-1}L Q$, with $\|L \|_2\leq 1-\gamma$ and $\max(\|Q\|_2, \|Q^{-1}\|_2, \|K\|_2)\leq \kappa$.
\end{definition}
As shown in \citet{cohen2018online},  strongly stable controllers can be computed efficiently by SDP formulation.

Our benchmark policy class includes any linear controller $u_t=-K x_t$ satisfying the conditions below:
\begin{align*}
\mathcal K=\{ K: \text{$K$  is safe and $(\kappa, \gamma)$-strongly stable}\},
\end{align*}
where $K$ is called safe if the controller $u_t=-K x_t$ is safe according to Definition \ref{def: feasible controller}.

The policy regret of online algorithm $\A$ is defined as:
\begin{equation}
\text{Reg}(\A) =J_T(\A)-\min_{K \in \K} J_T(K).
\end{equation}

\subsubsection{Assumptions and Definitions} For the rest of the paper, we define $\kappa_B = \max(\|B\|_2, 1)$.   In addition, we introduce the following assumptions on the disturbances and the cost functions, which are standard  in  literature \cite{agarwal2019logarithmic}.
\begin{assumption}\label{ass: wt}
	$\{w_t\}$ are i.i.d. and bounded by $\|w_t \|_\infty \leq \bar w$, where $\bar w>0$.\footnote{The results of this paper can be extended to adversarial noises.} 
\end{assumption}
\begin{assumption}\label{ass: bounded Hessian largest evalue}
	For any $t\geq 0$, cost function $c_t(x_t, u_t)$ is convex and differentiable with respect to $x_t$ and $u_t$. Further, there exists $G>0$, such that for any $\|x\|_2\leq b$, $\|u\|_2\leq b$, we have $\|\nabla_x c_t(x,u)\|_2\leq Gb$ and $\|\nabla_u c_t(x,u)\|_2\leq Gb$.
\end{assumption}

Next, we define strictly and loosely safe controllers.

\begin{definition}[Strict and loose safety]\label{def: strictly feasible policy}
	A safe controller $\mathcal A$ is called  \textit{$\epsilon$-strictly safe} for some $\epsilon>0$ if $D_x x_t^{\A}\leq d_x-\epsilon \one_{k_x}$ and $D_u u_t^{\A}\leq d_u-\epsilon \one_{k_u}$ for all $0\leq t \leq T$ under any disturbance sequence $\{w_k\in \mathcal W\}_{k=0}^T$.
	
	A  controller $\mathcal A$ is called  \textit{$\epsilon$-loosely safe}  for some $\epsilon>0$ if $D_x x_t^{\A}\leq d_x+\epsilon \one_{k_x}$ and $D_u u_t^{\A}\leq d_u+\epsilon \one_{k_u}$ for all $0\leq t \leq T$ under any disturbance sequence $\{w_k\in \mathcal W\}_{k=0}^T$.
	
\end{definition}

In the following, we assume the existence of a \textit{strictly} safe linear policy. The existence of a safe linear policy is necessary since otherwise our  policy regret  is not well-defined. The existence of a strictly safe policy provides some flexibility for the approximation steps in our algorithm design and is a common assumption in constrained optimization and control \cite{boyd2004convex,limon2010robust}. 
\begin{assumption}\label{ass: K strictly feasible}
	There exists $K_*\in \mathcal K$ such that the policy $u_t=-K_* x_t$ is $\epsilon_*$-strictly safe for some $\epsilon_*>0$.
\end{assumption}
Intuitively, Assumption \ref{ass: K strictly feasible} requires  the sets $\X$ and $\U$ to  have non-empty interiors and that the disturbance set $\W$ is  small enough so that a disturbed linear system $x_{t+1}=(A-BK_*)x_t +w_t$ stays in the interiors of $\X$ and $\U$ for any $\{w_k\in\W\}_{k=0}^T$. In addition, Assumption \ref{ass: K strictly feasible}  implicitly assumes that 0 belongs to  the interiors of $\X$ and $\U$ since we let $x_0=0$. Finally, 
though it is challenging to verify Assumption \ref{ass: K strictly feasible}  directly, there are numerical verification methods, e.g. by solving  linear matrix inequalities (LMI) programs \cite{limon2010robust}.\footnote{\cite{limon2010robust} provides an LMI program to compute a near-optimal linear controller for a time-invariant constrained control problem, which can be used to verify the existence of a safe solution. To verify Assumption 3, one could  run the LMI program   with the constraints tightened by $\epsilon$ and continue to reduce $\epsilon$ if no solution is found until $\epsilon$ is smaller than a certain threshold.}

%% ============================================
%% ============== Section 3 ===================
%% ============================================

\section{Preliminaries}\label{sec: prelim}
This section briefly reviews the unconstrained online optimal control and robust  constrained optimization literature,  techniques from which motivate our algorithm design.

\subsection{Unconstrained Online Optimal Control}%\label{subsec: unconstrained online control}
In our setting, if one considers $\X=\R^n$ and $\U=\R^m$, then the problem reduces to an unconstrained online optimal control. For such unconstrained online control problems, \citet{agarwal2019logarithmic,agarwal2019online} propose a disturbance-action policy class to design an online policy.

\begin{definition}[Disturbance-Action Policy \cite{agarwal2019logarithmic}]\label{def: disturbance action}
Fix an arbitrary $(\kappa,\gamma)$-strongly stable matrix $\Kb$ a priori. 
	Given an $H\in\left\{1,2,\ldots,T\right\}$,  a disturbance-action policy  defines the control policy as:
	\begin{equation} 
	u_t=-\Kb x_t + \sum_{i=1}^H M^{[i]}w_{t-i}, \quad \ \forall\, t\geq 0, \label{eq:Dis_act_policy}
	\end{equation}	where, $M^{[i]}\in\R^{m \times n}$ and $w_t=0$ for $t\leq 0$. Let $\bm M=\{M^{[i]}\}_{i=1}^H$ denote the list of parameter matrices for the disturbance-action policy.\footnote{The disturbance-action policy is mainly useful for non-zero disturbances. Nevertheless, our theoretical results do not require $w_t\not =0$ because for no-disturbance systems, any strongly stable controller $u_t=-\mathbb K x_t$ will only result in a constant $O(1)$ regret.}
\end{definition}
For the rest of the paper, we will fix $\mathbb K$ and discuss how to choose parameter $\bm M$. 
 In \citet{agarwal2019logarithmic},  a bounded convex constraint set  on policy  $\bm M$ is introduced for technical simplicity and without loss of generality:\footnote{This is without loss of generality because \cite{agarwal2019online} shows that any $(\kappa, \gamma)$-strongly stable linear policy can be approximated by a disturbance-action policy in  $\M_2$.}
\begin{align}\label{equ: M_2 def}
    \M_2 \!=\!\{ \bm M\!=\!\{M^{[i]}\}_{i=1}^H\!:  \! \|M^{[i]}\|_2\! \leq \!\kappa^3\kappa_B (1\!-\!\gamma)^i,  \forall i\}
\end{align}

The next proposition introduces state and action  approximations  when implementing   disturbance-action policies.
\begin{proposition}[\cite{agarwal2019online}]
    \label{prop: approx}
	When implementing a disturbance-action policy \eqref{eq:Dis_act_policy}  with time-varying $\bm M_t=\{M_t^{[i]}\}_{i=1}^H$ at each stage $t\geq 0$, the states and actions satisfy:
	\begin{align}\label{equ: approx state and action}
	&	x_t = A_{\Kb}^H x_{t-H}+ \tilde x_t \ \ \text{and } \  	u_t = -\Kb A_{\Kb}^H x_{t-H}+ \tilde u_t, 
	\end{align}
	where  $\AK=A-B\Kb$. The approximate/surrogate state and  action, $\tilde x_t$ and $\tilde u_t$, are defined as:
	\begin{align*}
	& \tilde x_t=\sum_{k=1}^{2H} \Phi_{k}^x(\bm M_{t-H:t-1})w_{t-k},\\
	&	 \tilde u_t=-\Kb \tilde x_t + \sum_{i=1}^H M_t^{[i]}w_{t-i}=\sum_{k=1}^{2H} \Phi_{k}^u(\bm M_{t-H:t})w_{t-k},\\
	&\Phi_{k}^x(\bm M_{t-H:t-1})\!=\!	\AK^{k\!-\!1}\one_{(k\!\leq\! H)}\!+\! \sum_{i=1}^H \!\AK^{i\!-\!1}B M_{t\!-\!i}^{[k\!-\!i]}\one_{(1\!\leq \!k\!-\!i\!\leq\! H)} \\
	& 	\Phi_{k}^u(\bm M_{t-H:t})=	M_t^{[k]}\one_{(k\leq H)}-\Kb\Phi_{k}^x(\bm M_{t-H:t-1}),
	\end{align*}
	where $\bm M_{t-H:t}\coloneqq \{\bm M_{t-H}, \dots, \bm M_t\}$,  the   superscript $k$ in $\AK^k$ denotes the $k$th power of  $\AK$,  and  $M_t^{[k]}$ with superscript $[k]$ denotes the $k$th matrix in  list  $\bm M_t$.  Further, define  $\mathring \Phi_k^x(\bm M)= \Phi_{k}^x(\bm M, \dots, \bm M),\,\mathring \Phi_k^u(\bm M)= \Phi_{k}^u(\bm M, \dots, \bm M)$. 

\end{proposition}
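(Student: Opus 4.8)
The plan is to unroll the closed-loop recursion over a window of length $H$ and then regroup the resulting terms according to which past disturbance each of them multiplies. First I would substitute the disturbance-action policy \eqref{eq:Dis_act_policy} into the dynamics \eqref{equ: system}, which together with $\AK = A - B\Kb$ gives the closed-loop update $x_{s+1} = \AK x_s + w_s + B\sum_{i=1}^H M_s^{[i]} w_{s-i}$. Iterating this recursion from stage $t-H$ up to stage $t$ yields
\[
x_t = \AK^H x_{t-H} + \sum_{s=t-H}^{t-1} \AK^{t-1-s}\Big(w_s + B\sum_{i=1}^H M_s^{[i]} w_{s-i}\Big),
\]
which immediately exhibits the leading term $\AK^H x_{t-H}$ and identifies the remaining sum as a candidate for the surrogate state $\tilde x_t$.

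The key step is to rewrite this remainder as $\sum_{k=1}^{2H}\Phi_k^x(\bm M_{t-H:t-1}) w_{t-k}$ by a change of summation index, so I would process its two pieces separately. For the direct terms $\AK^{t-1-s} w_s$, substituting $k = t-s$ turns the exponent into $k-1$ and restricts $k$ to $1 \le k \le H$, producing the contribution $\AK^{k-1}\one_{(k\le H)}$. For the policy terms $\AK^{t-1-s} B\, M_s^{[i]} w_{s-i}$, I would set $k = (t-s)+i$ so that the affected disturbance becomes $w_{t-k}$; relabeling the outer index so that $t-s$ plays the role of the exponent index $i$ turns the coefficient into $\AK^{i-1} B\, M_{t-i}^{[k-i]}$, which is active precisely when $1 \le i \le H$ and $1 \le k-i \le H$. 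Collecting both pieces by the value of $k$, whose support is now exactly $\{1,\dots,2H\}$, reproduces the stated formula for $\Phi_k^x$ and hence the surrogate-state identity.

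For the action I would substitute the state decomposition directly into the policy: writing $u_t = -\Kb x_t + \sum_{i=1}^H M_t^{[i]} w_{t-i}$ and plugging in $x_t = \AK^H x_{t-H} + \tilde x_t$ splits off the leading term $-\Kb\AK^H x_{t-H}$ and leaves $\tilde u_t = -\Kb \tilde x_t + \sum_{i=1}^H M_t^{[i]} w_{t-i}$. Expanding $\tilde x_t$ in the $\Phi_k^x$ coefficients and matching $\sum_{i=1}^H M_t^{[i]} w_{t-i} = \sum_{k=1}^H M_t^{[k]}\one_{(k\le H)} w_{t-k}$ then yields $\Phi_k^u = M_t^{[k]}\one_{(k\le H)} - \Kb\Phi_k^x$ coefficient by coefficient. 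The only real obstacle is the bookkeeping in the index substitution for $\Phi_k^x$: one must track two coupled summation indices, confirm that the support of $k$ is exactly $\{1,\dots,2H\}$, and verify that the indicator constraints $1 \le i \le H$ and $1 \le k-i \le H$ emerge precisely from the original summation ranges. Everything else is a direct computation once the recursion is unrolled.
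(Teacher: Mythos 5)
Your proof is correct: the paper itself does not prove Proposition \ref{prop: approx} (it imports it from \citet{agarwal2019online}), and your unrolling-and-reindexing argument is exactly the standard derivation — the closed-loop recursion $x_{s+1}=\AK x_s + w_s + B\sum_{i=1}^H M_s^{[i]}w_{s-i}$, the substitutions $k=t-s$ for the direct terms and $k=(t-s)+i$ for the policy terms, and the resulting supports $\{1,\dots,H\}$ and $\{2,\dots,2H\}$ all check out and reproduce $\Phi_k^x$ and $\Phi_k^u$ exactly, with the action identity following by direct substitution of $x_t=\AK^H x_{t-H}+\tilde x_t$ into the policy. The only cosmetic omission is invoking the convention $w_s=0$ for $s\le 0$ (together with $x_0=0$), which is what makes the unrolled identity valid also for $t<H$.
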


Notice that $\tilde x_t$ and $\tilde u_t$ are affine functions of $\bm M_{t-H:t}$. Based on $\tilde x_t$ and $\tilde u_t$, \citet{agarwal2019logarithmic} introduces an approximate  cost function:
$$ f_t(\bm M_{t-H:t})= \E[c_t(\tilde x_t, \tilde u_t)],$$
which is convex with respect to $\bm M_{t-H:t}$ since $\tilde x_t$ and $\tilde u_t$ are affine functions of $\bm M_{t-H:t}$ and $c_t(\cdot, \cdot)$ is convex. 

%\vspace{4pt}

\begin{remark}
The disturbance-action policy is   related to  \textit{affine disturbance feedback } in  stochastic MPC \cite{oldewurtel2008tractable,mesbah2016stochastic}, which also considers policies that are linear with disturbances  to convexify the control problem in  MPC's lookahead horizon.

\end{remark}

\nbf{OCO with memory.} 
  In \citet{agarwal2019logarithmic}, the unconstrained online optimal control problem is converted into  \textit{OCO with memory}, i.e. at each stage $t$, the agent selects a policy $\bm M_t\in \M_2$ and then incurs a cost $ f_t(\bm M_{t-H:t})$. Notice that the cost function at stage $t$ couples the current policy $\bm M_t$ with the $H$-stage historical policies $\bm M_{t-H:t-1}$, but the constraint set  $\M_2$ is decoupled and only depends on the current $\bm M_t$.
  
  To solve this ``OCO with memory'' problem, \citet{agarwal2019logarithmic} defines  decoupled  cost functions
  \begin{equation}\label{equ: mathring f_t(M_t)}
\mathring f_t(\bm M_t)\coloneqq f_t(\bm M_t, \dots, \bm M_t),
\end{equation}
by letting the $H$-stage historical policies be identical to the current policy. Notice that $\mathring f_t(\bm M_t)$ is still convex.  Accordingly, the OCO with memory  is  reformulated as a classical OCO problem with stage cost  $\mathring f_t(\bm M_t)$, which is solved by classical OCO algorithms such as online gradient descent (OGD) in \citet{agarwal2019logarithmic}. The stepsizes of OGD are chosen to be sufficiently small so that the variation between the current policy $\bm M_t$  and the $H$-stage historical policies $\bm M_{t-H}, \dots, \bm M_{t-1}$ is sufficiently small, which guarantees a small approximation error between $\mathring f_t(\bm M_t)$ and $f_t(\bm M_{t-H:t})$, and thus low regrets. For more details, we refer the reader to \citet{agarwal2019logarithmic}.

\subsection{Robust  Optimization with Constraints}\label{subsec: robust opt}
Consider a robust optimization problem with linear constraints \cite{ben2009robust}:
\begin{align}\label{equ: robust opt}
\min_{x} &\ f(x)\quad 
\text{s.t. }  \ a_i^\top x \leq b_i, \ \forall\, a_i \in \mathcal C_i, \ \forall\, 1\leq i \leq k,
\end{align}
where the (box) uncertainty sets are defined as $\mathcal C_i= \{a_i\!=\!\tilde a_i \!+\! P_i z\!:\! \|z\|_\infty\! \leq\! \bar z\}$ for any $i$. Notice that  
the robust constraint $
\{a_i^\top x\leq b_i, \ \forall \, a_i \in \mathcal C_ i\}$ is equivalent to the  standard constraint $ \{ \sup_{a_i \in \mathcal C_i} [a_i^\top x] \leq b_i\}$.
Further, one can derive
\begin{align}
&\sup_{a_i \in \mathcal C_i} a_i^\top x=\sup_{\|z\|_\infty \leq \bar z} (\tilde a_i+P_i z)^\top x\nonumber\\
=\ & \tilde a_i^\top x+ \sup_{\|z\|_\infty \leq \bar z} z^\top (P_i^\top x)  = \tilde a_i^\top x+ \|P_i^\top x\|_1 \bar z\label{equ: robust opt L1}
\end{align}
Therefore, the robust optimization \eqref{equ: robust opt} can be equivalently reformulated as the linearly constrained optimization below:
\begin{align*}
\min_{x} &\ f(x) \quad \text{s.t. } \ \tilde a_i^\top x+ \|P_i^\top x\|_1 \bar z\leq b_i,  \ \forall\, 1\leq i \leq k.
\end{align*}

%% ============================================
%% ============== Section 4 ===================
%% ============================================

\section{Online Algorithm Design}\label{sec: algorithm}

This section introduces our  online algorithm design for   online disturbance-action policies (Definition \ref{def: disturbance action}).

Roughly speaking, to develop our online algorithm, we first convert the constrained online optimal control   into \textit{OCO with memory and coupled constraints}, which is later converted into classical OCO and solved by OCO algorithms.  The conversions leverage the approximation and the  reformulation techniques in the \textbf{Preliminaries}.  During the conversions, we ensure that the outputs of the OCO algorithms are safe for the original control problem. This is achieved by tightening the original constraints (adding buffer zones) to allow for approximation errors. Besides, our method ensures small approximation errors and thus  small   buffer zones  so that the optimality/regret is not sacrificed significantly for  safety. The details of algorithm design are discussed below.

\subsubsection{Step 1: Constraints on  Approximate States and Actions} When applying the disturbance-action policies \eqref{eq:Dis_act_policy},  we can use \eqref{equ: approx state and action} to rewrite the state constraint  $x_{t+1}\in\X$ as
\begin{align}\label{equ: state constraint with approximate state decoupling}
& D_x \AK^H x_{t-H+1}+ D_x \tilde x_{t+1}\leq d_x, \ \  \forall\, \{w_k \in \W\}_{k=0}^T,
\end{align}
where $\tilde x_{t+1}$ is the approximate state. Note that the term $D_x \AK^H x_{t-H+1}$ 
decays exponentially with $H$. If there exists $H$ such that $D_x \AK^H x_{t-H+1}\leq \epsilon_1 \one_{k_x}$, $\forall\,\{w_k \in \W\}_{k=0}^T$, then a tightened constraint on the approximate state, i.e.
\begin{equation}\label{equ: constraint on tilde x_t}
D_x \tilde x_{t+1}\leq d_x-\epsilon_1\one_{k_x}, \quad \forall\, \{w_k \in \W\}_{k=0}^T,
\end{equation}
can guarantee the original constraint on the true state \eqref{equ: state constraint with approximate state decoupling}.

The action constraint  $u_t\in\U$ can similarly be converted into a tightened constraint on the approximate action $\tilde u_t$, i.e.
\begin{align}\label{equ: constraint on tilde u_t}
D_u \tilde u_t\leq d_u-\epsilon_1\one_{k_u}, \quad \forall\, \{w_k \in \W\}_{k=0}^T,
\end{align} 
if  $D_u (-\Kb\AK^H x_{t-H})\leq \epsilon_1\one_{k_u}$ for any  $\{w_k \in \W\}_{k=0}^T$.

\subsubsection{Step 2: Constraints on the Policy Parameters} 
Next, we reformulate the robust constraints \eqref{equ: constraint on tilde x_t} and \eqref{equ: constraint on tilde u_t} on $\tilde x_{t+1}$ and $\tilde u_t$  as polytopic constraints on policy parameters $\bm M_{t-H:t}$ based on the robust optimization  techniques reviewed in  \textbf{Robust  Optimization with Constraints}. 

Firstly, we consider the $i$th row of the constraint  \eqref{equ: constraint on tilde x_t}, i.e.
$D_{x,i}^\top  \tilde x_{t+1}\leq d_{x,i}-\epsilon_1$ $ \forall\, \{w_k \in \W\}_{k=0}^T,$
where $D_{x,i}^\top$ denotes the $i$th row of the matrix $D_x$. This constraint is equivalent to 
$ \sup_{\{w_k \in \W\}_{k=0}^T}(D_{x,i}^\top  \tilde x_{t+1})\leq d_{x,i}-\epsilon_1$. Further, by  \eqref{equ: robust opt L1} and the definitions of $\tilde x_{t+1}$ and $\W$, we obtain
\begin{align*}
 \sup_{\{w_k \!\in \!\W\}} \!D_{x,i}^\top \tilde x_{t+1}\!&=\! \sup_{\{w_k \in \W\}} D_{x,i}^\top \sum_{s=1}^{2H}\Phi^x_{s}(\bm M_{t-H+1:t})  w_{t\!+\!1\!-\!s}\\
& = \!\sum_{s=1}^{2H} \!\sup_{w_{t\!+\!1\!-\!s}\in\W}\! D_{x,i}^\top \Phi^x_{s}(\bm M_{t-H+1:t}) w_{t\!+\!1\!-\!s}\\
& = \sum_{s=1}^{2H}  \|D_{x,i}^\top \Phi^x_{s}(\bm M_{t-H+1:t} )\|_1 \bar w
\end{align*}
Define $g_i^x(\bm M_{t-H+1:t})= \sum_{s=1}^{2H}  \|D_{x,i}^\top \Phi^x_{s}(\bm M_{t-H+1:t} )\|_1 \bar w.$ Hence, the robust constraint \eqref{equ: constraint on tilde x_t} on $\tilde x_{t+1}$ is equivalent to the following polytopic constraints on $\bm M_{t-H+1:t}$:
\begin{align}\label{equ: temporal constraints x}
g_i^x(\bm M_{t-H+1:t}) \leq d_{x,i}-\epsilon_1, \quad \forall \, 1\leq i \leq k_x.
\end{align}

Similarly, the constraint \eqref{equ: constraint on tilde u_t} on $\tilde u_t$ is equivalent to:
\begin{align}\label{equ: temporal constraints u}
g_j^u(\bm M_{t-H:t}) \leq d_{u,j}-\epsilon_1, \quad \forall \, 1\leq j \leq k_u,
\end{align}
where  $g_j^u(\bm M_{t-H:t})= \sum_{s=1}^{2H}  \|D_{u,j}^\top \Phi^u_{s}(\bm M_{t-H:t} )\|_1 \bar w.$

\subsubsection{Step 3: OCO with Memory and Temporal-coupled Constraints} By Step 2 and our review of robust optimization, we can convert the constrained online optimal control problem into \textit{OCO with memory and coupled constraints}. That is, at each  $t$, the decision maker selects a policy $\bm M_t$ satisfying constraints \eqref{equ: temporal constraints x} and \eqref{equ: temporal constraints u}, and then incurs a cost $f_t(\bm M_{t-H:t})$. In our framework, both the constraints \eqref{equ: temporal constraints x}, \eqref{equ: temporal constraints u} and the cost function $f_t(\bm M_{t-H:t})$ couple the current policy with the historical policies. This makes the problem far more challenging than OCO with memory which only considers coupled costs \cite{anava2015online}. 

\subsubsection{Step 4: Benefits of the Slow Variation of  Online Policies} 
We approximate the coupled constraint functions $g^x_i(\bm M_{t-H+1:t})$ and $g^u_j(\bm M_{t-H:t})$ as   decoupled ones below,
\begin{align*}
&\mathring g^x_i(\bm M_t)\!=\!g^x_i(\bm M_{t}, \dots, \bm M_t), \mathring g^u_i(\bm M_t)\!=\!g^u_i(\bm M_{t}, \dots, \bm M_t),
\end{align*}
by letting the historical policies $\bm M_{t-H:t-1}$  be identical to the current  $\bm M_t$.\footnote{Though we consider $\bm M_t=\dots=\bm M_{t-H}$ here, the component $M_t^{[i]}$ of $\bm M_t=\{M_t^{[i]}\}_{i=1}^H$ can be  different for different $i$.}
If the online policy $\bm M_t$ varies slowly with $t$, which is  satisfied by most OCO algorithms (e.g.  OGD with a diminishing stepsize \cite{hazan2019introduction}), one may be able to bound the approximation errors by
 $g^x_i(\bm M_{t-H+1:t})- \mathring g^x_i(\bm M_t) \leq \epsilon_2$  and $g^u_j(\bm M_{t-H:t})- \mathring g^u_j(\bm M_t) \leq \epsilon_2$  for a small $\epsilon_2>0$. Consequently,  the  constraints \eqref{equ: temporal constraints x} and \eqref{equ: temporal constraints u}
are ensured by the  polytopic constraints that only depend on $\bm M_t$:
\begin{align}
&\mathring g^x_i(\bm M_t)\leq \!d_{x,i}\!-\epsilon_1-\epsilon_2,    \ \mathring g^u_j(\bm M_t)\leq \!d_{u,j}\!-\epsilon_1-\epsilon_2,
\end{align}
where the buffer zone $\epsilon_2$  allows for the approximation error caused by neglecting the  variation of the online policies.

\subsubsection{Step 5: Conversion to  OCO}
By Step  4, we  define a  decoupled search space/constraint set on each policy below,
\begin{equation}\label{def: Omega_epsilon}
	\begin{aligned} 
\Omega_\epsilon\!= \!\{ \bm M\in \M:&\ \mathring g^x_i(\bm M) \leq d_{x,i}-\epsilon, \forall 1\leq i\leq k_x,\\
&\mathring g^u_j(\bm M)\leq d_{u,j}-\epsilon,  \forall\,  1\leq j \leq k_u \},
\end{aligned}
\end{equation}
where $\M$ is a bounded convex constraint set  defined as 
$$\M =\{ \bm M: \ \|M^{[i]}\|_\infty \leq 2\sqrt n \kappa^3 (1-\gamma)^{i-1}, \ \forall \, 1\leq i \leq H\}.$$
Our set $\M$ is slightly different from $\M_2$ in \eqref{equ: M_2 def} to ensure that   $\Omega_\epsilon$ is a polytope.\footnote{Compared with  $\M_2$,  our $\M$ uses the $L_\infty$ norm;  the  $\sqrt n$ factor accounts for the change of  norms; and  $\kappa_B$ disappears because we can   prove that  $\kappa_B$ is not necessary here (see \cite{li2020online}).} 
Notice that $\Omega_\epsilon$ provides buffer zones with size $\epsilon$ to account for the  approximation errors $\epsilon_1$ and   $\epsilon_2$.  
Based on $\Omega_\epsilon$ and  technique \eqref{equ: mathring f_t(M_t)}, we can further convert the ``OCO with memory and coupled constraints'' in Step 3 into a classical OCO problem below. That is,  at each  $t$, the agent selects a policy $\bm M_t\in \Omega_\epsilon$, and then suffers a convex stage cost $\mathring f_t(\bm M_t)$  defined in \eqref{equ: mathring f_t(M_t)}.  We apply online gradient descent to solve this OCO problem, as described in Algorithm \ref{alg:ogd}. We select the stepsizes of OGD  to be  small enough to ensure small approximation errors from Step 4 and thus small buffer zones,  but also  to be large enough to allow online policies to adapt to time-varying environments. Conditions for suitable stepsizes are discussed in \textbf{Theoretical Results}.

\begin{algorithm2e}
	\caption{OGD-BZ}
	\label{alg:ogd}
	\SetAlgoNoLine
	\DontPrintSemicolon
	\LinesNumbered
	\KwIn{A $(\kappa,\gamma)$-strongly stable  matrix $\mathbb K$, parameter  $H>0$,  buffer size $\epsilon$,   stepsize $\eta_t$. }
	
	Determine the polytopic constraint set  $\Omega_\epsilon$ by \eqref{def: Omega_epsilon} with buffer size $\epsilon$ and	 initialize $\bm M_0 \in \Omega_\epsilon$.\;

	\For{$t=0,  1,2,\ldots,T$}{
		Implement action $	u_t=-\Kb x_t \!+\! \sum_{i=1}^H M_t^{[i]}w_{t-i}.
		$\;
		
		Observe the next state $x_{t+1}$ and record $w_t=x_{t+1}-A x_t-B u_t$.\;
		
		Run projected OGD 
		$$ \bm M_{t+1}=\Pi_{\Omega_\epsilon}\left[\bm M_t-\eta_t \nabla \mathring f_t(\bm M_t)\right]$$
		where $\mathring f_t(\bm M)$ is defined in \eqref{equ: mathring f_t(M_t)}.}
	
\end{algorithm2e}

In Algorithm \ref{alg:ogd}, the most computationally demanding step at each stage is the projection onto the polytope $\Omega_\epsilon$, which requires solving a  quadratic program. Nevertheless, one can reduce the online computational burden via offline computation by leveraging the solution structure of quadratic programs (see \cite{alessio2009survey} for more details).  

Lastly, we  note that other OCO algorithms can be applied to solve this problem too, e.g. online natural gradient, online mirror descent, etc. One can also apply projection-free methods, e.g. \cite{yuan2018online}, to reduce the computational burden at the expense of $o(T)$ constraint violation.

\begin{remark}
	To ensure safety,   safe RL literature usually constructs a safe set for the state \cite{fisac2018general}, while this paper constructs a safe search space $\Omega_\epsilon$ for the policies directly.    Besides, safe RL literature may employ unsafe policies occasionally, for example, \citet{fisac2018general} allows unsafe exploration policies within the safe set and changes to a safe policy on the boundary of the safe set. However,  our search space $\Omega_\epsilon$ only contains safe policies. Despite a smaller policy search space, our OGD-BZ still achieves desirable (theoretical) performance. Nevertheless, when the system is unknown, larger sets of exploration policies may benefit the  performance, which is left as future work.

\end{remark}

\begin{remark}
    It is  worth comparing our method with a well-known robust MPC method: tube-based robust MPC (see e.g. \citet{rawlings2009model}). Tube-based robust MPC also tightens the constraints to allow for model inaccuracy and/or disturbances. However, tube-based robust MPC considers constraints on the states, while our method converts the state (and action) constraints into the constraints on the  policy parameters by leveraging the properties of  disturbance-action policies.
\end{remark}

%% ============================================
%% ============== Section 5 ===================
%% ============================================

\section{Theoretical 
Results}\label{sec: theory}
In this section, we show  that OGD-BZ guarantees both safety and   $\tilde O(\sqrt T)$ policy regret under proper   parameters.

\subsubsection{Preparation} 
To establish the conditions on the parameters for our theoretical results, we  introduce three quantities $\epsilon_1(H), \epsilon_2(\eta, H), \epsilon_3(H)$ below.  We note that $\epsilon_1(H)$ and $\epsilon_2(\eta, H)$ bound the approximation errors in Step 1 and Step 4 of the previous section respectively (see Lemma \ref{lem: bdd xt-H part} and Lemma \ref{lem: g(Mt:t-H) to g(Mt) bdd error} in the proof of Theorem \ref{thm: feasibility} for more details). $\epsilon_3(H)$ bounds the  constraint violation of the disturbance-action policy  $\bm M(K)$, where $\bm M(K)$ approximates the linear controller $u_t=-K x_t$ for any $K\in \K$ (see Lemma \ref{lem: define M_ap and epsilon3} in the proof of Theorem \ref{thm: feasibility} for more details). 
\begin{definition}\label{def: epsilon def}
	We define 
	\begin{align*}
	\epsilon_1(H)&=c_1 n\sqrt m H (1-\gamma)^H,	\epsilon_2(\eta, H)= c_2\eta \cdot n^2 \sqrt m H^2\\
	\epsilon_3(H)&=c_3\sqrt n (1-\gamma)^H
	\end{align*}
	where $c_1$, $c_2$, and $c_3$  are polynomials of   $\|D_x\|_\infty, \|D_u\|_\infty$, $ \kappa, \kappa_B, \gamma^{-1}, \bar w, G$.
\end{definition}

\subsubsection{Safety of OGD-BZ}

\begin{theorem}[Feasibility \& Safety]\label{thm: feasibility}
	
	Consider  constant stepsize $\eta_t=\eta$, $\epsilon\geq 0,H\geq \frac{\log(2\kappa^2)}{\log((1-\gamma)^{-1})}$. If 
	the buffer size $\epsilon$ and $H$ satisfy
	$$\epsilon\leq \epsilon_*-\epsilon_1(H)-\epsilon_3(H),$$
	the set $\Omega_\epsilon$ is non-empty. 
	Further,  if $\eta$, $\epsilon$ and $H$ also satisfy
	$$	\epsilon\geq \epsilon_1(H)+\epsilon_2(\eta, H),$$
	our OGD-BZ is safe, i.e. $x^{\textup{OGD-BZ}}_t \in \X$ and $u_t^{\textup{OGD-BZ}} \in \U$ for all $t$ and for any  disturbances $\{w_k\in \W\}_{k=0}^T$. 
\end{theorem}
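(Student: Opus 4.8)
The plan is to prove the two assertions in sequence: under the first condition I will show that the search space $\Omega_\epsilon$ is non-empty by exhibiting an explicit member, and under the additional second condition I will verify that every true state and action produced by OGD-BZ satisfies \eqref{equ: sec1 affine constraints} for all admissible disturbances.

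For non-emptiness I would take the candidate point to be the disturbance-action policy $\bm M(K_*)$ that approximates the strictly safe linear controller $K_*$ guaranteed by Assumption~3. Invoking the lemma that constructs $\bm M(K_*)$ and bounds its constraint violation by $\epsilon_3(H)$, I first confirm $\bm M(K_*)\in\M$ and then estimate its decoupled constraint functions. The key chain is $\mathring g^x_i(\bm M(K_*))\le D_{x,i}^\top x_t^{K_*}+\epsilon_1(H)+\epsilon_3(H)$, where $x_t^{K_*}$ is the true closed loop under $K_*$: the $\epsilon_1(H)$ term absorbs the exponentially decaying tail $D_x A_{\Kb}^H x_{t-H+1}$ and the $\epsilon_3(H)$ term absorbs the gap between $\bm M(K_*)$ and the exact feedback $-K_* x$. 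Since $K_*$ is $\epsilon_*$-strictly safe, $D_{x,i}^\top x_t^{K_*}\le d_{x,i}-\epsilon_*$, so $\mathring g^x_i(\bm M(K_*))\le d_{x,i}-\epsilon_*+\epsilon_1(H)+\epsilon_3(H)\le d_{x,i}-\epsilon$ precisely when $\epsilon\le\epsilon_*-\epsilon_1(H)-\epsilon_3(H)$; the identical estimate for $\mathring g^u_j$ gives $\bm M(K_*)\in\Omega_\epsilon$.

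For safety the crux is to reassemble the decomposition \eqref{equ: state constraint with approximate state decoupling}, which splits each state constraint into the tail $D_x A_{\Kb}^H x_{t-H+1}$ and the approximate part $D_x\tilde x_{t+1}$. I would first derive a \emph{uniform} bound on $\|x_t\|$ that does not presuppose safety: because every implemented policy is a disturbance-action policy with $\bm M_t\in\M$ and $\|w_t\|_\infty\le\bar w$, the closed loop satisfies $x_{t+1}=A_{\Kb}x_t+(\text{uniformly bounded terms})$, and the strong stability of $A_{\Kb}$ together with $H\ge\log(2\kappa^2)/\log((1-\gamma)^{-1})$ (so $\kappa^2(1-\gamma)^H\le\tfrac12$) yields $D_x A_{\Kb}^H x_{t-H+1}\le\epsilon_1(H)\one_{k_x}$ — this is the content of the lemma bounding the $x_{t-H}$ term, and it breaks the apparent circularity since we never assume $x_t\in\X$ in order to prove it. Next, $\bm M_t\in\Omega_\epsilon$ gives $\mathring g^x_i(\bm M_t)\le d_{x,i}-\epsilon$; the slow-variation lemma bounds $g^x_i(\bm M_{t-H+1:t})-\mathring g^x_i(\bm M_t)\le\epsilon_2(\eta,H)$, so with $\epsilon\ge\epsilon_1(H)+\epsilon_2(\eta,H)$ I obtain $g^x_i(\bm M_{t-H+1:t})\le d_{x,i}-\epsilon_1(H)$. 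By the robust reformulation of Step~2 this is exactly $\sup_{\{w_k\in\W\}}D_{x,i}^\top\tilde x_{t+1}\le d_{x,i}-\epsilon_1(H)$, and adding back the tail bound recovers $D_{x,i}^\top x_{t+1}\le d_{x,i}$ for every $i$ and every admissible disturbance sequence. The same argument applied to $g^u_j$ and the action tail $D_u(-\Kb A_{\Kb}^H x_{t-H})$ handles the action constraints.

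I expect the main obstacle to be the slow-variation estimate $\epsilon_2(\eta,H)$. It requires (i) telescoping $\|\bm M_t-\bm M_{t-s}\|\le\sum_{j}\eta\,\|\nabla\mathring f_j(\bm M_j)\|$ for $s\le H$, using Assumption~2 to bound the gradients uniformly over $\M$ at the bounded states, and (ii) a Lipschitz estimate of $g^x_i$ and $g^u_j$ in the concatenated argument $\bm M_{t-H+1:t}$, with constant controlled by $\|D_x\|_\infty,\|D_u\|_\infty,\bar w$ and the geometric weights $\|A_{\Kb}^{k}\|$. The product of the $O(H)$-step telescoping bound and the $O(H)$-term summation $\sum_{s=1}^{2H}$ defining $g$ is what yields the $H^2$ factor in $\epsilon_2(\eta,H)=c_2\eta\,n^2\sqrt m\,H^2$; keeping the norm bookkeeping consistent and applying the gradient bound only at the uniformly bounded states is the most delicate part of the argument.
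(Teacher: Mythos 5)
Your proposal is correct and takes essentially the same approach as the paper's proof: non-emptiness is certified by the same witness $\bm M(K_*)$ via the loose-safety bound $\epsilon_3(H)$ (Lemma \ref{lem: define M_ap and epsilon3} and Corollary \ref{cor: K's M in Omega}), and safety follows from the same decomposition---the exponentially decaying tail bounded by $\epsilon_1(H)$ (Lemma \ref{lem: bdd xt-H part}, which indeed rests on the non-circular state bound of Lemma \ref{lem: bdd on xt}), the coupled-versus-decoupled constraint gap bounded by $\epsilon_2(\eta,H)$ (Lemma \ref{lem: g(Mt:t-H) to g(Mt) bdd error}), and $\bm M_t\in\Omega_\epsilon$ from the projection step. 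Your concluding account of how $\epsilon_2(\eta,H)$ arises (OGD telescoping of $\|\bm M_t-\bm M_{t-s}\|_F$ combined with a Lipschitz estimate of $g^x_i$ and $g^u_j$) also matches the paper's proof of that lemma.
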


\nbf{Discussions:} Firstly, Theorem \ref{thm: feasibility} shows that $\epsilon$ should be small enough to ensure a nonempty $\Omega_\epsilon$ and thus valid/feasible outputs of OGD-BZ. This  is intuitive since the constraints are more conservative as $\epsilon$ increases. Since $\epsilon_1(H)+\epsilon_3(H)=\Theta(H(1-\gamma)^H)$ decays with $H$ by Definition \ref{def: epsilon def}, the first condition also implies a large enough $H$.

Secondly, Theorem \ref{thm: feasibility} shows that, to ensure safety, the buffer size $\epsilon$ should also be large enough to allow for the total approximation errors $\epsilon_1(H)+\epsilon_2(\eta, H)$, which is consistent with our discussion in the previous section. To ensure the compatibility of the two conditions on $\epsilon$, the approximation errors $\epsilon_1(H)+\epsilon_2(\eta, H)$ should be small enough, which  requires a large enough $H$ and a small enough $\eta$ by Definition \ref{def: epsilon def}. 

In conclusion, the safety requires a large enough $H$, a small  enough $\eta$, and an $\epsilon$ which is neither too large nor too small. For example, we can select $\eta\leq \frac{\epsilon_*}{8 c_2 n^2 \sqrt m H^2}$,   $\epsilon_*/4 \leq \epsilon\leq 3\epsilon_*/4$, and $H\geq \max(\frac{\log(\frac{8(c_1+c_3)n\sqrt m}{\epsilon_*}T)}{\log((1-\gamma)^{-1}},  \frac{\log(2\kappa^2)}{\log((1-\gamma)^{-1})})$.

\begin{remark}\label{remark: infinite horizon safe}
It can be shown that it is safe to implement any $\bm M\in \Omega_{\epsilon}$ for an infinite horizon ($0\leq t \leq +\infty$) under the conditions of Theorem \ref{thm: feasibility} based on the proof of Theorem \ref{thm: feasibility}. For more details, please refer to the end of the proof of Theorem \ref{thm: feasibility}.
   % From the proof of Theorem \ref{thm: feasibility}, i
\end{remark}

% Besides, we would like to mention that the safety of OGD-BZ is guaranteed for an infinite horizon ($0\leq t <+\infty$) thanks to the robust invariance property  below.
% \begin{proposition}\label{prop: robust invariant set}
% Define set $\mathcal I=\{x\in \mathcal X: \|x\|_2\leq b\}$, where $b$ is  defined in Lemma \ref{lem: bdd on xt}. Under the conditions in Theorem \ref{thm: feasibility}, if $x_{t-H}^{\textup{OGD-BZ}}\in \mathcal I$, then $x_t^{\textup{OGD-BZ}}\in \mathcal I,u_t^{\textup{OGD-BZ}}\in \mathcal U$. 
% \end{proposition}
% Hence, if the first $H$ stages satisfy  $x_{0}^{\textup{OGD-BZ}}, \dots,$ $x_{H-1}^{\textup{OGD-BZ}}\in \mathcal I$, then $x_t^{\textup{OGD-BZ}}\in \mathcal I,  u_t^{\textup{OGD-BZ}}\in \mathcal U$ for all $t\geq H$.

% A special case for Proposition \ref{prop: robust invariant set} is when $\eta=0$ and let $\bm M_0$ be any controller in $\Omega_\epsilon$, which  shows that   any policy in $ \Omega_\epsilon$ is safe to implement for all $t$.

\subsubsection{Policy Regret Bound for OGD-BZ}

\begin{theorem}[Regret Bound]\label{thm: regret bdd} 
	Under the conditions in Theorem \ref{thm: feasibility}, OGD-BZ  enjoys the regret bound below:
\begin{align*}
 \textup{Reg}(\textup{OGD-BZ})\leq  \,&  O\!\left(n^3 m H^3\eta T+\frac{mn}{\eta}\right.\\
&+ \!(1\!-\!\gamma)^H\! H^{2.5}T ({n^{ {3}}m^{ {1.5}}}\!+\!\sqrt{k_c} m n^{2.5})/\epsilon_*\\
&+\left.\epsilon T H^{1.5}({n^{ {2}}m}+\sqrt{k_cm n^3})/\epsilon_*\right),
\end{align*}
	where  the hidden constant depends polynomially on $\kappa, \kappa_B,$ $ \gamma^{-1},  {\|D_x\|_\infty, \|D_u\|_\infty,} \|d_x\|_2, \|d_u\|_2, \bar w, G$.

\end{theorem}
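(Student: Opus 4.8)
The plan is to reduce the constrained control problem to a classical OCO regret bound in the policy variable $\bm M$, following the template of \citet{agarwal2019logarithmic,agarwal2019online}, while tracking the extra cost charged by the buffer zone $\epsilon$. Denote the OGD-BZ iterates by $\bm M_0,\bm M_1,\dots$, let $K^*=\argmin_{K\in\K}J_T(K)$ be the best safe linear comparator, and fix a comparator policy $\bm M^\star\in\Omega_\epsilon$ to be built in the last step. Inserting the surrogate costs $f_t(\bm M_{t-H:t})=\E[c_t(\tilde x_t,\tilde u_t)]$ and $\mathring f_t(\bm M_t)$, I would split the regret as
\begin{align*}
\textup{Reg}(\textup{OGD-BZ})
&=\Big[J_T(\textup{OGD-BZ})-\textstyle\sum_t f_t(\bm M_{t-H:t})\Big]
+\Big[\textstyle\sum_t f_t(\bm M_{t-H:t})-\textstyle\sum_t\mathring f_t(\bm M_t)\Big]\\
&\quad+\Big[\textstyle\sum_t\mathring f_t(\bm M_t)-\textstyle\sum_t\mathring f_t(\bm M^\star)\Big]
+\Big[\textstyle\sum_t\mathring f_t(\bm M^\star)-J_T(K^*)\Big],
\end{align*}
calling these terms $(a)$--$(d)$ and bounding each separately.

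For $(a)$, Proposition \ref{prop: approx} gives $x_t=\AK^H x_{t-H}+\tilde x_t$ and $u_t=-\Kb\AK^H x_{t-H}+\tilde u_t$; since strong stability yields $\|\AK^H\|_2\le\kappa^2(1-\gamma)^H$ and the trajectories are uniformly bounded (using $\|w_t\|_\infty\le\bar w$), the displacement between $(x_t,u_t)$ and $(\tilde x_t,\tilde u_t)$ decays like $(1-\gamma)^H$. Combined with the local Lipschitz property from Assumption \ref{ass: bounded Hessian largest evalue}, this bounds $(a)$ by a $\Theta\!\big((1-\gamma)^H\,\mathrm{poly}(n,m,H)\big)T$ term, of the $\epsilon_1(H)$ flavor of Definition \ref{def: epsilon def}. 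For $(b)$, the maps $\tilde x_t,\tilde u_t$ are affine in $\bm M_{t-H:t}$ with geometrically decaying coefficients, so $f_t$ is Lipschitz in its block arguments; the OGD update forces slow variation $\|\bm M_t-\bm M_{t-i}\|\le\eta\sum_{s}\|\nabla\mathring f_{t-s}\|$, giving a $\Theta(\eta\,\mathrm{poly}(n,m,H))T$ bound of the $\epsilon_2(\eta,H)$ flavor that feeds the $n^3mH^3\eta T$ contribution. Term $(c)$ is the classical projected-OGD regret on the convex set $\Omega_\epsilon\subseteq\M$, controlled by $\tfrac{D^2}{2\eta}+\tfrac{\eta}{2}\sum_t\|\nabla\mathring f_t(\bm M_t)\|_2^2$; the diameter $D$ of $\M$ produces the $mn/\eta$ term and the gradient bound of $\mathring f_t$ the remaining $n^3mH^3\eta T$ term.

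The main obstacle is term $(d)$, the comparator gap, and this is where the $1/\epsilon_*$ factors arise. I would first approximate $K^*$ by its truncated disturbance-action policy $\bm M(K^*)\in\M$, so that $\sum_t\mathring f_t(\bm M(K^*))\le J_T(K^*)+\Theta\!\big((1-\gamma)^H\,\mathrm{poly}(n,m,H)\big)T$ while $\bm M(K^*)$ satisfies the \emph{untightened} constraints up to the violation $\epsilon_3(H)$ (the estimate behind $\epsilon_3$ in the proof of Theorem \ref{thm: feasibility}). Because $\bm M(K^*)$ need not lie in the tightened set $\Omega_\epsilon$, I would pull it inward using the $\epsilon_*$-strictly safe policy of Assumption \ref{ass: K strictly feasible}: let $\bm M(K_*)$ be its disturbance-action approximation, which (by the first condition of Theorem \ref{thm: feasibility}) lies strictly inside with margin at least $\epsilon_*-\epsilon_1(H)-\epsilon_3(H)$, and set
\[
\bm M^\star=(1-\lambda)\,\bm M(K^*)+\lambda\,\bm M(K_*),\qquad
\lambda=\Theta\!\Big(\tfrac{\epsilon+\epsilon_1(H)+\epsilon_3(H)}{\epsilon_*}\Big).
\]
Convexity of $\mathring g^x_i,\mathring g^u_j$ then forces $\mathring g^x_i(\bm M^\star)\le d_{x,i}-\epsilon$ (and likewise for $u$), so $\bm M^\star\in\Omega_\epsilon$; admissibility $\lambda\le1$ follows from the same condition. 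Finally, convexity of $\mathring f_t$ gives $\sum_t\mathring f_t(\bm M^\star)-J_T(K^*)\le\big[\text{cost approx.\ of }\bm M(K^*)\big]+\lambda\sum_t\big[\mathring f_t(\bm M(K_*))-\mathring f_t(\bm M(K^*))\big]$; bounding the surrogate costs by $\Theta(\mathrm{poly}(n,m,H))$ and substituting the value of $\lambda$ yields exactly the two $1/\epsilon_*$ terms --- the $\epsilon/\epsilon_*$ part giving $\epsilon T H^{1.5}(\cdots)/\epsilon_*$ and the $\big(\epsilon_1(H)+\epsilon_3(H)\big)/\epsilon_*=\Theta((1-\gamma)^H)/\epsilon_*$ part giving $(1-\gamma)^H H^{2.5}T(\cdots)/\epsilon_*$. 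The delicate points are showing that a single disturbance-action truncation simultaneously tracks a linear controller's cost and controls its constraint violation by $\epsilon_3(H)$, and verifying that $\lambda$ makes $\bm M^\star$ feasible; summing $(a)$--$(d)$ and collecting the polynomial factors then gives the claimed bound.
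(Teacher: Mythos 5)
Your proposal is correct, and its overall skeleton (cost-approximation terms, OGD regret, comparator gap) matches the paper's decomposition into Parts i, ii, iii; the genuine difference is in the key step, the comparator gap (your term $(d)$, the paper's Part iii-A). The paper compares $\bm M^*=\argmin_{\Omega_\epsilon}\sum_t \mathring f_t$ against $\bm M_{\text{ap}}=\bm M(K^*)\in\Omega_{-\epsilon_1-\epsilon_3}$ by proving a general perturbation result for polytopic programs (Proposition \ref{prop: perturbation}) and instantiating it on a \emph{lifted} polytope $\Gamma_\epsilon$ in which the $L_1$-norm constraints defining $\Omega_\epsilon$ are rewritten with auxiliary variables (Lemma \ref{lem: standard polytope form}); the Lipschitz constant of $\sum_t\mathring f_t$, the lifted diameter, and the interior margin coming from $\bm M(K_*)$ then give the bound. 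You instead bypass both the intermediate optimizer and the lifting: you build the explicit comparator $\bm M^\star=(1-\lambda)\bm M(K^*)+\lambda\,\bm M(K_*)$ with $\lambda=(\epsilon+\epsilon_1+\epsilon_3)/\epsilon_*$, use convexity of $\mathring g^x_i,\mathring g^u_j$ together with the margins from Corollary \ref{cor: K's M in Omega} (namely $\bm M(K^*)\in\Omega_{-\epsilon_1-\epsilon_3}$ and $\bm M(K_*)\in\Omega_{\epsilon_*-\epsilon_1-\epsilon_3}$) to verify $\bm M^\star\in\Omega_\epsilon$ and $\lambda\le 1$ under Theorem \ref{thm: feasibility}'s condition $\epsilon\le\epsilon_*-\epsilon_1-\epsilon_3$, and then use convexity of $\mathring f_t$ to charge the cost of interpolation by $\lambda$ times the oscillation of $\mathring f_t$ over $\M$. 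This is the same geometric idea that drives the paper's proof of Proposition \ref{prop: perturbation} (pulling a point toward a strictly feasible anchor), but executed directly in policy space with convexity rather than in the lifted polytope with Lipschitz continuity. What each buys: the paper's route yields a reusable, self-contained perturbation lemma for arbitrary polytopes, but the lifting inflates the relevant diameter by the auxiliary-variable dimensions and is precisely where the $\sqrt{k_c}$ factors in the stated bound come from; your route is more elementary, exploits that $\Omega_\epsilon$ is a sublevel set of convex functions whose levels shift linearly in $\epsilon$, and produces a comparator-gap bound of order $(\epsilon+\epsilon_1+\epsilon_3)TH^{1.5}n^2m/\epsilon_*$ with no $k_c$ dependence --- i.e., it proves the theorem with a slightly sharper constant structure. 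One point to tighten in a full write-up: where you say you ``bound the surrogate costs by $\Theta(\mathrm{poly}(n,m,H))$,'' you should bound the \emph{difference} $\mathring f_t(\bm M(K_*))-\mathring f_t(\bm M(K^*))$ via the Lipschitz estimate of $f_t$ on $\M$ (the analogue of Lemma \ref{lem: ft(Mt) property}), since the costs themselves carry an unknown offset $c_t(0,0)$; this is a wording fix, not a gap.
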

Theorem \ref{thm: regret bdd} provides a regret bound for OGD-BZ as long as OGD-BZ is safe. Notice that as the buffer size $\epsilon$ increases, the regret bound becomes worse. This is intuitive since our OGD-BZ will have to search for policies in a smaller set $\Omega_{\epsilon}$ if $\epsilon$ increases. Consequently, the buffer size $\epsilon$ can serve as a tuning parameter for the trade-off between safety and regrets, i.e. a small $\epsilon$ is preferred for low regrets while a large $\epsilon$ is preferred for safety (as long as $\Omega_\epsilon\not=\emptyset$).
In addition, although a small stepsize $\eta$ is preferred for safety in Theorem \ref{thm: feasibility}, Theorem \ref{thm: regret bdd}  suggests that the stepsize should not be  too small for low regrets since the regret bound contains a $\Theta(\eta^{-1})$ term. This is intuitive since  the stepsize $\eta$ should be large enough to allow OGD-BZ to adapt to the varying objectives for better online performance.

Next, we provide a regret bound with specific   parameters.

\begin{corollary}\label{cor: regret bdd}
For sufficiently large $T$, when   $H\geq  \frac{ \log(8(c_1+c_2)n\sqrt mT/\epsilon_*)}{\log((1-\gamma)^{-1})}$, $\eta=\Theta(\frac{1}{n^2\sqrt m H \sqrt T})$,  $\epsilon=\epsilon_1(H)+\epsilon_2(\eta, H)=\Theta(\frac{\log(n \sqrt{m}T)}{\sqrt{T}})$, OGD-BZ is safe and
$
	\textup{Reg}(\textup{OGD-BZ})\leq \tilde O\left((n^{ {3}}m^{1.5}{k_c}^{0.5})\sqrt T \right).
$
\end{corollary}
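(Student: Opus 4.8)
The plan is to derive this as a direct consequence of Theorem~\ref{thm: feasibility} and Theorem~\ref{thm: regret bdd} by substituting the stated parameter values and simplifying, tracking only the dependence on $T$ (up to polylogarithmic factors) and on the dimensions $n, m, k_c$ explicitly, while treating $\kappa, \kappa_B, \gamma^{-1}, \|D_x\|_\infty, \|D_u\|_\infty, \|d_x\|_2, \|d_u\|_2, \bar w, G$ as fixed constants absorbed into the hidden factors. I would proceed in two stages: first confirm that the chosen $(\eta, \epsilon, H)$ satisfy the hypotheses of Theorem~\ref{thm: feasibility}, which certifies safety; then plug the same values into the four summands of the regret bound of Theorem~\ref{thm: regret bdd} and collect terms.

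For the safety stage, I would note that the lower-bound condition $\epsilon \geq \epsilon_1(H)+\epsilon_2(\eta,H)$ holds with equality by the very choice $\epsilon = \epsilon_1(H)+\epsilon_2(\eta,H)$, so it needs no work. The remaining conditions are the non-emptiness inequality $\epsilon \leq \epsilon_*-\epsilon_1(H)-\epsilon_3(H)$ and $H \geq \log(2\kappa^2)/\log((1-\gamma)^{-1})$. The key observation is that the stated lower bound on $H$ forces $(1-\gamma)^H \leq \epsilon_*/(8(c_1+c_2)n\sqrt m\,T)$, so by Definition~\ref{def: epsilon def} both $\epsilon_1(H)$ and $\epsilon_3(H)$ are $O(\mathrm{polylog}(T)/T)$, while $\eta=\Theta(1/(n^2\sqrt m\,H\sqrt T))$ gives $\epsilon_2(\eta,H)=\Theta(H/\sqrt T)=\Theta(\log(n\sqrt m T)/\sqrt T)$, which dominates $\epsilon_1(H)$; hence $\epsilon=\Theta(\log(n\sqrt m T)/\sqrt T)$ as claimed. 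Since $\epsilon,\epsilon_1(H),\epsilon_3(H)$ all tend to $0$ as $T\to\infty$ while $\epsilon_*$ is a fixed positive constant, the non-emptiness inequality holds for all sufficiently large $T$, and likewise the stated lower bound on $H$ eventually exceeds $\log(2\kappa^2)/\log((1-\gamma)^{-1})$. This establishes safety.

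For the regret stage, I would substitute the parameters term by term. The first summand $n^3 m H^3\eta T$ becomes $\Theta(n\sqrt m\,H^2\sqrt T)=\tilde O(n\sqrt m\sqrt T)$; the second, $mn/\eta$, becomes $\Theta(n^3 m^{1.5}H\sqrt T)=\tilde O(n^3 m^{1.5}\sqrt T)$. In the third summand the factor $(1-\gamma)^H=O(1/T)$ cancels the explicit $T$, leaving only a $\mathrm{polylog}(T)$ dimension-dependent quantity with no $\sqrt T$ growth, so it is negligible. In the fourth summand, $\epsilon=\Theta(\log(n\sqrt m T)/\sqrt T)$ turns $\epsilon T$ into $\Theta(\sqrt T\log(n\sqrt m T))$, yielding $\tilde O((n^2 m+\sqrt{k_c m n^3})\sqrt T)$. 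Collecting the $\sqrt T$-order contributions and absorbing all powers of $H=\Theta(\log(n\sqrt m T))$ and the $\log(n\sqrt m T)$ factor into $\tilde O(\cdot)$, the surviving dimension monomials are $n^3 m^{1.5}$ (second summand) together with $n^2 m$ and $\sqrt{k_c}\,m^{0.5}n^{1.5}$ (fourth summand); using $n,m,k_c\geq 1$, each is bounded by $n^3 m^{1.5}k_c^{0.5}$, giving the claimed $\tilde O(n^3 m^{1.5}k_c^{0.5}\sqrt T)$ bound.

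The main obstacle, and the only genuinely delicate point, is bookkeeping rather than any new idea: I must verify that $\epsilon_2$ strictly dominates $\epsilon_1$ so that $\epsilon$ has the claimed exact order $\Theta(\log(n\sqrt m T)/\sqrt T)$, and that every surviving dimension monomial is dominated by $n^3 m^{1.5}k_c^{0.5}$, which relies on all dimensions being at least $1$. I would also double-check that the $\tilde O$ notation is legitimately hiding only polylogarithmic-in-$T$ factors (the various powers of $H$) and the fixed polynomials of $\kappa,\kappa_B,\gamma^{-1},\|D_x\|_\infty,\|D_u\|_\infty,\|d_x\|_2,\|d_u\|_2,\bar w,G$ inherited from the hidden constants of Theorem~\ref{thm: regret bdd}, and nothing that grows polynomially in $T$.
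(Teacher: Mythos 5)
Your proposal is correct and follows exactly the route the paper intends for this corollary: verify the hypotheses of Theorem~\ref{thm: feasibility} for the stated $(\eta,\epsilon,H)$ to get safety, then substitute into the four terms of Theorem~\ref{thm: regret bdd}, noting that $(1-\gamma)^H = O(1/T)$ kills the third term, $\epsilon T = \tilde\Theta(\sqrt T)$ controls the fourth, and all surviving dimension monomials are dominated by $n^3 m^{1.5} k_c^{0.5}$. Your bookkeeping (including the observation that $\epsilon_2$ dominates $\epsilon_1$, so $\epsilon = \Theta(\log(n\sqrt m\,T)/\sqrt T)$) matches the paper's parameter choices and hidden-constant conventions.
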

Corollary \ref{cor: regret bdd} shows that OGD-BZ achieves  $\tilde O(\sqrt T)$ regrets when $H\geq\Theta(\log T)$, $\eta^{-1}=\tilde \Theta(\sqrt T)$, and $\epsilon=\tilde \Theta(1/\sqrt T)$. This demonstrates that OGD-BZ can ensure both constraint satisfaction and sublinear regrets under the proper parameters of the algorithm. We remark that  a larger $H$ is preferred for better performance due to smaller approximation errors and a potentially larger policy search space $\Omega_\epsilon$, but the computational complexity of OGD-BZ increases with $H$. 
Besides, though the  choices of $H$, $\eta$, and $\epsilon$ above   require the prior knowledge of $T$,  one can apply doubling tricks \cite{hazan2019introduction} to avoid this requirement. Lastly, we note that our $\tilde O(\sqrt T)$ regret bound is consistent with the unconstrained online optimal control literature for convex cost functions \cite{agarwal2019online}. For strongly convex costs, the regret for the unconstrained case is logarithmic in $T$ \cite{agarwal2019logarithmic}.  We leave the study on the constrained control with strongly convex costs for the future.

\subsection{Proof of Theorem \ref{thm: feasibility}}\label{subsec: feasible proof}

To prove Theorem \ref{thm: feasibility}, we first provide lemmas to bound errors by $\epsilon_1(H), \epsilon_2(\eta,H)$, and $\epsilon_3(H)$, respectively. The proofs of  Lemmas \ref{lem: bdd xt-H part}-\ref{lem: define M_ap and epsilon3} and Corollary \ref{cor: K's M in Omega} in this subsection are provided in the arXiv version~\cite{li2020online}.

Firstly, we show that the approximation error in Step 1 of the previous section can be bounded by $\epsilon_1(H)$. 
\begin{lemma}[Error bound $\epsilon_1(H)$]\label{lem: bdd xt-H part}
	When  $\bm M_k \in \M$ for all $k$ and  $H\geq \frac{\log(2\kappa^2)}{\log((1-\gamma)^{-1})}$, we have
	\begin{align*}
	&\max_{\|w_k\|_\infty \leq \bar w} \|D_{x} \AK^H x_{t-H}\|_\infty\leq \epsilon_1(H),\\
	& \max_{\|w_k\|_\infty \leq \bar w} \|D_{u}\Kb \AK^H x_{t-H}\|_\infty \leq \epsilon_1(H).
	\end{align*}

\end{lemma}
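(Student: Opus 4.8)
The plan is to reduce both inequalities to a single estimate: the operator norm of $\AK^H$ (which decays geometrically in $H$) multiplied by a bound on $\|x_{t-H}\|_2$ that is uniform over all admissible disturbance sequences, after which elementary norm conversions insert $D_x$ and $D_u\Kb$ and pass from $\|\cdot\|_2$ to $\|\cdot\|_\infty$.

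First I would control $\AK^H$ using strong stability. Since $\Kb$ is $(\kappa,\gamma)$-strongly stable, there exist an invertible $Q$ and a matrix $L$ with $\AK=Q^{-1}LQ$, $\|L\|_2\le 1-\gamma$, and $\max(\|Q\|_2,\|Q^{-1}\|_2,\|\Kb\|_2)\le\kappa$. Hence $\AK^H=Q^{-1}L^H Q$ and $\|\AK^H\|_2\le\|Q^{-1}\|_2\|L\|_2^H\|Q\|_2\le\kappa^2(1-\gamma)^H$. The hypothesis $H\ge\log(2\kappa^2)/\log((1-\gamma)^{-1})$ yields $\kappa^2(1-\gamma)^H\le 1/2$, which keeps all geometric factors below one and is what makes the tail term $\AK^H x_{t-H}$ genuinely small.

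Second I would establish a uniform bound on $\|x_{t-H}\|_2$. Because $x_0=0$, unrolling the closed loop $x_{s+1}=\AK x_s+B\sum_{i=1}^H M_s^{[i]}w_{s-i}+w_s$ writes $x_{t-H}$ as a finite sum of terms $\AK^{k}w_{\cdot}$ and $\AK^{k}BM_s^{[i]}w_{\cdot}$. Each $\|\AK^k\|_2\le\kappa^2(1-\gamma)^k$ decays geometrically, each $\|w\|_2\le\sqrt n\,\bar w$, and the constraint $\bm M\in\M$ converts to $\|M^{[i]}\|_2\le\sqrt m\,\|M^{[i]}\|_\infty\le 2\sqrt{mn}\,\kappa^3(1-\gamma)^{i-1}$. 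Summing the independent geometric series in the time index and in $i$ (each contributing $1/\gamma$) gives a uniform bound $\|x_{t-H}\|_2\le\mathrm{poly}(\kappa,\kappa_B,\gamma^{-1})\,n\sqrt m\,\bar w$; the $n\sqrt m$ comes precisely from the product of the $\sqrt{mn}$ in $\|M^{[i]}\|_2$ with the $\sqrt n$ in $\|w\|_2$. I note that the additional factor $H$ that appears in $\epsilon_1(H)$ is harmless slack: it arises if one instead bounds the state through the surrogate representation $\tilde x$ of Proposition \ref{prop: approx}, whose coefficients $\Phi^x_k$ convolve two $(1-\gamma)$-geometric sequences and therefore carry a factor $H$.

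Finally I would assemble the pieces. For the state bound, $\|D_x\AK^H x_{t-H}\|_\infty\le\|D_x\|_\infty\,\|\AK^H\|_2\,\|x_{t-H}\|_2$, and substituting the two estimates above gives $\le c_1 n\sqrt m H(1-\gamma)^H=\epsilon_1(H)$ after absorbing $\|D_x\|_\infty$ and the strong-stability constants into $c_1$. The action bound is identical once the extra factor $\|\Kb\|_2\le\kappa$ is absorbed into the constant: $\|D_u\Kb\AK^H x_{t-H}\|_\infty\le\|D_u\|_\infty\,\kappa\,\|\AK^H\|_2\,\|x_{t-H}\|_2$. Taking $\max_{\|w_k\|_\infty\le\bar w}$ is automatic since every estimate was already a worst-case bound over the disturbances. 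The main obstacle is the second step: securing a state bound that is uniform over all $t$, all admissible disturbance realizations, and the entire time-varying sequence $\{\bm M_s\}\subset\M$, while tracking the dimension dependence sharply enough to land on $\epsilon_1(H)$; this is exactly where the geometric decay of $\AK^k$ and the geometrically decaying envelope defining $\M$ are both indispensable, with the rest being bookkeeping of norm-conversion constants.
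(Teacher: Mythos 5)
Your proof is correct, and it takes a genuinely different route for the one nontrivial ingredient of this lemma: the uniform state bound. The paper's own proof is exactly your assembly step — $\|D_x\AK^Hx_{t-H}\|_\infty\leq\|D_x\|_\infty\|\AK^H\|_2\|x_{t-H}\|_2$ and likewise with the extra $\kappa$ for $D_u\Kb$ — but it invokes Lemma \ref{lem: |AK^k|2 bdd} together with the pre-established bound of Lemma \ref{lem: bdd on xt}, $b\leq 8\sqrt{mn^2}H\bar w\kappa^6\kappa_B/\gamma$, which is itself proved through the surrogate representation of Proposition \ref{prop: approx}: one bounds $\|\tilde x_t\|_2$ via the coefficients $\Phi^x_k$ (whose convolution of two geometric sequences is what produces the factor $H$) and then closes an induction on $x_t=\AK^Hx_{t-H}+\tilde x_t$, which is why the hypothesis $H\geq\log(2\kappa^2)/\log((1-\gamma)^{-1})$, i.e.\ $\kappa^2(1-\gamma)^H\leq 1/2$, is genuinely needed there — it keeps the denominator $1-\kappa^2(1-\gamma)^H$ away from zero. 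Your direct unrolling of $x_{s+1}=\AK x_s+B\sum_{i=1}^H M_s^{[i]}w_{s-i}+w_s$ sidesteps both the surrogate machinery and the induction: the geometric series in the time index and in $i$ are independent, so you obtain $\|x_{t-H}\|_2\leq O(n\sqrt m\,\bar w\,\kappa^5\kappa_B/\gamma^2)$ with no $H$ factor and with no lower bound on $H$ required at all, and your diagnosis that the $H$ in $\epsilon_1(H)$ is then pure slack — an artifact of routing the state bound through $\Phi^x_k$ — is exactly right. What the paper's route buys is economy within its own architecture: Proposition \ref{prop: approx} and the bound $b$ are needed elsewhere anyway (e.g.\ for the gradient bound $G_f$ and the regret analysis), so this lemma becomes a two-line corollary of already-established facts; what your route buys is a self-contained and slightly sharper state bound (better by a factor $H$, worse by a factor $\gamma^{-1}$) that does not depend on the lower bound on $H$. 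Your final constant differs from the paper's $c_1=8\bar w\kappa^9\kappa_B\max(\|D_x\|_\infty,\|D_u\|_\infty)/\gamma$, but that is immaterial since Definition \ref{def: epsilon def} only requires $c_1$ to be polynomial in $\|D_x\|_\infty,\|D_u\|_\infty,\kappa,\kappa_B,\gamma^{-1},\bar w,G$.
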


The proof of Lemma \ref{lem: bdd xt-H part} relies on the boundedness of $x_t$ when implementing $\bm M_t\in \mathcal M$ as stated below.

\begin{lemma}[Bound on $x_t$]\label{lem: bdd on xt}
    With $\bm M_k \in \mathcal M$ for all $k$ and $\kappa^2(1-\gamma)^H<1$, we have
    $$ \|x_t\|_2\leq b,$$
    where $b=\frac{\kappa\sqrt{n}\bar w (\kappa^2+2\kappa^5 \kappa_B \sqrt{mn} H)}{(1-\kappa^2(1-\gamma)^H)\gamma}+2\sqrt{mn}\kappa^3\bar w /\gamma$. Hence, when $H\geq \frac{\log(2\kappa^2)}{\log((1-\gamma)^{-1})}$,  we have $b \leq 8 \sqrt{mn^2} H \bar w \kappa^6 \kappa_B/\gamma$.
\end{lemma}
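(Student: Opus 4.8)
The plan is to bound $\|x_t\|_2$ by tracking how the disturbance-action policy propagates the noise through the closed-loop dynamics. First I would unroll the state recursion. Since $u_t = -\Kb x_t + \sum_{i=1}^H M_t^{[i]} w_{t-i}$, the closed-loop update is $x_{t+1} = \AK x_t + B\sum_{i=1}^H M_t^{[i]} w_{t-i} + w_t$, where $\AK = A - B\Kb$. Iterating this from $x_0 = 0$ expresses $x_t$ as a sum of terms of the form $\AK^{j}\bigl(w_{t-1-j} + B\sum_i M_{t-1-j}^{[i]} w_{t-1-j-i}\bigr)$. The key structural fact is that $\tilde x_t = \sum_{k=1}^{2H}\Phi_k^x(\bm M_{t-H:t-1}) w_{t-k}$ from Proposition~\ref{prop: approx} already packages much of this, so I would instead work directly from the decomposition $x_t = \AK^H x_{t-H} + \tilde x_t$ and bound each piece, but the cleanest route is a direct geometric-series estimate on the unrolled recursion.

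The core estimates I would use are the strong-stability bound $\|\AK^j\|_2 = \|Q^{-1}L^j Q\|_2 \leq \kappa^2 (1-\gamma)^j$ (from $\|L\|_2 \leq 1-\gamma$ and $\max(\|Q\|_2,\|Q^{-1}\|_2)\leq\kappa$), the noise bound $\|w_t\|_2 \leq \sqrt n\,\bar w$ (converting $\|w_t\|_\infty\leq\bar w$), and the policy bound. For $\bm M\in\M$ we have $\|M^{[i]}\|_\infty \leq 2\sqrt n\,\kappa^3(1-\gamma)^{i-1}$, which via norm conversion gives $\|M^{[i]}\|_2 \leq 2\sqrt{mn}\,\kappa^3(1-\gamma)^{i-1}$ (up to the appropriate $\sqrt{\cdot}$ factor relating $\|\cdot\|_\infty$ and $\|\cdot\|_2$ for $m\times n$ matrices). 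Each disturbance term picks up a factor $\|B\|_2\leq\kappa_B$ and the sum $\sum_{i=1}^H \|M^{[i]}\|_2 \leq 2\sqrt{mn}\,\kappa^3/\gamma$, contributing the $\kappa^5\kappa_B\sqrt{mn}$-type factors. Summing the geometric series $\sum_{j\geq 0}\kappa^2(1-\gamma)^j = \kappa^2/\gamma$ produces the $1/\gamma$ factors and, after collecting the direct-noise term and the policy-filtered term separately, yields the stated bound with denominator $(1-\kappa^2(1-\gamma)^H)\gamma$ — this denominator arises because the genuine recursion contracts by $\AK^H$ over each block of $H$ steps, so the relevant ratio is $\kappa^2(1-\gamma)^H$ rather than $\kappa^2(1-\gamma)$, which need not be below $1$.

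Concretely, I would group the contributions: the term $2\sqrt{mn}\kappa^3\bar w/\gamma$ captures the most recent block's direct policy action on $\{w_{t-i}\}_{i=1}^H$, while the first fraction captures the geometrically decaying accumulation over earlier blocks, where within each block the state is amplified by at most $\kappa\sqrt n\bar w(\kappa^2 + 2\kappa^5\kappa_B\sqrt{mn}H)$ before being contracted by the block factor $\kappa^2(1-\gamma)^H < 1$. The hypothesis $\kappa^2(1-\gamma)^H < 1$ is exactly what makes this block-geometric series converge. For the final simplification, under $H \geq \frac{\log(2\kappa^2)}{\log((1-\gamma)^{-1})}$ we get $(1-\gamma)^H \leq \frac{1}{2\kappa^2}$, hence $\kappa^2(1-\gamma)^H \leq \tfrac12$ and $\frac{1}{1-\kappa^2(1-\gamma)^H} \leq 2$; substituting this and crudely bounding the constituent terms (keeping the dominant $H$-factor and the highest powers $\kappa^6\kappa_B$, the dimension factor $\sqrt{mn^2}$) collapses everything into $b \leq 8\sqrt{mn^2}\,H\bar w\,\kappa^6\kappa_B/\gamma$.

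The main obstacle I anticipate is bookkeeping rather than conceptual difficulty: correctly identifying the block structure so that the contraction factor is $\kappa^2(1-\gamma)^H$ (needed to justify the specific denominator), and tracking the precise dimension-dependent constants through the repeated $\ell_2$–$\ell_\infty$ norm conversions for both the vectors $w_t$ and the matrices $M^{[i]}$ and $B$. Getting the powers of $\kappa$ and the $\sqrt{mn}$, $\sqrt{n}$, $\sqrt{mn^2}$ factors to match the stated constants exactly requires care, but each individual inequality is a routine application of submultiplicativity and geometric summation.
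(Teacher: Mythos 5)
Your proposal is correct and follows essentially the same route as the paper's proof: the paper bounds the approximate state by $\|\tilde x_t\|_2 \leq \sqrt n\,\bar w(\kappa^2+2\kappa^5\kappa_B\sqrt{mn}H)/\gamma$ using exactly the strong-stability, noise, and policy-norm estimates you list, and then runs an induction on the block decomposition $x_t=\AK^H x_{t-H}+\tilde x_t$, which is precisely your block-geometric series with contraction ratio $\kappa^2(1-\gamma)^H$ producing the denominator $(1-\kappa^2(1-\gamma)^H)\gamma$, followed by the same simplification via $\kappa^2(1-\gamma)^H\leq 1/2$. The only cosmetic discrepancy is your attribution of the leading $\kappa$ and the additive term $2\sqrt{mn}\kappa^3\bar w/\gamma$ to a ``most recent block'' state contribution: in the paper these terms come from the companion action bound $b_u=\kappa b_x+2\sqrt{mn}\kappa^3\bar w/\gamma$ (the state alone satisfies the sharper bound $b_x$, and $b=\max(b_x,b_u)$), but since they only enlarge the bound this does not affect correctness.
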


Secondly, we show that   the  error  incurred by the Step 3 of the previous section can be bounded by $\epsilon_2(\eta,H)$. 
\begin{lemma}[Error bound $\epsilon_2(\eta, H)$]\label{lem: g(Mt:t-H) to g(Mt) bdd error}
 When $H\geq \frac{\log(2\kappa^2)}{\log((1-\gamma)^{-1})}$, 
	the policies $\{\bm M_t\}_{t=0}^T$ generated by OGD-BZ with a constant stepsize $\eta$ satisfy
	\begin{align*}
	&\max_{1\leq i \leq k_x} |\mathring	g^x_i(\bm M_t)-	g^x_i(\bm M_{t-H+1:t})| \leq \epsilon_2(\eta,H),\\
	&\max_{1\leq j \leq k_u} |	\mathring g^u_j(\bm M_t)-	g^u_j(\bm M_{t-H:t})| \leq \epsilon_2(\eta,H).
	\end{align*}

\end{lemma}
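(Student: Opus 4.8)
The plan is to exploit the \emph{slow variation} of the OGD iterates: because the stepsize $\eta$ is constant and small, consecutive policies $\bm M_{t+1}$ and $\bm M_t$ are close, so replacing the history $\bm M_{t-H+1:t-1}$ by the current $\bm M_t$ inside the constraint functions incurs only a small error. First I would control a single OGD step. Since every iterate lies in $\Omega_\epsilon\subseteq\M$, we have $\bm M_t=\Pi_{\Omega_\epsilon}[\bm M_t]$, and non-expansiveness of the Euclidean projection gives
\begin{equation*}
\|\bm M_{t+1}-\bm M_t\|_2=\|\Pi_{\Omega_\epsilon}[\bm M_t-\eta\nabla\mathring f_t(\bm M_t)]-\Pi_{\Omega_\epsilon}[\bm M_t]\|_2\leq \eta\,\|\nabla\mathring f_t(\bm M_t)\|_2.
\end{equation*}
The next step is a uniform gradient bound. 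Since $\mathring f_t(\bm M)=\E[c_t(\tilde x_t,\tilde u_t)]$ with $\tilde x_t,\tilde u_t$ affine in $\bm M$, the chain rule expresses $\nabla\mathring f_t$ through $\nabla_x c_t,\nabla_u c_t$ and the coefficient matrices $\mathring\Phi^x_s,\mathring\Phi^u_s$. For $\bm M\in\M$ and $\|w_k\|_\infty\leq\bar w$ these coefficients, and hence $\|\tilde x_t\|_2,\|\tilde u_t\|_2$, are bounded by a geometric sum controlled by the $\M$-bounds and $\gamma^{-1}$; Assumption~\ref{ass: bounded Hessian largest evalue} then bounds $\|\nabla_x c_t\|_2,\|\nabla_u c_t\|_2$, yielding $\|\nabla\mathring f_t(\bm M_t)\|_2\leq G'$ for a constant $G'$ polynomial in $\kappa,\kappa_B,\gamma^{-1},\bar w,G$ and the dimensions. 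Telescoping then gives $\|\bm M_t-\bm M_{t-j}\|_2\leq j\eta G'\leq H\eta G'$ for every $0\leq j\leq H$.

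With the iterate drift in hand, I would prove a Lipschitz-type estimate for $g^x_i$. By the reverse triangle inequality applied entrywise to the $L_1$ norm,
\begin{equation*}
|g^x_i(\bm M_{t-H+1:t})-\mathring g^x_i(\bm M_t)|\leq \bar w\sum_{s=1}^{2H}\big\|D_{x,i}^\top\big(\Phi^x_s(\bm M_{t-H+1:t})-\mathring\Phi^x_s(\bm M_t)\big)\big\|_1.
\end{equation*}
In the difference of the $\Phi^x_s$'s the fixed $\AK^{s-1}\one_{(s\leq H)}$ term cancels, leaving the internal sum $\sum_{\ell}\AK^{\ell-1}B\,(M_{t+1-\ell}^{[s-\ell]}-M_t^{[s-\ell]})\,\one_{(1\leq s-\ell\leq H)}$, whose $\ell$-th summand compares the policy at time $t+1-\ell$ with $\bm M_t$. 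The time gap is $\ell-1\leq H$, so each factor $\|M_{t+1-\ell}^{[s-\ell]}-M_t^{[s-\ell]}\|$ is bounded by $H\eta G'$ via the telescoped drift, while $\|\AK^{\ell-1}\|_2\leq\kappa^2(1-\gamma)^{\ell-1}$ by strong stability. Summing over $s$ and $\ell$ and swapping the order (each $\ell$ contributes to exactly $H$ values of $s$ through the indicator $\one_{(1\leq s-\ell\leq H)}$) produces one factor of $H$ from the range of $s$, the second factor of $H$ from the drift bound $H\eta G'$, and a convergent geometric series $\sum_\ell\kappa^2(1-\gamma)^{\ell-1}=O(\kappa^2/\gamma)$. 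After the norm conversions ($\|D_{x,i}^\top(\cdot)\|_1$ against Frobenius norms supplies the $n,\sqrt m$ factors) this lands on a bound of order $\eta\,n^2\sqrt m\,H^2$, i.e. $\epsilon_2(\eta,H)=c_2\eta\,n^2\sqrt m H^2$ as in Definition~\ref{def: epsilon def}. The bound for $g^u_j$ is identical, the extra $-\Kb\Phi^x_s$ term in $\Phi^u_s$ being absorbed into the constant through $\|\Kb\|_2\leq\kappa$.

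The main obstacle is the uniform gradient bound together with the attendant norm bookkeeping: one must track how the matrix/vector dimensions and the $L_1$-versus-$L_2$ conversions combine to give precisely the $n^2\sqrt m$ factor, and verify that the a priori boundedness of $\tilde x_t,\tilde u_t$ (ultimately resting on Lemma~\ref{lem: bdd on xt} and the definition of $\M$) holds uniformly in $t$. The combinatorial heart — that the double sum over $s$ and $\ell$ with the indicator yields exactly one factor of $H$, the telescoped drift the second factor, and the geometric decay a $\gamma^{-1}$ constant — is the part that must be executed cleanly to reach the stated $H^2$ dependence; the remainder is routine geometric-series estimation.
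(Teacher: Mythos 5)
Your overall route---one-step OGD drift via non-expansiveness of the projection, a uniform gradient bound, telescoping, and a Lipschitz-type estimate for $g^x_i,g^u_j$ obtained by expanding differences of the $\Phi$'s---is exactly the paper's strategy (its Lemma \ref{lem: g is Lg Lip cont} followed by the short telescoping argument). However, there is a genuine gap in the bookkeeping, and it sits precisely at the point you yourself flag as the ``main obstacle'': your claim that $\|\nabla \mathring f_t(\bm M_t)\|\leq G'$ with $G'$ independent of $H$. This is false in the framework you invoke: by Lemma \ref{lem: ft(Mt) property}, the gradient bound is $G_f=\Theta\bigl(Gb(1+\kappa)\sqrt n\,\bar w\kappa^2\kappa_B\sqrt H\,(1+\gamma)/\gamma\bigr)=\Theta(\sqrt{n^3H^3m})$, because the state/action bound $b$ of Lemma \ref{lem: bdd on xt} (which you cite) is itself $\Theta(\sqrt{mn^2}\,H)$, and the Lipschitz estimate of $f_t$ contributes a further $\sqrt H$ from aggregating over the $H$ blocks of $\bm M$. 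Substituting the true $G_f$ into your count---one factor $H$ from the range of $s$, one factor $H$ from the uniform drift bound $H\eta G'$, and an $O(\gamma^{-1})$ geometric series---yields $\Theta(\eta\, n^2\sqrt m\,H^{3.5})$, not the required $\epsilon_2(\eta,H)=c_2\eta\, n^2\sqrt m\,H^2$ of Definition \ref{def: epsilon def}. As written, your argument overshoots the stated bound by $H^{1.5}$ and therefore does not prove the lemma.

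Two refinements, both present in the paper's proof, are needed to land on $H^2$. First, do not replace the drift by its uniform bound $H\eta G_f$ inside the sum over $\ell$: keep the $\ell$-dependent bound $\|\bm M_{t-\ell}-\bm M_t\|_F\leq \ell\,\eta G_f$ and let the geometric weight absorb the factor $\ell$, since $\sum_{\ell=1}^{H}\ell\,(1-\gamma)^{\ell-1}\leq \gamma^{-2}$; this saves one full factor of $H$. Second, do not bound each block difference $\|M^{[s-\ell]}_{t-\ell}-M^{[s-\ell]}_{t}\|$ by the whole Frobenius drift and then multiply by the number $H$ of admissible values of $s$; instead sum blockwise and apply Cauchy--Schwarz, $\sum_{j=1}^{H}\|M^{[j]}_{t-\ell}-M^{[j]}_{t}\|_F\leq \sqrt H\,\|\bm M_{t-\ell}-\bm M_{t}\|_F$, which costs only $\sqrt H$ rather than $H$. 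With both corrections the estimate becomes $\Theta\bigl(\bar w\sqrt n\max(\|D_x\|_\infty,\|D_u\|_\infty)\kappa^3\kappa_B\bigr)\cdot \sqrt H\cdot \gamma^{-2}\,\eta\, G_f=\Theta(\eta\, n^2\sqrt m\,H^2)$, matching $\epsilon_2(\eta,H)$; this is exactly the paper's Lemma \ref{lem: g is Lg Lip cont} (whose Lipschitz constant $L_g(H)$ carries only $\sqrt H$) combined with the $k$-weighted telescoped drift.
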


Thirdly, we show that for any $K\in \K$, there  exists a disturbance-action policy $\bm M(K)\in \M$ to approximate the policy $u_t=-Kx_t$. However, $\bm M(K)$ may not be safe and is only $\epsilon_3(H)$-loosely safe. 
\begin{lemma}[Error bound $\epsilon_3( H)$]\label{lem: define M_ap and epsilon3}

	For any $K \in \K$, there exists a disturbance-action policy $\bm M(K)=\{M^{[i]}(K)\}_{i=1}^H\in \M$ defined as
	$M^{[i]}(K)=(\Kb-K)(A-BK)^{i-1}$
	such that
	\begin{align*}
	  \max(\|{D_x}[x_t^K \!-\! x_t^{\bm M(K)}\!]\|_{{\infty}}, \|{D_u}[u_t^K\! -\! u_t^{\bm M(K)}\!]\|_{{\infty}}) \!\leq\! \epsilon_3(H)
	\end{align*}
	where $(x_t^K, u_t^K)$ and $(x_t^{\bm M(K)}, u_t^{\bm M(K)})$ are produced by controller $u_t\!=\!-K x_t$ and disturbance-action policy $\bm M(K)$ respectively. 
		Hence,  $\bm M(K)$ is  $\epsilon_3(H)$-loosely safe.

\end{lemma}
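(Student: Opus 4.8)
The plan is to exploit the fact that the disturbance-action policy $\bm M(K)$ is constructed so that its \emph{untruncated} version reproduces the linear feedback $u_t=-Kx_t$ exactly; the only discrepancy comes from cutting the sum off at $H$, and this truncation error decays geometrically in $H$. Writing $A_K=A-BK$ and using $x_0=0$, the linear-policy trajectory is $x_t^K=\sum_{i=1}^t A_K^{i-1}w_{t-i}$, so its action can be rewritten as $u_t^K=-Kx_t^K=-\Kb x_t^K+(\Kb-K)\sum_{i=1}^t A_K^{i-1}w_{t-i}$. This is precisely a disturbance-action policy (Definition \ref{def: disturbance action}) with gains $M^{[i]}(K)=(\Kb-K)A_K^{i-1}$, except that the sum runs up to $t$ rather than being truncated at $H$. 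This identity both motivates the definition of $\bm M(K)$ and isolates the source of the error.

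First I would verify $\bm M(K)\in\M$. Since both $\Kb$ and $K$ are $(\kappa,\gamma)$-strongly stable, $\|\Kb-K\|_2\leq 2\kappa$ and $\|A_K^{i-1}\|_2\leq\kappa^2(1-\gamma)^{i-1}$, the latter following from the similarity $A_K=Q^{-1}LQ$ in Definition \ref{def: diagonally strongly stable}. Converting to the $\infty$-norm via $\|\cdot\|_\infty\leq\sqrt n\,\|\cdot\|_2$ for $m\times n$ matrices yields $\|M^{[i]}(K)\|_\infty\leq 2\sqrt n\,\kappa^3(1-\gamma)^{i-1}$, which is exactly the defining bound of $\M$.

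Next I would set up the error recursion. Let $\delta_t=x_t^K-x_t^{\bm M(K)}$ and $\nu_t=u_t^K-u_t^{\bm M(K)}$. Subtracting the two closed-loop actions and using the identity above gives $\nu_t=-\Kb\delta_t+r_t$, where $r_t=\sum_{i=H+1}^t M^{[i]}(K)w_{t-i}$ is the truncation remainder. Since both trajectories obey $x_{t+1}=Ax_t+Bu_t+w_t$ with the same $w_t$, the state error satisfies $\delta_{t+1}=\AK\delta_t+Br_t$ with $\delta_0=0$, hence $\delta_t=\sum_{s=0}^{t-1}\AK^{t-1-s}Br_s$. I would bound $\|r_s\|_2$ by the geometric tail $\sum_{i>H}2\kappa\cdot\kappa^2(1-\gamma)^{i-1}\sqrt n\,\bar w=\Theta(\sqrt n\,(1-\gamma)^H/\gamma)$, then propagate it through the strongly stable $\AK$ (using $\|\AK^{j}\|_2\leq\kappa^2(1-\gamma)^{j}$ and $\|B\|_2\leq\kappa_B$) to obtain $\|\delta_t\|_2=\Theta(\sqrt n\,(1-\gamma)^H)$ uniformly in $t$. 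Finally $\|D_x\delta_t\|_\infty\leq\|D_x\|_\infty\|\delta_t\|_2$ and $\|D_u\nu_t\|_\infty\leq\|D_u\|_\infty(\kappa\|\delta_t\|_2+\|r_t\|_2)$ both fall under $c_3\sqrt n\,(1-\gamma)^H=\epsilon_3(H)$ after absorbing $\|D_x\|_\infty,\|D_u\|_\infty,\kappa,\kappa_B,\gamma^{-1},\bar w$ into $c_3$. Loose safety is then immediate: from $D_x x_t^K\leq d_x$ we get $D_x x_t^{\bm M(K)}=D_x x_t^K-D_x\delta_t\leq d_x+\epsilon_3(H)\one_{k_x}$, and likewise for the action.

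The conceptual content sits entirely in the opening identity; the remaining work is error bookkeeping. I expect the main obstacle to be the propagation step, namely controlling $\delta_t$ uniformly in $t$ through the two nested geometric sums (the tail defining $r_s$ and the convolution with $\AK^{t-1-s}$) while tracking the correct dimension factor $\sqrt n$ and the constants, so that the final bound matches the stated form $\epsilon_3(H)=c_3\sqrt n\,(1-\gamma)^H$. The norm conversions between $\|\cdot\|_\infty$ (used in $\M$ and the constraint rows) and $\|\cdot\|_2$ (natural for strong stability) are the other place where care is needed to avoid spurious dimension factors.
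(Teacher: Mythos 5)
Your proof is correct, but it takes a genuinely different route from the paper's. The paper computes the trajectory of $\bm M(K)$ in closed form: it writes $x_t^{\bm M(K)}=\sum_{s=1}^t \tilde \Phi^x_{s}(\bm M(K))w_{t-s}$ and shows by a telescoping computation that the coefficient $\tilde\Phi^x_s(\bm M(K))$ equals $A_K^{s-1}$ exactly when $s\leq H$ and equals $\AK^{s-H-1}A_K^H$ when $s>H$; the error is then the explicit sum $\sum_{s=H+1}^t(\AK^{s-H-1}-A_K^{s-H-1})A_K^H w_{t-s}$, bounded term by term. You instead observe that $u_t^K$ is itself an \emph{untruncated} disturbance-action policy with the same gains, so the two closed-loop systems differ only through the truncation remainder $r_t=\sum_{i=H+1}^t M^{[i]}(K)w_{t-i}$; this yields the error recursion $\delta_{t+1}=\AK\delta_t+Br_t$, $\nu_t=-\Kb\delta_t+r_t$, which you solve and bound by two nested geometric series. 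Your route is more modular and avoids the telescoping coefficient computation entirely, at the price of slightly looser constants (your $\delta_t$ bound carries $\kappa^5\kappa_B/\gamma^2$ versus the paper's $\kappa^4/\gamma$), which is immaterial since $c_3$ is only required to be a polynomial in $\kappa,\kappa_B,\gamma^{-1},\bar w,\|D_x\|_\infty,\|D_u\|_\infty$. The membership check $\bm M(K)\in\M$ and the final loose-safety step (passing through $\|D_x\delta_t\|_\infty\leq\|D_x\|_\infty\|\delta_t\|_2$ and adding $\epsilon_3(H)\one$ to the constraint satisfied by the safe controller $K$) coincide with the paper's.
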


Based on Lemma \ref{lem: define M_ap and epsilon3}, we can further show that $\bm M(K)$ belongs to  a polytopic constraint set in the following corollary. For the rest of the paper, we will omit the arguments in $\epsilon_1(H), \epsilon_2(\eta, H), \epsilon_3(H)$ for notational simplicity.
\begin{corollary}\label{cor: K's M in Omega}
Consider  $K\in \K$, if $K$  is  $\epsilon_0$-strictly safe for $\epsilon_0\geq 0$,  then  
$
	\bm M(K)\in \Omega_{\epsilon_0-\epsilon_1-\epsilon_3}.
$
\end{corollary}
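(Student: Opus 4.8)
The plan is to verify directly the two defining properties of $\Omega_{\epsilon_0-\epsilon_1-\epsilon_3}$ for the candidate $\bm M(K)$: first that $\bm M(K)\in\M$, and second that its decoupled constraint values respect the tightened bounds $\mathring g^x_i(\bm M(K))\le d_{x,i}-(\epsilon_0-\epsilon_1-\epsilon_3)$ and $\mathring g^u_j(\bm M(K))\le d_{u,j}-(\epsilon_0-\epsilon_1-\epsilon_3)$. Membership $\bm M(K)\in\M$ is already supplied by Lemma~\ref{lem: define M_ap and epsilon3}, so all the work lies in the constraint inequalities, and I would treat the state and action cases in parallel since they are structurally identical.

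First I would pin down the true trajectory generated by implementing $\bm M(K)$ at every stage. Combining the approximation bound $\|D_x[x_t^K-x_t^{\bm M(K)}]\|_\infty\le\epsilon_3$ from Lemma~\ref{lem: define M_ap and epsilon3} with the $\epsilon_0$-strict safety of $K$ (so $D_x x_t^K\le d_x-\epsilon_0\one_{k_x}$ by Definition~\ref{def: strictly feasible policy}) gives, componentwise and for every disturbance sequence,
\[
D_{x,i}^\top x_t^{\bm M(K)}\le d_{x,i}-(\epsilon_0-\epsilon_3),\qquad D_{u,j}^\top u_t^{\bm M(K)}\le d_{u,j}-(\epsilon_0-\epsilon_3).
\]
The point here is that I use the full approximation bound rather than the weaker ``$\epsilon_3$-loosely safe'' conclusion, because it is precisely the strict margin $\epsilon_0$ of $K$ that I want $\bm M(K)$ to inherit.

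Next I would connect these true-state bounds to the decoupled quantities $\mathring g$. Specializing Proposition~\ref{prop: approx} to the constant policy $\bm M_{t-H:t}=\bm M(K)$, the exact identity $x_{t+1}^{\bm M(K)}=\AK^H x_{t-H+1}^{\bm M(K)}+\tilde x_{t+1}$ lets me write, for any fixed disturbances,
\[
D_{x,i}^\top \tilde x_{t+1}=D_{x,i}^\top x_{t+1}^{\bm M(K)}-D_{x,i}^\top\AK^H x_{t-H+1}^{\bm M(K)}\le \big(d_{x,i}-(\epsilon_0-\epsilon_3)\big)+\epsilon_1,
\]
where the tail term is controlled by $\|D_x\AK^H x_{t-H+1}\|_\infty\le\epsilon_1$ from Lemma~\ref{lem: bdd xt-H part} (applicable since $\bm M(K)\in\M$ and $H\ge\frac{\log(2\kappa^2)}{\log((1-\gamma)^{-1})}$). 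Taking the supremum over disturbances yields $\mathring g^x_i(\bm M(K))=\sup_{\{w_k\}}D_{x,i}^\top\tilde x_{t+1}\le d_{x,i}-(\epsilon_0-\epsilon_1-\epsilon_3)$, and the identical computation with $\tilde u_t=u_t^{\bm M(K)}+\Kb\AK^H x_{t-H}^{\bm M(K)}$ and the companion bound of Lemma~\ref{lem: bdd xt-H part} gives $\mathring g^u_j(\bm M(K))\le d_{u,j}-(\epsilon_0-\epsilon_1-\epsilon_3)$. Together with $\bm M(K)\in\M$, this is exactly the definition of $\bm M(K)\in\Omega_{\epsilon_0-\epsilon_1-\epsilon_3}$.

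The one step needing genuine care — the main obstacle — is the interchange between the supremum hidden in the definition of $\mathring g^x_i$ and the worst-case bound on the true state: the definition of $\mathring g$ optimizes only over the finite disturbance window appearing in $\tilde x_{t+1}$, whereas the trajectory bound on $x_{t+1}^{\bm M(K)}$ holds over the entire history. I would resolve this by noting that the pointwise inequality above holds for \emph{every} disturbance realization, so it survives taking the supremum over exactly those disturbances on which $\tilde x_{t+1}$ actually depends, with the remaining (earlier) disturbances irrelevant to the left-hand side. Everything else reduces to the routine bookkeeping of adding the error budgets $\epsilon_1$ and $\epsilon_3$.
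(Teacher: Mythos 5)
Your proposal is correct and follows essentially the same route as the paper's own proof: combine the trajectory approximation bound of Lemma~\ref{lem: define M_ap and epsilon3} with the $\epsilon_0$-strict safety of $K$ to bound the true trajectory of $\bm M(K)$, then peel off the $\AK^H x_{t-H}$ tail via Proposition~\ref{prop: approx} and Lemma~\ref{lem: bdd xt-H part}, and finally take the supremum over disturbances (using time-invariance of $\bm M(K)$) to obtain the $\mathring g^x_i$ and $\mathring g^u_j$ bounds defining $\Omega_{\epsilon_0-\epsilon_1-\epsilon_3}$. Your explicit treatment of the supremum interchange is a point the paper leaves implicit, but it is the same argument.
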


\begin{proof}[Proof of Theorem \ref{thm: feasibility}] For notational simplicity, we denote the states and actions generated by OGD-BZ as $x_t$ and $u_t$ in this proof.
    First, we show  $\bm M(K_*)\in \Omega_\epsilon$ below. 
Since $K_*$ defined in Assumption \ref{ass: K strictly feasible} is $\epsilon_*$-strictly safe,  by Corollary \ref{cor: K's M in Omega},  there exists $\bm M(K_*)\in \Omega_{\epsilon_*-\epsilon_1-\epsilon_3}$. 
Since the set $\Omega_\epsilon$ is smaller as $\epsilon$ increases, 
when $\epsilon_*-\epsilon_1-\epsilon_3\geq \epsilon$, we have $\bm M(K_*) \in \Omega_{\epsilon_*-\epsilon_1-\epsilon_3}\subseteq \Omega_\epsilon$, so  $\Omega_\epsilon$ is non-empty.

Next, we prove the safety   by  Lemma \ref{lem: bdd xt-H part} and Lemma \ref{lem: g(Mt:t-H) to g(Mt) bdd error} based on the discussions in the previous section. Specifically, OGD-BZ guarantees that $\bm M_t\in \Omega_\epsilon$ for all $t$. Thus, by Lemma \ref{lem: g(Mt:t-H) to g(Mt) bdd error}, we have $g_i^x(\bm M_{t-H:t-1})= g_i^x(\bm M_{t-H:t-1})- \mathring g_i^x(\bm M_{t-1})+ \mathring g_i^x(\bm M_{t-1}) \leq \epsilon_2+d_{x,i}-\epsilon$ for any $i
$.   Further, by Step 2 of the previous section and Lemma \ref{lem: bdd xt-H part}, we have $D_{x,i}^\top x_t =  D_{x,i}^\top \AK^H x_{t-H}\!+ \! D_{x,i}^\top \tilde x_{t}\leq \|D_x \AK^H x_{t-H}\|_\infty + g_i^x(\bm M_{t-H:t-1}) \leq \epsilon_1 +\epsilon_2+d_{x,i}-\epsilon \leq d_{x,i}$  for any $\{w_k\in \W\}_{k=0}^T$ if $\epsilon\geq \epsilon_1+\epsilon_2$. Therefore, $x_t\in \X$ for all $w_k\in \W$. Similarly, we can show $u_t\in \U$ for any $w_k\in \W$. Thus, OGD-BZ is safe.
\end{proof}

\noindent\textit{Proof of Remark \ref{remark: infinite horizon safe}.}
When implementing  $\bm M\in \Omega_\epsilon$ for an infinite horizon, we have $\mathring g_i^x(\bm M)=g_i^x(\bm M, \dots, \bm M)\leq d_{x,i}-\epsilon$. Since Proposition \ref{prop: approx} holds for any $t\geq 0$, we still have $D_{x,i}^\top x_t =  D_{x,i}^\top \AK^H x_{t-H}\!+ \! D_{x,i}^\top \tilde x_{t}\leq \|D_x \AK^H x_{t-H}\|_\infty + g_i^x(\bm M,\dots, \bm M) \leq \epsilon_1 +d_{x,i}-\epsilon \leq d_{x,i}$  for any $\{w_k\in \W\}_{k=0}^T$ if $\epsilon\geq \epsilon_1+\epsilon_2$. Thus, $x_t\in \X$ for all $w_k\in \W$ for $t\geq 0$. Constraint satisfaction of $u_t$ can be proved similarly.%, we can show $u_t\in \U$ for any $w_k\in \W$ for $t\geq 0$. %Thus, OGD-BZ is safe.

%Notice that the second paragraph of the proof of Theorem \ref{thm: feasibility} above also holds for an infinite horizon

\subsection{Proof of Theorem \ref{thm: regret bdd}}

We divide the regret into three parts and bound each part.
\begin{align*}
&\ \text{Reg}(\text{OGD-BZ})=  \  J_T(\A)-\min_{K \in \K} J_T(K)\\
=& \ \underbrace{J_T(\A)- \sum_{t=0}^T \mathring f_t(\bm M_t)}_{\text{Part i}}+ \underbrace{\sum_{t=0}^T \mathring f_t(\bm M_t)- \min_{\bm M\in \Omega_\epsilon}\sum_{t=0}^T \mathring f_t(\bm M)}_{\text{Part ii}}\\
& \ +\underbrace{ \min_{\bm M\in \Omega_\epsilon} \sum_{t=0}^T\mathring f_t(\bm M)-\min_{K \in \K} J_T(K)}_{\text{Part iii}}
\end{align*}

\nbf{Bound on Part ii.} Firstly, we bound Part ii based on the regret bound of OGD in the literature \cite{hazan2019introduction}.
\begin{lemma}\label{lem: part ii}
	With a constant stepsize $\eta$, we have
$
	\textup{Part ii} \leq \delta^2/{2\eta}+ {\eta}G_f^2 T/2
$,
	where $\delta= 	\sup_{\bm M, \tilde{\bm M} \in \Omega_\epsilon} \|\bm M-\tilde{\bm M}\|_F \leq  4 \sqrt{mn} \kappa^3/\gamma$ and $G_f=\max_{t}\sup_{\bm M\in \Omega_\epsilon}\|\nabla \mathring f_t(\bm M)\|_F\leq\Theta( Gb(1+\kappa)\sqrt n \bar w\kappa^2\kappa_B \sqrt H \frac{1+\gamma}{\gamma} )$. Consequently, when $H\geq \frac{\log(2\kappa^2)}{\log((1-\gamma)^{-1})}$, we have $G_f\leq \Theta(\sqrt{n^3H^3 m})$ and the hidden factor is quadratic on $\bar w$.
\end{lemma}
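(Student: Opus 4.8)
The plan is to read Part ii as the static (external) regret of projected online gradient descent run on the convex surrogate losses $\mathring f_t$ over the convex compact set $\Omega_\epsilon$, and then to supply the two problem-specific quantities that the textbook OGD analysis needs: the diameter of $\Omega_\epsilon$ and a uniform bound on the surrogate gradients. Since each $\mathring f_t$ is convex in $\bm M$, $\Omega_\epsilon$ is convex, and the update in Algorithm \ref{alg:ogd} is exactly $\bm M_{t+1}=\Pi_{\Omega_\epsilon}[\bm M_t-\eta\nabla\mathring f_t(\bm M_t)]$, the standard constant-stepsize guarantee \cite{hazan2019introduction} applies verbatim and yields $\textup{Part ii}\le \delta^2/(2\eta)+\eta G_f^2 T/2$ for the diameter $\delta=\sup_{\bm M,\tilde{\bm M}\in\Omega_\epsilon}\|\bm M-\tilde{\bm M}\|_F$ and gradient bound $G_f=\max_t\sup_{\bm M\in\Omega_\epsilon}\|\nabla\mathring f_t(\bm M)\|_F$. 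It then remains only to establish the two stated numerical bounds.

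For the diameter, I would use $\Omega_\epsilon\subseteq\M$ together with the block structure of $\bm M=\{M^{[i]}\}_{i=1}^H$. Each block of an element of $\M$ satisfies $\|M^{[i]}\|_\infty\le 2\sqrt n\kappa^3(1-\gamma)^{i-1}$; bounding the Frobenius norm of a matrix by the max-row-sum norm times $\sqrt m$ (each of the $m$ rows has $\ell_2\le\ell_1$) gives $\|M^{[i]}\|_F\le 2\sqrt{mn}\kappa^3(1-\gamma)^{i-1}$, hence $\|M^{[i]}-\tilde M^{[i]}\|_F\le 4\sqrt{mn}\kappa^3(1-\gamma)^{i-1}$. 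Using $\|\bm M-\tilde{\bm M}\|_F=\sqrt{\sum_{i=1}^H\|M^{[i]}-\tilde M^{[i]}\|_F^2}\le\sum_{i=1}^H\|M^{[i]}-\tilde M^{[i]}\|_F$ and summing the geometric series $\sum_{i\ge 1}(1-\gamma)^{i-1}\le 1/\gamma$ delivers $\delta\le 4\sqrt{mn}\kappa^3/\gamma$.

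The crux is the gradient bound $G_f$. By Proposition \ref{prop: approx}, when all historical policies are set equal to $\bm M$ the surrogates $\tilde x_t=\sum_{k=1}^{2H}\mathring\Phi_k^x(\bm M)w_{t-k}$ and $\tilde u_t=\sum_{k=1}^{2H}\mathring\Phi_k^u(\bm M)w_{t-k}$ are linear in $\bm M$, so by the chain rule $\nabla_{\bm M}\mathring f_t(\bm M)=\E[J_x^\top\nabla_x c_t(\tilde x_t,\tilde u_t)+J_u^\top\nabla_u c_t(\tilde x_t,\tilde u_t)]$ with $J_x,J_u$ the (constant) Jacobians of $\tilde x_t,\tilde u_t$ in $\bm M$. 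Two ingredients are needed. First, a uniform magnitude bound $\|\tilde x_t\|_2,\|\tilde u_t\|_2\le b$: this follows exactly as in Lemma \ref{lem: bdd on xt}, since $\tilde x_t$ is a truncation of the same bounded disturbance expansion, and then Assumption \ref{ass: bounded Hessian largest evalue} gives $\|\nabla_x c_t\|_2,\|\nabla_u c_t\|_2\le Gb$. Second, a Jacobian bound: differentiating $\mathring\Phi_k^x$ in a block $M^{[j]}$ produces factors $\AK^{i-1}B$ with $\|\AK^{i-1}\|_2\le\kappa^2(1-\gamma)^{i-1}$ (from strong stability $\AK=Q^{-1}LQ$, Definition \ref{def: diagonally strongly stable}), $\|B\|_2\le\kappa_B$, each paired with a disturbance of norm $\|w\|_2\le\sqrt n\bar w$; the direct term $\sum_i M^{[i]}w_{t-i}$ of $\tilde u_t$ contributes $w_{t-j}$ to the $j$th block derivative and the $-\Kb\tilde x_t$ term contributes the extra factor $(1+\kappa)$.

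Assembling these, the $\AK$-power sums collapse to the $\tfrac{1+\gamma}{\gamma}$ factor, while the non-decaying direct term in $\tilde u_t$ contributes one bounded summand per block, so $\|\nabla\mathring f_t\|_F=\sqrt{\sum_{j=1}^H\|\partial\mathring f_t/\partial M^{[j]}\|_F^2}$ accrues a single $\sqrt H$. This gives $G_f\le\Theta(Gb(1+\kappa)\sqrt n\bar w\kappa^2\kappa_B\sqrt H\,\tfrac{1+\gamma}{\gamma})$; substituting $b\le 8\sqrt{mn^2}H\bar w\kappa^6\kappa_B/\gamma$ from Lemma \ref{lem: bdd on xt} reduces the dimension/horizon dependence to $\Theta(\sqrt{n^3H^3m})$, with two factors of $\bar w$ (one in $b$, one explicit), i.e. quadratic in $\bar w$. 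I expect the main obstacle to be the Jacobian bookkeeping in the previous paragraph: correctly identifying which blocks $M^{[j]}$ each $\mathring\Phi_k^x,\mathring\Phi_k^u$ depends on and organizing the double sum over $k\le 2H$ and the $\AK$-powers so that the geometric decay is exploited and only a single $\sqrt H$---rather than $H$---survives.
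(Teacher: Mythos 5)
Your proposal is correct, and its skeleton --- reduce Part ii to the textbook constant-stepsize OGD bound over the convex set $\Omega_\epsilon$, then supply the diameter $\delta$ and the gradient bound $G_f$ --- is exactly the paper's. The diameter argument is identical (use $\Omega_\epsilon\subseteq\M$, convert $\|\cdot\|_\infty$ to Frobenius via a $\sqrt m$ factor, sum the geometric series). Where you diverge is in how $G_f$ is obtained. The paper (Lemma~\ref{lem: ft(Mt) property}) first proves a Lipschitz-type estimate on $f_t$ in the policy parameters, by bounding $\|\tilde x_t-\dbtilde x_t\|_2$ and $\|\tilde u_t-\dbtilde u_t\|_2$ for two policy sequences and invoking Assumption~\ref{ass: bounded Hessian largest evalue}, and then converts that Lipschitz constant into a gradient bound through the first-order convexity inequality $\langle\nabla\mathring f_t(\bm M),\Delta\bm M\rangle\leq \mathring f_t(\bm M+\Delta\bm M)-\mathring f_t(\bm M)$, taking a supremum of difference quotients over an enlarged set $\M_{out,H}$ whose interior contains $\M$. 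You instead differentiate $\mathring f_t$ directly via the chain rule, exploiting the affinity of $(\tilde x_t,\tilde u_t)$ in $\bm M$, and bound the Jacobian blocks. Both routes rest on the same bookkeeping (the $\AK^{i-1}B$ factors with geometric decay, the bound $\|\tilde x_t\|_2,\|\tilde u_t\|_2\leq b$ from Lemma~\ref{lem: bdd on xt ut xt tilde ut tilde}, and a single $\sqrt H$ from Cauchy--Schwarz on $\sum_{j=1}^H\|\Delta M^{[j]}\|_F$), and both land on the same $\Theta\bigl(Gb(1+\kappa)\sqrt n\bar w\kappa^2\kappa_B\sqrt H\tfrac{1+\gamma}{\gamma}\bigr)$ with quadratic $\bar w$ dependence after substituting $b$. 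Your chain-rule version is slightly more elementary in that it needs neither convexity of $c_t$ nor the enlarged set (only differentiation under the expectation, which boundedness justifies); the paper's Lipschitz lemma, on the other hand, does double duty, since the same estimate is reused to control $|f_t(\bm M_{t-H:t})-\mathring f_t(\bm M_t)|$ in Part i and to prove the slow-variation bound $\epsilon_2(\eta,H)$, which is why the paper packages it that way.
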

 The  proof details are provided in \cite{li2020online}.

\vspace{1pt}

\nbf{Bound on Part iii.}  For notational simplicity, we denote 
$\bm M^*=\argmin_{\Omega_\epsilon}\sum_{t=0}^T\mathring f_t(\bm M), K^*=\argmin_{\K} J_T(K).$
By Lemma \ref{lem: define M_ap and epsilon3}, we can construct a loosely safe $\bm M_{\text{ap}}=\bm M(K^*)$ to approximate $K^*$.  By Corollary \ref{cor: K's M in Omega}, we have
\begin{align}\label{equ: M_ap non feasible}
    \bm M_{\text{ap}} \in \Omega_{-\epsilon_1-\epsilon_3}.
\end{align} 
We will bound Part iii by leveraging $\bm M_{\text{ap}}$ as middle-ground and bounding the Part iii-A and Part iii-B defined below.
\begin{align*}
    \text{Part iii}=\! \underbrace{\sum_{t=0}^T \!(\mathring f_t(\bm M^*\!)\!-\!  \mathring f_t(\bm M_{\text{ap}}))}_{\text{Part iii-A}}\!+\!\underbrace{\sum_{t=0}^T \!\mathring f_t(\bm M_{\text{ap}})\!-\!J_T(K^*\!)}_{\text{Part iii-B}}
\end{align*}

\begin{lemma}\label{lem: part iii second half}
		Consider  $K^* \in \K$ and $\bm M_{\text{ap}}=\bm M(K^*)$, then
$
	\textup{Part iii-B}\leq \Theta(Tn^2 m H^2(1-\gamma)^H).
$
\end{lemma}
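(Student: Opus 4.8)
The plan is to rewrite Part iii-B as a sum of per-stage discrepancies and bound each one by the gap between the \emph{surrogate} trajectory generated by the constant policy $\bm M_{\text{ap}}=\bm M(K^*)$ and the \emph{true} trajectory of the linear controller $K^*$. By the definition of $\mathring f_t$ in \eqref{equ: mathring f_t(M_t)} and of $J_T$,
\begin{align*}
\text{Part iii-B}=\sum_{t=0}^T\E_{\{w_k\}}\!\left[c_t(\tilde x_t,\tilde u_t)-c_t(x_t^{K^*},u_t^{K^*})\right],
\end{align*}
where $\tilde x_t=\sum_{k=1}^{2H}\mathring\Phi_k^x(\bm M_{\text{ap}})w_{t-k}$ and $\tilde u_t=\sum_{k=1}^{2H}\mathring\Phi_k^u(\bm M_{\text{ap}})w_{t-k}$ are the surrogate state and action at the constant policy $\bm M_{\text{ap}}$, and $(x_t^{K^*},u_t^{K^*})$ is the trajectory of $u_t=-K^*x_t$. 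First I would show that both points lie in a common ball so that Assumption \ref{ass: bounded Hessian largest evalue} applies: since $\bm M_{\text{ap}}\in\M$, Lemma \ref{lem: bdd on xt} bounds $\|x_t^{\bm M_{\text{ap}}}\|_2\le b$, whence $\|\tilde x_t\|_2\le 2b$ via Proposition \ref{prop: approx}, while $K^*\in\K$ being $(\kappa,\gamma)$-strongly stable gives $\|x_t^{K^*}\|_2\le \kappa^2\sqrt n\bar w/\gamma\le b$ directly. Convexity and the gradient bound of Assumption \ref{ass: bounded Hessian largest evalue} then yield, for every disturbance sequence,
\begin{align*}
|c_t(\tilde x_t,\tilde u_t)-c_t(x_t^{K^*},u_t^{K^*})|\le 2Gb\left(\|\tilde x_t-x_t^{K^*}\|_2+\|\tilde u_t-u_t^{K^*}\|_2\right).
\end{align*}

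The core of the argument is to bound the state and action gaps by $\Theta((1-\gamma)^H\cdot\mathrm{poly})$, which I would do in two steps through the true trajectory $x_t^{\bm M_{\text{ap}}}$ of $\bm M_{\text{ap}}$. For the first step, Proposition \ref{prop: approx} applied to the constant policy gives the exact identity $\tilde x_t-x_t^{\bm M_{\text{ap}}}=-\AK^H x_{t-H}^{\bm M_{\text{ap}}}$, so strong stability of $\AK$ and Lemma \ref{lem: bdd on xt} give $\|\tilde x_t-x_t^{\bm M_{\text{ap}}}\|_2\le\kappa^2(1-\gamma)^H b=\Theta(n\sqrt m H(1-\gamma)^H)$; this is the dominant term and carries the $n\sqrt m H$ factor coming from the uniform state bound $b$. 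For the second step, the gap $\|x_t^{\bm M_{\text{ap}}}-x_t^{K^*}\|_2$ between the two true trajectories is the raw quantity that the proof of Lemma \ref{lem: define M_ap and epsilon3} controls before projecting onto $D_x,D_u$, and is of the smaller order $\Theta(\sqrt n(1-\gamma)^H)$. The reason the second gap is pure tail is the clean identity $\mathring\Phi_k^x(\bm M_{\text{ap}})=(A-BK^*)^{k-1}$ for all $k\le H$, which follows by expanding $(A-BK^*)^{k-1}=((A-B\Kb)+B(\Kb-K^*))^{k-1}$ and matching it term by term with the definition of $\Phi_k^x$ under $M^{[i]}(K^*)=(\Kb-K^*)(A-BK^*)^{i-1}$. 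The action gaps $\|\tilde u_t-u_t^{K^*}\|_2$ are handled identically using $\mathring\Phi_k^u$ and $u_t^{K^*}=-K^*x_t^{K^*}$, and obey the same order.

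Combining the two ingredients, each per-stage term is at most $\Theta(Gb\cdot n\sqrt m H(1-\gamma)^H)$; substituting $Gb=\Theta(n\sqrt m H)$ (from Lemma \ref{lem: bdd on xt}) collapses this to $\Theta(n^2 m H^2(1-\gamma)^H)$, and summing the $(T+1)$ identical bounds over $t=0,\dots,T$ produces the claimed $\Theta(Tn^2 m H^2(1-\gamma)^H)$. Taking expectations is immediate because every bound above holds deterministically for all $\{w_k\in\W\}$.

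I expect the main obstacle to be the polynomial/dimension bookkeeping rather than any conceptual difficulty: one must invoke the \emph{loose} uniform bound $b=\Theta(n\sqrt m H)$ of Lemma \ref{lem: bdd on xt} in the first step (the tighter actual magnitude of $x_t^{\bm M_{\text{ap}}}$ is only $\Theta(\sqrt n)$, which would give a smaller but differently-shaped bound) so as to land exactly on the stated $n^2 m H^2$ order, and one must track the norm conversions between the $\|\cdot\|_\infty$ used to define $\M$ and the $\|\cdot\|_2$ used in the gradient bound, together with the factors $\|\Kb-K^*\|_2\le 2\kappa$ and $\|B\|_2\le\kappa_B$ hidden in the constant. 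Verifying that the gradient-domain ball of radius $2b$ contains the whole segment between $(\tilde x_t,\tilde u_t)$ and $(x_t^{K^*},u_t^{K^*})$ is the only place requiring mild care, and is guaranteed by $\kappa^2(1-\gamma)^H\le\tfrac12$ under $H\ge\log(2\kappa^2)/\log((1-\gamma)^{-1})$.
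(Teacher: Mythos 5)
Your proposal is correct and follows essentially the same route as the paper: the paper also splits Part iii-B through the true trajectory of $\bm M_{\text{ap}}$ as middle ground, bounding $\sum_t \mathring f_t(\bm M_{\text{ap}})-J_T(\bm M_{\text{ap}})$ via the surrogate-state error $\AK^H x_{t-H}$ (its Lemma on the difference between $J_T$ and $\sum_t f_t$, giving the dominant $Gb^2(1-\gamma)^H$ term per stage) and $J_T(\bm M_{\text{ap}})-J_T(K^*)$ via the trajectory gap $\Theta(\sqrt n\,(1-\gamma)^H)$ from the proof of Lemma \ref{lem: define M_ap and epsilon3}. Your per-stage bookkeeping, use of the uniform bound $b=\Theta(n\sqrt m H)$, and identification of which term dominates all match the paper's argument.
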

\begin{lemma}\label{lem: part iii first half}
		Under the conditions in Theorem \ref{thm: regret bdd}, we have
	\begin{align*}
	\textup{Part iii-A}&\leq \Theta\left(\! (\epsilon_1+\epsilon_3+\epsilon)T H^{\frac{3}{2}}\frac{{n^{{2}}m}\!+\!\sqrt{k_cm n^3}}{\epsilon_*}\!\right).%\\
	\end{align*}
\end{lemma}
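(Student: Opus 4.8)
The plan is to exploit the optimality of $\bm M^*$ over the restricted set $\Omega_\epsilon$. The obstruction is that $\bm M_{\text{ap}}$ need not lie in $\Omega_\epsilon$: by Corollary \ref{cor: K's M in Omega} it only belongs to the looser set $\Omega_{-\epsilon_1-\epsilon_3}$, so I cannot directly write $\sum_t \mathring f_t(\bm M^*)\le \sum_t \mathring f_t(\bm M_{\text{ap}})$. My first step is therefore to build a feasible surrogate $\bm M_\theta\in\Omega_\epsilon$ that stays close to $\bm M_{\text{ap}}$, by pulling $\bm M_{\text{ap}}$ toward the strictly safe interior point supplied by Assumption \ref{ass: K strictly feasible}.

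Concretely, let $\bm M_\star=\bm M(K_*)$ be the disturbance-action policy associated with the $\epsilon_*$-strictly safe $K_*$; by Corollary \ref{cor: K's M in Omega} we have $\bm M_\star\in\Omega_{\epsilon_*-\epsilon_1-\epsilon_3}$, while $\bm M_{\text{ap}}\in\Omega_{-\epsilon_1-\epsilon_3}$. Set $\theta=(\epsilon+\epsilon_1+\epsilon_3)/\epsilon_*\in[0,1]$ (the bound $\theta\le 1$ is exactly the nonemptiness condition $\epsilon\le\epsilon_*-\epsilon_1-\epsilon_3$ from Theorem \ref{thm: feasibility}) and define $\bm M_\theta=(1-\theta)\bm M_{\text{ap}}+\theta\bm M_\star$. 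Since each $\mathring g^x_i$ and $\mathring g^u_j$ is a sum of $L_1$ norms of affine functions of $\bm M$ (Proposition \ref{prop: approx}) and hence convex, I bound, for every $x$-constraint,
\begin{align*}
\mathring g^x_i(\bm M_\theta) &\le (1-\theta)\mathring g^x_i(\bm M_{\text{ap}})+\theta\,\mathring g^x_i(\bm M_\star)\\
&\le (1-\theta)(d_{x,i}+\epsilon_1+\epsilon_3)+\theta\big(d_{x,i}-(\epsilon_*-\epsilon_1-\epsilon_3)\big)\\
&= d_{x,i}+(\epsilon_1+\epsilon_3)-\theta\epsilon_*=d_{x,i}-\epsilon,
\end{align*}
and identically for the $u$-constraints; together with $\bm M_\theta\in\M$ (convexity of $\M$ and $\bm M_{\text{ap}},\bm M_\star\in\M$ by Lemma \ref{lem: define M_ap and epsilon3}), this gives $\bm M_\theta\in\Omega_\epsilon$.

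With feasibility in hand, optimality of $\bm M^*$ yields $\sum_t\mathring f_t(\bm M^*)\le\sum_t\mathring f_t(\bm M_\theta)$, so $\text{Part iii-A}\le\sum_{t=0}^T\big(\mathring f_t(\bm M_\theta)-\mathring f_t(\bm M_{\text{ap}})\big)$. I then control each summand by convexity of $\mathring f_t$, which gives $\mathring f_t(\bm M_\theta)-\mathring f_t(\bm M_{\text{ap}})\le\theta\big(\mathring f_t(\bm M_\star)-\mathring f_t(\bm M_{\text{ap}})\big)\le\theta\,G_f\,\|\bm M_\star-\bm M_{\text{ap}}\|_F$, where $G_f=\Theta(\sqrt{n^3 m H^3})$ is the Lipschitz bound from Lemma \ref{lem: part ii} (valid over $\M$, which contains the whole segment). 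The leading contribution is $\theta\,G_f\,\delta$ with the diameter $\delta\le 4\sqrt{mn}\kappa^3/\gamma$; summing over $t$ and substituting $\theta$ produces exactly the $n^2m$ factor, since $G_f\cdot\delta=\Theta(n^2 m H^{3/2})$.

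The main obstacle is twofold. First, the feasibility verification rests entirely on convexity of the $\mathring g$'s and on the strict margin $\epsilon_*$; this is precisely where Assumption \ref{ass: K strictly feasible} enters, and one must check $\theta\le 1$ using the conditions of Theorem \ref{thm: feasibility}. Second, the stated bound carries a $\sqrt{k_c\,m\,n^3}$ term in addition to $n^2m$, which the crude Lipschitz-times-diameter estimate does not capture; recovering it requires a finer, term-by-term estimate of $\mathring f_t(\bm M_\theta)-\mathring f_t(\bm M_{\text{ap}})$ that tracks the displacement along the $k_c$ constraint directions and accounts for the $L_1$-to-$L_2$ norm conversions embedded in the definitions of the surrogate state and action in Proposition \ref{prop: approx}. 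I expect this dimension bookkeeping to be the most delicate part, whereas the conceptual core — the feasible-surrogate construction plus the optimality comparison — is straightforward.
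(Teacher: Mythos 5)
Your proof is correct, and it takes a genuinely different---in fact cleaner---route than the paper. The paper bounds Part iii-A by comparing $\bm M^*$ against $\bm M^\dagger=\argmin_{\Omega_{-\epsilon_1-\epsilon_3}}\sum_{t=0}^T\mathring f_t(\bm M)$, then proves a generic polytope-perturbation result (Proposition \ref{prop: perturbation}) and instantiates it on a lifted polytope $\Gamma_\epsilon$ in which auxiliary variables linearize the $L_1$ norms defining $\mathring g^x_i,\mathring g^u_j$ (Lemma \ref{lem: diameter and L bound}). The geometric core of Proposition \ref{prop: perturbation} is exactly the move you make---a convex combination with the strictly feasible anchor supplied by $K_*$---but because the paper executes it in the lifted space, the diameter of the lifted polytope picks up a $\Theta(\sqrt{k_c})$ contribution from the auxiliary variables, and that is precisely where the $\sqrt{k_c m n^3}$ term in the stated bound originates. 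You instead interpolate directly in policy space, $\bm M_\theta=(1-\theta)\bm M_{\text{ap}}+\theta\,\bm M(K_*)$ with $\theta=(\epsilon+\epsilon_1+\epsilon_3)/\epsilon_*$, and verify $\bm M_\theta\in\Omega_\epsilon$ from convexity of the $\mathring g$'s (sums of $L_1$ norms of affine maps) and of $\M$; the condition $\theta\le 1$ is exactly the nonemptiness condition of Theorem \ref{thm: feasibility}, which is in force. This avoids the lifting altogether and yields $\textup{Part iii-A}\leq\Theta\bigl((\epsilon+\epsilon_1+\epsilon_3)TH^{3/2}n^{2}m/\epsilon_*\bigr)$. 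What the paper's route buys is a reusable perturbation proposition requiring only polytopic structure, with no convexity of the constraint functions in the original coordinates; what yours buys is a shorter argument and a strictly tighter constant.

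One correction to your closing paragraph: the worry there is unfounded. You do not need to ``recover'' the $\sqrt{k_c m n^3}$ term, because the lemma asserts an upper bound and your estimate is dominated by the stated one, so it proves the lemma as written. The $\sqrt{k_c}$ factor is an artifact of the paper's lifted-polytope bookkeeping (the diameter bound $\delta_1=\Theta(\sqrt{mn}+\sqrt{k_c})$ in Lemma \ref{lem: diameter and L bound}), not something intrinsic to Part iii-A; no finer term-by-term analysis is needed.
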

We highlight that  $\bm M_{\text{ap}}$ may not belong to $\Omega_{\epsilon}$ by \eqref{equ: M_ap non feasible}. Therefore, even though $\bm M^*$ is optimal in $\Omega_{\epsilon}$, Part iii-A can still be positive and has to be bounded to yield a regret bound. This is  different from the unconstrained online  control literature \cite{agarwal2019logarithmic}, where Part iii-A is non-positive because $\bm M_{\text{ap}}\in \mathcal M$ and $\bm M^*$ is  optimal  in the same set $\mathcal M$ when there are no constraints (see \cite{agarwal2019logarithmic} for more details). 

\vspace{1pt}

\nbf{Bound on Part i.}  Finally, we provide a bound on Part i.
\begin{lemma}\label{lem: part i}
	Apply Algorithm \ref{alg:ogd} with constant stepsize $\eta$, then
$
	\textup{Part i} \!\leq\! O(T n^2m H^2(1-\gamma)^H\!\! +\!n^3 m H^3\eta T  )
$.
\end{lemma}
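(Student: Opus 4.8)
The plan is to express Part i as a single expected telescoping sum and interpolate through the genuine surrogate trajectory. Writing $\tilde x_t^\circ=\sum_{k=1}^{2H}\mathring\Phi_k^x(\bm M_t)w_{t-k}$ and $\tilde u_t^\circ=\sum_{k=1}^{2H}\mathring\Phi_k^u(\bm M_t)w_{t-k}$ for the surrogate state and action in which the entire policy history is frozen at the current iterate $\bm M_t$, Proposition \ref{prop: approx} gives $\mathring f_t(\bm M_t)=\E[c_t(\tilde x_t^\circ,\tilde u_t^\circ)]$, so $\textup{Part i}=\sum_{t=0}^{T}\E[c_t(x_t,u_t)-c_t(\tilde x_t^\circ,\tilde u_t^\circ)]$. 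First I would insert the time-varying surrogate pair $(\tilde x_t,\tilde u_t)$ of Proposition \ref{prop: approx} and split each summand as
\[
\big[c_t(x_t,u_t)-c_t(\tilde x_t,\tilde u_t)\big]+\big[c_t(\tilde x_t,\tilde u_t)-c_t(\tilde x_t^\circ,\tilde u_t^\circ)\big],
\]
calling the two resulting sums Term A and Term B. Since every $\bm M_t$ lies in $\M$, Lemma \ref{lem: bdd on xt} bounds all the relevant states by $b$, and Assumption \ref{ass: bounded Hessian largest evalue} makes $c_t$ $O(Gb)$-Lipschitz on the ball of radius $O(b)$; all bounds below are pointwise in the disturbance realization, so passing them through the expectations is harmless.

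For Term A I would use the exact identities $x_t-\tilde x_t=\AK^H x_{t-H}$ and $u_t-\tilde u_t=-\Kb\AK^H x_{t-H}$ from Proposition \ref{prop: approx}. Strong stability gives $\|\AK^H\|_2\le\kappa^2(1-\gamma)^H$ and $\|\Kb\|_2\le\kappa$, so the $O(Gb)$-Lipschitz bound yields $|c_t(x_t,u_t)-c_t(\tilde x_t,\tilde u_t)|=O\big(Gb^2(1+\kappa)\kappa^2(1-\gamma)^H\big)$. Substituting $b=\Theta(n\sqrt m\,H)$ from Lemma \ref{lem: bdd on xt} makes the per-stage bound $\Theta(n^2mH^2(1-\gamma)^H)$, and summing over the $T{+}1$ stages produces the first claimed term $\Theta(Tn^2mH^2(1-\gamma)^H)$.

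Term B is the memory error and is the main obstacle. The crucial step is to reorganize the difference by the policy lag rather than by the disturbance index: since $\tilde x_t$ is affine in the list $\bm M_{t-H:t-1}$, one obtains $\tilde x_t-\tilde x_t^\circ=\sum_{i=1}^{H}\AK^{i-1}B\sum_{j=1}^{H}\big(M_{t-i}^{[j]}-M_t^{[j]}\big)w_{t-i-j}$. For the inner tap-sum I would pass from an $\ell_1$ sum over taps to the Frobenius norm via Cauchy--Schwarz, $\sum_{j=1}^{H}\|M_{t-i}^{[j]}-M_t^{[j]}\|_2\le\sqrt H\,\|\bm M_{t-i}-\bm M_t\|_F$, which is exactly what produces $\sqrt H$ rather than $H$ here. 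The drift of the iterates is controlled by the projection in OGD-BZ: non-expansiveness together with $\bm M_t=\Pi_{\Omega_\epsilon}[\bm M_t]$ gives $\|\bm M_{t+1}-\bm M_t\|_F\le\eta\|\nabla\mathring f_t(\bm M_t)\|_F\le\eta G_f$, hence $\|\bm M_{t-i}-\bm M_t\|_F\le i\eta G_f$. Combining these with $\|\AK^{i-1}\|_2\le\kappa^2(1-\gamma)^{i-1}$, $\|B\|_2\le\kappa_B$, and $\|w\|_2\le\sqrt n\,\bar w$, the outer lag-sum telescopes against the geometric factor as $\sum_i i(1-\gamma)^{i-1}=O(\gamma^{-2})$, giving $\|\tilde x_t-\tilde x_t^\circ\|_2=O(\sqrt H\,\sqrt n\,\eta G_f)$ and, after one further $\|\Kb\|_2\le\kappa$, the same estimate for $\|\tilde u_t-\tilde u_t^\circ\|_2$. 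Applying the $O(Gb)$-Lipschitz bound and substituting $b=\Theta(n\sqrt m\,H)$ and $G_f=\Theta(\sqrt{n^3H^3m})$ from Lemma \ref{lem: part ii} collapses the per-stage bound to $\Theta(n^3mH^3\eta)$, and summing over $t$ gives the second claimed term $\Theta(n^3mH^3\eta T)$.

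The hard part is precisely this $\sqrt H$ accounting in Term B: bounding each tap difference by the full Frobenius drift $\|\bm M_{t-i}-\bm M_t\|_F$ would cost a factor $H$ instead of $\sqrt H$ and inflate the bound to $H^{3.5}$, so the Cauchy--Schwarz step over the $H$ taps, together with retaining the geometric decay of $\AK^{i-1}$ so that only $O(\gamma^{-2})$ survives from the lag sum, is what pins the exponent at $H^3$. Everything else---the Lipschitz reductions, the $\ell_2$/Frobenius norm conversions, and moving the pointwise bounds through the expectations defining $J_T(\A)$ and $\mathring f_t$---is routine given Lemmas \ref{lem: bdd on xt} and \ref{lem: part ii}.
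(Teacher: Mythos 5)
Your proof is correct and follows essentially the same route as the paper: your Term A is exactly the paper's bound on $|J_T(\bm M_{0:T})-\sum_t f_t(\bm M_{t-H:t})|$ via the identity $x_t-\tilde x_t=\AK^H x_{t-H}$, and your Term B reproduces the paper's bound on $|\sum_t f_t(\bm M_{t-H:t})-\sum_t \mathring f_t(\bm M_t)|$, where your inlined computation (Cauchy--Schwarz over the $H$ taps to get $\sqrt H\,\|\bm M_{t-i}-\bm M_t\|_F$, the OGD drift bound $i\eta G_f$, and the geometric lag sum $O(\gamma^{-2})$) is precisely the content of the paper's Lemma \ref{lem: ft(Mt) property} combined with the projection non-expansiveness argument. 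The only cosmetic difference is that you cite Lemma \ref{lem: bdd on xt} where the bounds on $u_t$, $\tilde x_t$, $\tilde u_t$ from the appendix version (Lemma \ref{lem: bdd on xt ut xt tilde ut tilde}) are also needed, but those follow from the same statement.
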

The proofs of Lemma \ref{lem: part iii second half} and Lemma \ref{lem: part i} are similar to those in \citet{agarwal2019logarithmic}.

Finally,  Theorem \ref{thm: regret bdd} can be proved by summing up the bounds on Part i, Part ii, Part iii-A, and Part iii-B in Lemmas \ref{lem: part ii}-\ref{lem: part i} and  only explicitly showing the highest order terms. 

\subsection{Proof of Lemma \ref{lem: part iii first half}}
We define
$ \bm M^\dagger=\argmin_{\Omega_{-\epsilon_1-\epsilon_3}} \sum_{t=0}^T \mathring f_t(\bm M)$.
By \eqref{equ: M_ap non feasible},  we have $\sum_{t=0}^T \mathring f_t(\bm M_{\text{ap}})\geq\sum_{t=0}^T \mathring f_t(\bm M^\dagger)$. Therefore, it suffices to bound
$
 \sum_{t=0}^T \mathring f_t(\bm M^*)-\sum_{t=0}^T \mathring f_t(\bm M^\dagger)
$, which can be viewed as the difference in the optimal values when perturbing the feasible/safe set from $\Omega_\epsilon$ to $\Omega_{-\epsilon_1-\epsilon_3}$. To bound Part iii-A, we establish a perturbation result  by leveraging the polytopic structure of  $\Omega_{\epsilon}$ and $\Omega_{-\epsilon_1-\epsilon_3}$.
\begin{proposition}\label{prop: perturbation}
	Consider two polytopes
	$\Omega_1=\{x: Cx \leq h\}$, $\Omega_2=\{x: Cx \leq h-\Delta\}$, where $\Delta_i\geq 0$ for all $i$. Consider a convex function $f(x)$ that is $L$-Lipschitz continuous on $\Omega_1$.  If $\Omega_1$ is bounded, i.e. $\sup_{x_1, x_1'\in \Omega_1}\|x_1-x_1'\|_2\leq \delta_1$ and if $\Omega_2$ is non-empty, i.e. there exists $\mathring x\in \Omega_2$, then 
	\begin{align}\label{equ: perturbation}
	|\min_{\Omega_1} f(x)- \min_{\Omega_2} f(x)| \leq  \frac{L\delta_1 \|\Delta\|_\infty}{\min_{\{i: \Delta_i>0\}} (h-C \mathring x)_i}.
	\end{align}
\end{proposition}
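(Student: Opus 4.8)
The plan is to exploit the nesting of the two polytopes together with a single interpolation argument. Since $\Delta_i \ge 0$ for every $i$, each defining inequality of $\Omega_2$ is at least as tight as the corresponding inequality of $\Omega_1$, so $\Omega_2 \subseteq \Omega_1$ and therefore $\min_{\Omega_1} f \le \min_{\Omega_2} f$. Consequently the absolute value on the left of \eqref{equ: perturbation} equals the one-sided gap $\min_{\Omega_2} f - \min_{\Omega_1} f \ge 0$, and it suffices to upper bound this quantity. Both minima are attained: $\Omega_1$ is closed and bounded by hypothesis, $\Omega_2$ is a closed subset of $\Omega_1$ hence also compact, and it is nonempty by assumption, while $f$ is continuous.

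The heart of the argument is to produce a point of $\Omega_2$ that is close to a minimizer of $f$ over $\Omega_1$. Let $x_1^* \in \argmin_{\Omega_1} f$ and consider the segment $x_\lambda = (1-\lambda)x_1^* + \lambda \mathring x$ for $\lambda \in [0,1]$, where $\mathring x \in \Omega_2$ is the given feasible point. Using $Cx_1^* \le h$ and $C\mathring x \le h - \Delta$ row by row gives $(Cx_\lambda)_i \le h_i - \lambda (h - C\mathring x)_i$, so the constraint $(Cx_\lambda)_i \le h_i - \Delta_i$ holds as soon as $\lambda (h - C\mathring x)_i \ge \Delta_i$. For rows with $\Delta_i = 0$ this is automatic, while for rows with $\Delta_i > 0$ it asks for $\lambda \ge \Delta_i / (h - C\mathring x)_i$; the denominator is positive since $\mathring x \in \Omega_2$ forces $(h - C\mathring x)_i \ge \Delta_i > 0$. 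Hence the choice $\lambda^* = \max_{\{i:\Delta_i>0\}} \Delta_i/(h - C\mathring x)_i$, which lies in $[0,1]$ precisely because each ratio is at most $1$, guarantees $x_{\lambda^*} \in \Omega_2$.

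It remains to convert closeness of the point into closeness of the objective. Bounding each ratio gives $\lambda^* \le \|\Delta\|_\infty / \min_{\{i:\Delta_i>0\}}(h - C\mathring x)_i$, and since $x_1^*, \mathring x \in \Omega_1$ the displacement satisfies $\|x_{\lambda^*} - x_1^*\|_2 = \lambda^* \|\mathring x - x_1^*\|_2 \le \lambda^* \delta_1$. The $L$-Lipschitz continuity of $f$ on $\Omega_1$ then yields
\[
\min_{\Omega_2} f - \min_{\Omega_1} f \le f(x_{\lambda^*}) - f(x_1^*) \le L\|x_{\lambda^*} - x_1^*\|_2 \le \frac{L\delta_1 \|\Delta\|_\infty}{\min_{\{i:\Delta_i>0\}}(h - C\mathring x)_i},
\]
which is exactly \eqref{equ: perturbation}. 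I do not expect a serious obstacle: the only points requiring care are identifying the correct interpolation weight $\lambda^*$ and verifying $\lambda^* \le 1$ (so that $x_{\lambda^*}$ genuinely lies on the segment inside $\Omega_1$, legitimizing the diameter bound), together with the harmless degenerate case $\Delta = 0$, in which $\Omega_1 = \Omega_2$ and both sides vanish.
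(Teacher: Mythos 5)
Your proof is correct and takes essentially the same route as the paper's: both move from the $\Omega_1$-minimizer $x_1^*$ along the segment toward the feasible point $\mathring x$ just far enough to land in $\Omega_2$, bound the displacement by $\delta_1 \|\Delta\|_\infty / \min_{\{i:\Delta_i>0\}}(h-C\mathring x)_i$, and finish with Lipschitz continuity. The only cosmetic difference is that you fix the conservative interpolation weight $\lambda^* = \max_{\{i:\Delta_i>0\}} \Delta_i/(h-C\mathring x)_i$ up front, whereas the paper works in $\mathring x$-centered coordinates and scales by the minimal feasible factor via a case analysis on which constraints are violated; both choices are bounded by the same quantity, so the final estimate is identical.
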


To prove Lemma \ref{lem: part iii first half}, it suffices to bound the quantities in \eqref{equ: perturbation}  for our problem and then plug them in \eqref{equ: perturbation}.

\begin{lemma}\label{lem: diameter and L bound}
	There exists an enlarged polytope $\Gamma_{\epsilon}=\{\vec W: C \vec W\leq h_\epsilon\}$ that is equivalent to $\Omega_{\epsilon}$ for any $\epsilon\in \R$, where $\vec W$ contains elements of $\bm M$ and  auxiliary variables.
	
	Further,  under the conditions of Theorem \ref{thm: feasibility}, (i) $\Gamma_{-\epsilon_1-\epsilon_3}$ is bounded by 
	$ \delta_1=\Theta(\sqrt{m n}+ \sqrt{k_c})$; (ii) $\sum_{t=0}^T \mathring f_t(\bm M)$   is Lipschitz continuous with $L=\Theta(T (nH)^{1.5}\sqrt m)$; (iii) the difference $\Delta$   between $\Gamma_{\epsilon}$ and $\Gamma_{-\epsilon_1-\epsilon_3}$ 
	satisfies $\|\Delta\|_\infty = \epsilon+\epsilon_1+\epsilon_3$; (iv) there exists ${\vec W^{\circ}}\in \Gamma_{\epsilon}$  s.t. $\min_{\{i:\Delta_i>0\}} (h_{(-\epsilon_1-\epsilon_3)}\!-\!C\vec W^{\circ})_i\geq \epsilon_*$.

\end{lemma}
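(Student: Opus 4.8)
The plan is to construct $\Gamma_\epsilon$ by the standard epigraph linearization of the $L_1$ norms appearing in $\mathring g^x_i$ and $\mathring g^u_j$, and then to verify the four quantitative claims one at a time. First I would recall that each $\mathring\Phi^x_s(\bm M)$ and $\mathring\Phi^u_s(\bm M)$ is affine in $\bm M$, so every entry of the row vectors $D_{x,i}^\top\mathring\Phi^x_s(\bm M)$ and $D_{u,j}^\top\mathring\Phi^u_s(\bm M)$ is an affine function of $\text{vec}(\bm M)$. For each such entry I would introduce a nonnegative auxiliary variable upper-bounding its absolute value through the two linear inequalities $\pm(\text{entry})\le y$, and then replace $\mathring g^x_i(\bm M)\le d_{x,i}-\epsilon$ by the single linear budget inequality $\bar w\sum_{s,j}y^x_{i,s,j}\le d_{x,i}-\epsilon$ (and symmetrically for the $u$-constraints). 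Collecting these with the box inequalities defining $\M$ produces a polytope $\Gamma_\epsilon=\{\vec W:C\vec W\le h_\epsilon\}$ in the lifted variable $\vec W=(\text{vec}(\bm M),y)$. The key structural point I would stress is that $C$ does \emph{not} depend on $\epsilon$: only the right-hand sides of the budget rows depend on $\epsilon$, and linearly, through the term $-\epsilon$. Since each $\mathring f_t$ depends only on $\bm M$ and the auxiliary variables can always be taken tight, the projection of $\Gamma_\epsilon$ onto the $\bm M$-coordinates is exactly $\Omega_\epsilon$ and $\min_{\Gamma_\epsilon}\sum_t\mathring f_t=\min_{\Omega_\epsilon}\sum_t\mathring f_t$, which is what ``equivalent to $\Omega_\epsilon$'' means.

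For (iii) and for the boundedness in (i) I would read off the structure of $C$ and $h_\epsilon$. Passing from $\Gamma_{-\epsilon_1-\epsilon_3}$ to $\Gamma_\epsilon$ leaves every box and auxiliary row unchanged and shifts each budget row's right-hand side from $d_{x,i}+\epsilon_1+\epsilon_3$ to $d_{x,i}-\epsilon$ (resp.\ for $u$); hence $\Delta$ is supported only on the budget rows, with common value $\epsilon+\epsilon_1+\epsilon_3$, giving $\|\Delta\|_\infty=\epsilon+\epsilon_1+\epsilon_3$ as in (iii). For the diameter (i), the $\bm M$-block of $\Gamma_{-\epsilon_1-\epsilon_3}$ lies inside the box $\M$, whose Frobenius diameter is made $H$-independent by the geometric factor $\sum_{i=1}^H(1-\gamma)^{2(i-1)}\le 1/\gamma$ and contributes the dimension-dependent $\Theta(\sqrt{mn})$ term; the auxiliary block is nonnegative and, within each of the $k_c$ constraint groups, has $L_1$ norm at most $(d_{x,i}+\epsilon_1+\epsilon_3)/\bar w=\Theta(1)$, so its $L_2$ size per group is $\Theta(1)$ and across the $k_c$ groups is $\Theta(\sqrt{k_c})$. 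Combining the two blocks yields $\delta_1=\Theta(\sqrt{mn}+\sqrt{k_c})$.

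For (ii), since $\sum_t\mathring f_t$ depends on $\vec W$ only through $\bm M$, its Lipschitz constant in $\vec W$ equals the one in $\bm M$; because $\Gamma_{-\epsilon_1-\epsilon_3}$ projects into $\M$, where the per-step gradient bound $G_f=\Theta(\sqrt{n^3H^3m})$ of Lemma \ref{lem: part ii} is valid, summing over the $T+1$ stages gives $L\le(T+1)G_f=\Theta(T(nH)^{1.5}\sqrt m)$. Finally, for (iv) I would take $\vec W^\circ$ to be $\bm M(K_*)$ from Assumption \ref{ass: K strictly feasible} and Lemma \ref{lem: define M_ap and epsilon3}, together with its tight auxiliary variables. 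By Corollary \ref{cor: K's M in Omega}, $\bm M(K_*)\in\Omega_{\epsilon_*-\epsilon_1-\epsilon_3}\subseteq\Omega_\epsilon$ (using the feasibility condition $\epsilon\le\epsilon_*-\epsilon_1-\epsilon_3$ of Theorem \ref{thm: feasibility}), so $\vec W^\circ\in\Gamma_\epsilon$; and on each budget row the slack of $\Gamma_{-\epsilon_1-\epsilon_3}$ at $\vec W^\circ$ equals $(d_{x,i}+\epsilon_1+\epsilon_3)-\mathring g^x_i(\bm M(K_*))\ge\epsilon_*$, since $\mathring g^x_i(\bm M(K_*))\le d_{x,i}-(\epsilon_*-\epsilon_1-\epsilon_3)$. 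This is precisely the bound asserted in (iv).

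I expect the main obstacle to be the diameter estimate (i): recovering the $\sqrt{k_c}$ term requires exploiting that the auxiliary variables are forced nonnegative and are collectively budgeted per constraint group rather than individually unbounded, and matching the stated order of the $\bm M$-block requires care with the matrix/vector norm conventions used in the box defining $\M$. Once the $\epsilon$-independence of $C$ and the tight-auxiliary reduction are in place, the remaining claims (ii)--(iv) are essentially bookkeeping built on Lemma \ref{lem: part ii} and Corollary \ref{cor: K's M in Omega}.
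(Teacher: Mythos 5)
Your proposal is correct and takes essentially the same route as the paper: the same auxiliary-variable lifting of the $L_1$-norm constraints into a polytope whose matrix $C$ is independent of $\epsilon$, the same per-block diameter accounting (geometric sum for the $\bm M$-block giving $\Theta(\sqrt{mn})$, budget rows giving $\Theta(\sqrt{k_c})$), the same Lipschitz bound $L\leq (T+1)G_f$ from Lemma \ref{lem: part ii}, and the same witness $\vec W^{\circ}$ built from $\bm M(K_*)\in \Omega_{\epsilon_*-\epsilon_1-\epsilon_3}$ via Corollary \ref{cor: K's M in Omega} with the identical slack computation yielding $\epsilon_*$. The only cosmetic difference is that the paper also linearizes the row-sum ($L_\infty$ matrix-norm) constraints defining $\M$ through additional auxiliary variables $\bm Z$, whereas you describe $\M$ as a box; this changes none of the four estimates.
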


%% ============================================
%% ============== Section 6 ===================
%% ============================================

\section{Numerical Experiments}

In this section, we numerically test our OGD-BZ on a thermal control problem with a Heating Ventilation and Air
Conditioning (HVAC) system. 
Specifically, we consider the linear thermal dynamics studied in \citet{zhang2016decentralized} with additional random disturbances, that is, 
$\dot x(t)= \frac{1}{\upsilon\zeta}(\theta^{\mathrm{o}}(t)-x(t))- \frac{1}{\upsilon} u(t)+ \frac{1}{\upsilon} \pi + \frac{1}{\upsilon} w(t)
$,
where $x(t)$ denotes the  room temperature  at time $t$, $u(t)$ denotes the control input  that is related with the air flow rate of the HVAC system, $\theta^{\mathrm{o}}(t)$ denotes the outdoor temperature, $w(t)$ represents random disturbances, $\pi$ represents external  heat sources' impact,  $\upsilon$ and $\zeta$ are physical constants.  We discretize the thermal dynamics with  $\Delta_t=60$s. For human comfort and/or safe operation of device, we impose constraints on the room temperature, $x(t)\in [x_{\min}, x_{\max}]$, and the control inputs, $u(t)\in [u_{\min}, u_{\max}]$.
Consider  a desirable temperature $\theta^{set}$ set by the user and a control setpoint $u^{set}$. Consider the cost function  
$c(t)=q_{t}(x(t) - \theta^{set})^2 + r_{t} (u(t)-u^{set})^2$.

In our  experiments, we consider $v=100$, $\zeta =6$, $\theta^o=30^\circ\text{C}$, $\pi=1.5$, and let $w_t$ be i.i.d. generated from $\text{Unif}(-2,2)$. Besides, we consider  $\theta^{set}=24^\circ\text{C}$, $x_{\min}=22^\circ\text{C}$, $x_{\max}=26^\circ\text{C}$, $u_{\min}=0$, $u_{\max}=5$. We consider $q_t=2$ for all $t$ and time-varying $r_t$ generated i.i.d. from $\text{Unif}(0.1,4)$. When applying OGD-BZ, we select $H=7$ and a diminishing stepsize $\eta_t=\Theta(t^{-0.5})$, i.e. we let $\eta_t=0.5 (40)^{-0.5}$ for $t< 40$ and  $\eta_t=0.5 (t+1)^{-0.5}$ for $t\geq 40$.

\begin{figure*}
\centering
    \subfigure[Averaged regret]{\includegraphics[width=0.3\linewidth]{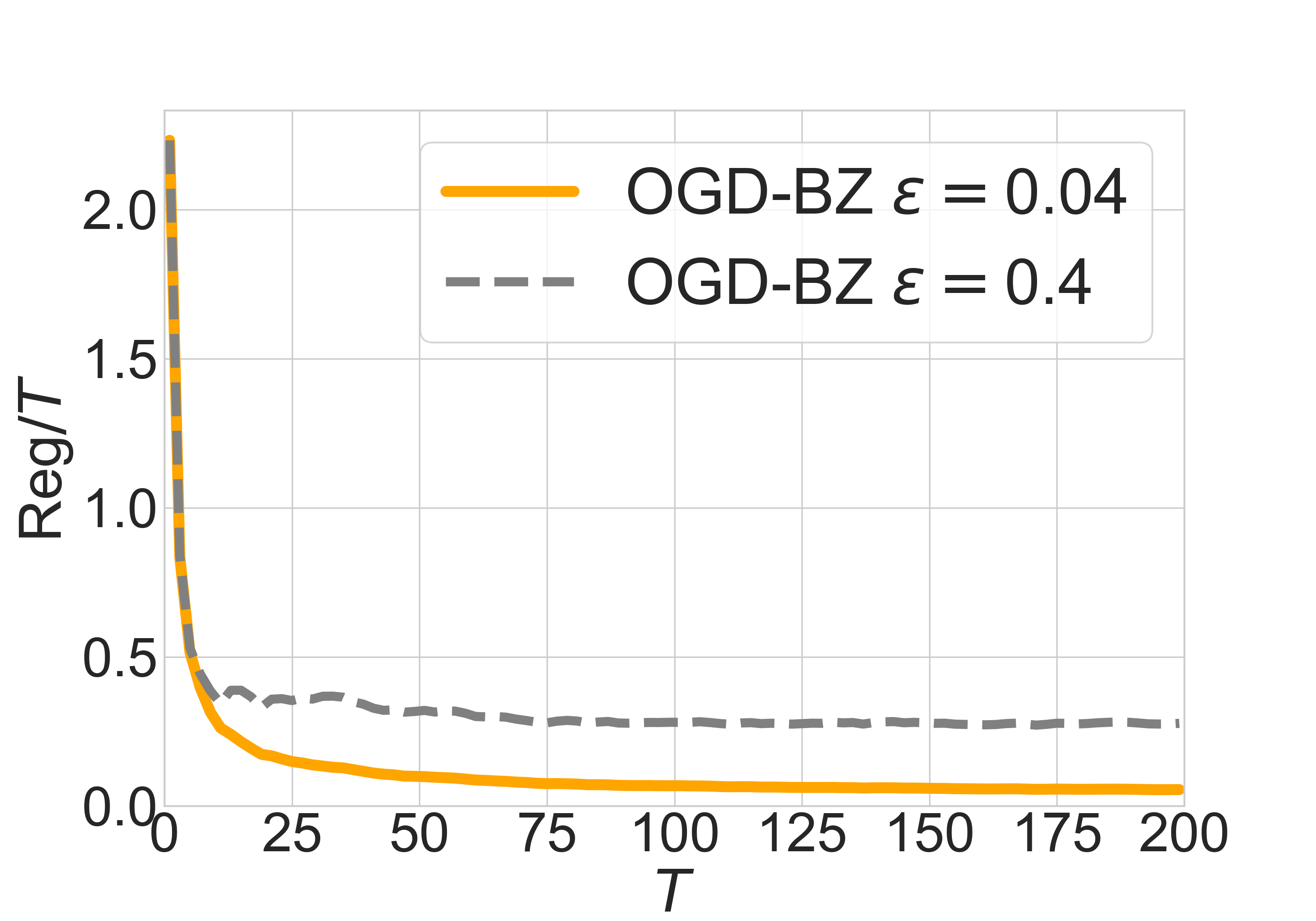}}\quad
    \subfigure[$x(t)$'s range]{\includegraphics[width=0.3\linewidth]{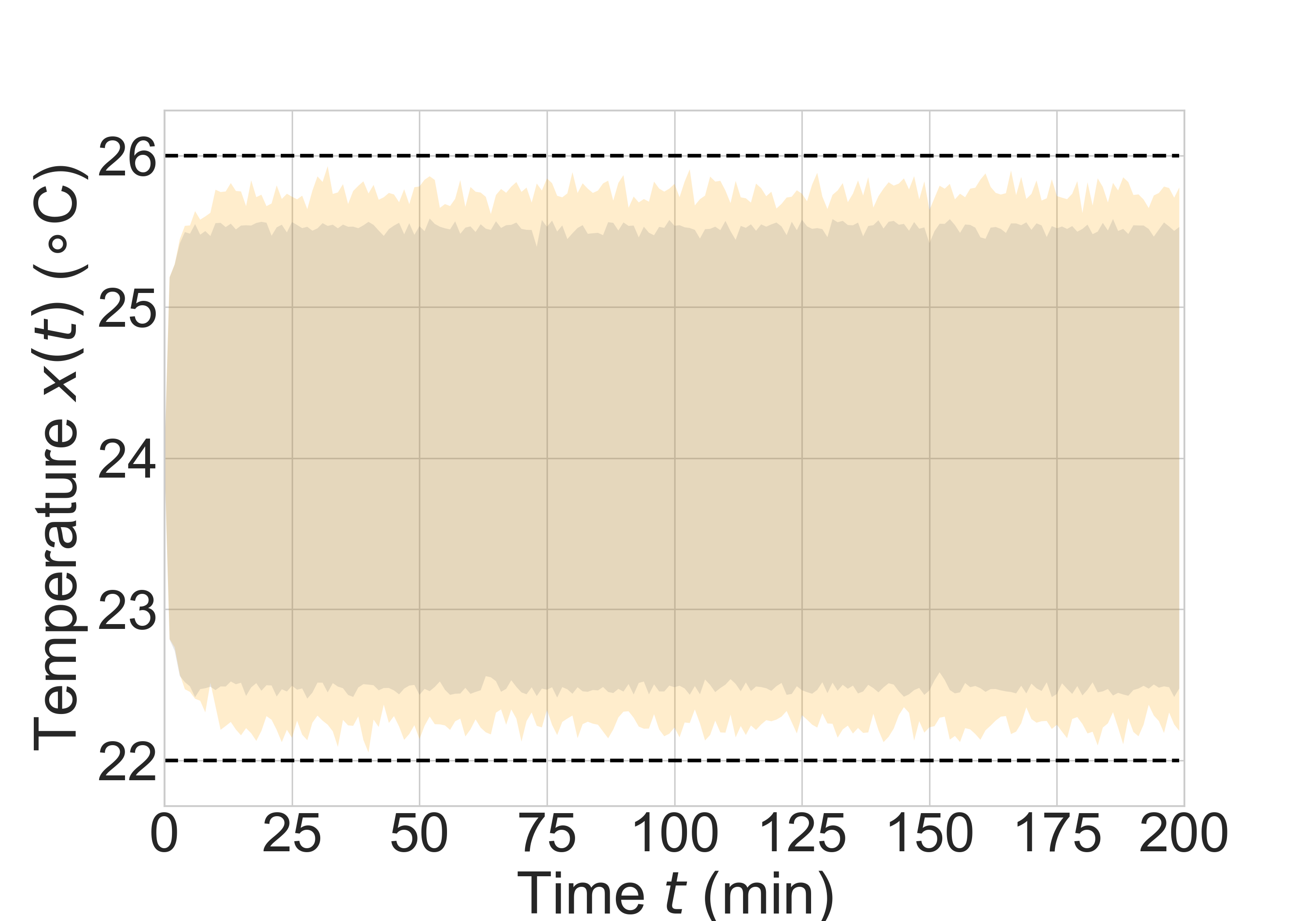}}\quad
    \subfigure[$u(t)$'s range]{\includegraphics[width=0.3\linewidth]{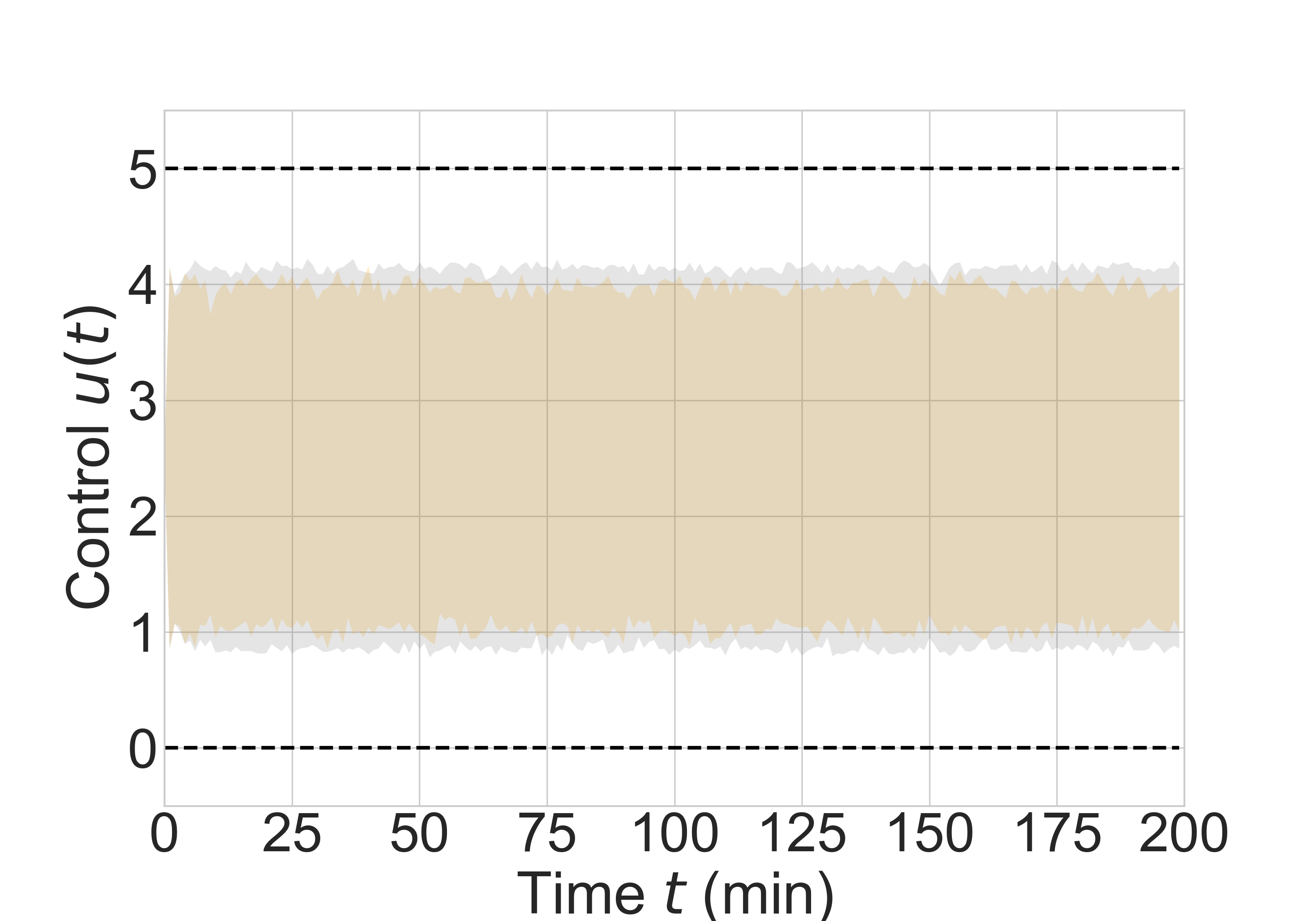}}
   
\caption{Comparison of OGD-BZ with buffer sizes $\epsilon=0.04$ and $\epsilon=0.4$. In Figure (b) and (c), the yellow shade represents the range of $x(t)$ generated by OGD-BZ with $\epsilon =0.04$, while the grey shade is generated by OGD-BZ with $\epsilon=0.4$.}
\label{fig:test}
\end{figure*}

Figure \ref{fig:test} plots the comparison of OGD-BZ with different buffer sizes. Specifically, $\epsilon=0.04$ is a properly chosen buffer size and $\epsilon=0.4$ offers larger buffer zones. From Figure \ref{fig:test}-(a), we can observe that the averaged regret with a properly chosen buffer size $\epsilon=0.04$ quickly diminishes to 0, which is consistent with Theorem \ref{thm: regret bdd}. In addition, Figure \ref{fig:test}-(b) and Figure \ref{fig:test}-(c) plot the range of $x(t)$ and $u(t)$ under random disturbances in 1000 trials to demonstrate the safety of OGD-BZ. With a larger buffer zone, i.e. $\epsilon=0.4$, the range of $x_t$ is smaller and further from the  boundaries, thus being safer. Interestingly, the range of $u(t)$ becomes slightly larger, which still satisfies the control constraints because the control constraints are not binding/active in this experiment and which  indicates more control power is used here to ensure a smaller range of $x(t)$ under disturbances. Finally, the regret with $\epsilon=0.4$ is  worse than that with $\epsilon=0.04$, which demonstrates the trade-off between safety and  performance and how the choices of the buffer size  affect this trade-off.

%% ============================================
%% ============== Appendix ===================
%% ============================================

%\appendix

\section{Supplementary Proofs for Lemma \ref{lem: part iii first half}}

\subsection{Proof of Proposition \ref{prop: perturbation}}

Since $\Omega_2\subseteq \Omega_1$, we have $\min_{\Omega_2} f(x) - \min_{\Omega_1} f(x)\geq 0$. Let $x_1^*=\argmin_{\Omega_1} f(x)$. We will show that there exists $x_2^\dagger \in \Omega_2$ such that $\|x_1^*-x_2^\dagger\|_2\leq  \frac{\delta_1 \|\Delta\|_\infty}{\min_{i\in S} (h-C \mathring x)_i}$, where $S=\{i:\Delta_i>0\}$. Then, by the Lipschitz continuity, we can  prove the bound: $
\min_{\Omega_2} f(x) \!- \!\min_{\Omega_1} f(x) \leq f(x_2^\dagger)\!-\!f(x_1^*)\!\leq \!\frac{L\delta_1 \|\Delta\|_\infty}{\min_{i\in S} (h\!-\!C \mathring x)_i}.
$

In the following, we will show, more generally, that there exists $ x_2\in \Omega_2$ that is close to $x_1$ for any $x_1\in \Omega_1$. 
For ease of notation, we define $y=x-\mathring x$, 
$\Omega_1^y=\{y: Cy \leq h-C\mathring x\}$, and $\Omega_2^y=\{y: Cy \leq h-C\mathring x-\Delta\}$. Notice that $0\in \Omega_2^y$ and $(h-C\mathring x-\Delta)_i\geq 0$.
%We will show that for any $x_1\in \Omega_1$, we can construct $x_2$ that is close to $x_1$. 
Besides, we have $y_1=x_1-\mathring x\in \Omega_1^y$. Further,  by the convexity of $\Omega_1^y$, we have $\lambda y_1\in \Omega_1^y$ for $0\leq \lambda\leq 1$. 

If $(C y_1)_i \leq (h-C \mathring x-\Delta)_i$ for all $i$, then $y_1 \in \Omega_2^y$ and $x_1\in \Omega_2$. So we can let $x_2=x_1$ and $\|x_2-x_1\|_2=0$.

If, instead, there exists a set $S'$ such that for any $i \in S'$, $(C y_1)_i > (h-C \mathring x-\Delta)_i$. Then, define $$\lambda=\min_{i\in S'}\frac{(h-C \mathring x-\Delta)_i}{(Cy_1)_i}.$$ Notice that $\lambda \in [0,1)$. 
We can show that $\lambda y_1 \in \Omega_2^y$ below. When $i \in S'$, $(\lambda C y_1)_i \leq (C y_1)_i\frac{ (h-C \mathring x-\Delta)_i}{(Cy_1)_i}=(h-C \mathring x-\Delta)_i$. When $i\not\in S'$, we have $(\lambda C y_1)_i \leq \lambda(h-C \mathring x-\Delta)_i\leq (h-C \mathring x-\Delta)_i$. Therefore, $\lambda y_1 \in \Omega_2^y$. Define $x_2=\lambda y_1 +\mathring x$, then $x_2 \in \Omega_2$.
Notice that
$\|x_1-x_2\|_2 =\|y_1-y_2\|_2=(1-\lambda) \|y_1\|_2
 \leq (1-\lambda) \delta_1
$.
Since $y_1 \in \Omega_1^y$, when $i \in S'$, we have $0\leq (h-C\mathring x-\Delta)_i<(Cy_1)_i\leq (h-C\mathring x)_i$. Therefore,  $\frac{(h-C \mathring x-\Delta)_i}{(Cy_1)_i}\geq \frac{(h-C \mathring x-\Delta)_i}{(h-C\mathring x)_i}=1-\frac{\Delta_i}{(h-C\mathring x)_i}$. Consequently, by $S'\subseteq S$, we have
$
1-\lambda \leq \max_{i\in S'}\frac{\Delta_i}{(h-C\mathring x)_i} \leq \frac{\|\Delta\|_\infty}{\min_{i\in S'}(h-C\mathring x)_i}
 \leq \frac{\|\Delta\|_\infty}{\min_{i\in S}(h-C\mathring x)_i}.
$

\subsection{Proof of Lemma \ref{lem: diameter and L bound}}
We first provide an explicit expression for $\Gamma_\epsilon$ and then prove the bounds on $\Gamma_\epsilon$  based on the explicit expression.

\begin{lemma}\label{lem: standard polytope form}
	For any $\epsilon\in \R$, $\bm M\in \Omega_\epsilon$ if and only if there exist
	$\{Y^x_{i,k,l}\}_{(1\leq i \leq k_x,  1\leq k \leq 2H,1\leq l\leq n)}$, $\{\!Y^u_{j,k,l}\!\}_{(1\leq j \leq k_u,1\leq k \leq 2H,1\leq l\leq n)} ,\{\!Z^{[i]}_{k,j}\!\}_{(1\leq i \leq H,1\leq k \leq m, 1\leq j\leq n)}$ such that
	\begin{align*}
	\begin{cases}
	&\sum_{k=1}^{2H}\sum_{l=1}^n Y^x_{i,k,l}\bar w \leq d_{x,i}-\epsilon,\  \forall 1\leq i\leq k_x\\
	& \sum_{k=1}^{2H}\sum_{l=1}^n Y^u_{j,k,l}\bar w \leq d_{u,j}-\epsilon,\  \forall 1\leq j \leq k_u\\
	& \sum_{j=1}^n Z^{[i]}_{k,j}\leq  2\sqrt n \kappa^3 (1-\gamma)^{i-1},  \forall 1\!\leq \!i \!\leq\! H,\! 1\!\leq \!k \!\leq\! m\\
	& -Y^x_{i,k,l}\leq (D_{x,i}^\top \mathring \Phi_k^x(\bm M))_l \leq Y^x_{i,k,l}, \quad \forall i,k,l\\
	&-Y^u_{j,k,l}\leq (D_{u,j}^\top \mathring \Phi_k^u(\bm M))_l \leq Y^u_{j,k,l}\quad \forall i,k,l\\
	&  -Z^{[i]}_{k,j}\leq M^{[i]}_{k,j}\leq Z^{[i]}_{k,j},\quad \forall i,k,j.
	\end{cases}
	\end{align*}
	Let $\vec W$ denote the vector containing the elements of $\bm M, \bm Y^x=\{Y^x_{i,k,l}\}, \bm Y^u=\{Y^u_{j,k,l}\}, \bm Z=\{Z^{[i]}_{k,j}\}$. Thus, the constraints  above can be written as
	$\Gamma_{\epsilon}=\{\vec W: C \vec W\leq h_\epsilon\}.$
	
\end{lemma}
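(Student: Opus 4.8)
The plan is to prove the claimed equivalence by the standard epigraph/linearization technique for $L_1$ norms and induced $L_\infty$ matrix norms, exploiting the fact that the maps $\bm M \mapsto \mathring\Phi^x_k(\bm M)$ and $\bm M\mapsto \mathring\Phi^u_k(\bm M)$ are \emph{affine}. Indeed, setting all historical policies equal to $\bm M$ in the formulas of Proposition \ref{prop: approx} gives $\mathring\Phi^x_k(\bm M)=\AK^{k-1}\one_{(k\le H)}+\sum_{i=1}^H\AK^{i-1}BM^{[k-i]}\one_{(1\le k-i\le H)}$, which is a constant plus a linear function of the entries of $\bm M$. Hence each scalar $(D_{x,i}^\top\mathring\Phi^x_k(\bm M))_l$ and $(D_{u,j}^\top\mathring\Phi^u_k(\bm M))_l$ is affine in $\bm M$, so every inequality in the displayed system is linear in the joint variables $(\bm M,\bm Y^x,\bm Y^u,\bm Z)$. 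This immediately yields the final standard form $\Gamma_\epsilon=\{\vec W: C\vec W\le h_\epsilon\}$ once the set equivalence is established.

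To establish the equivalence I would treat each constraint family separately and prove both implications by projecting out the auxiliary variables. For the state constraints, recall $\mathring g^x_i(\bm M)=\sum_{k=1}^{2H}\|D_{x,i}^\top\mathring\Phi^x_k(\bm M)\|_1\bar w=\sum_{k=1}^{2H}\sum_{l=1}^n|(D_{x,i}^\top\mathring\Phi^x_k(\bm M))_l|\,\bar w$. For the ``only if'' direction, given $\bm M\in\Omega_\epsilon$ I would take $Y^x_{i,k,l}=|(D_{x,i}^\top\mathring\Phi^x_k(\bm M))_l|$; these satisfy the box inequalities with equality, and the sum inequality $\sum_{k,l}Y^x_{i,k,l}\bar w\le d_{x,i}-\epsilon$ then reduces to $\mathring g^x_i(\bm M)\le d_{x,i}-\epsilon$. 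For the ``if'' direction, given any feasible $Y^x$, the box inequalities force $|(D_{x,i}^\top\mathring\Phi^x_k(\bm M))_l|\le Y^x_{i,k,l}$, so $\mathring g^x_i(\bm M)\le\sum_{k,l}Y^x_{i,k,l}\bar w\le d_{x,i}-\epsilon$, recovering the defining inequality of $\Omega_\epsilon$. The action constraints with $\bm Y^u$ are handled identically using $\mathring g^u_j$.

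For the membership $\bm M\in\M$, I would use that the induced $\infty$-norm of a matrix equals its maximum absolute row sum, i.e. $\|M^{[i]}\|_\infty=\max_{1\le k\le m}\sum_{j=1}^n|M^{[i]}_{k,j}|$. Thus $\|M^{[i]}\|_\infty\le 2\sqrt n\kappa^3(1-\gamma)^{i-1}$ holds iff for every row $k$ one has $\sum_j|M^{[i]}_{k,j}|\le 2\sqrt n\kappa^3(1-\gamma)^{i-1}$, which by the same epigraph argument is equivalent to the existence of $Z^{[i]}_{k,j}$ with $-Z^{[i]}_{k,j}\le M^{[i]}_{k,j}\le Z^{[i]}_{k,j}$ and $\sum_j Z^{[i]}_{k,j}\le 2\sqrt n\kappa^3(1-\gamma)^{i-1}$. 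Collecting the three families and stacking $(\bm M,\bm Y^x,\bm Y^u,\bm Z)$ into $\vec W$ gives $\Gamma_\epsilon$. I would finish by noting that $\epsilon$ appears \emph{only} on the right-hand sides of the first two families, so the constraint matrix $C$ is independent of $\epsilon$ while $h_\epsilon$ depends affinely on $\epsilon$ --- precisely the structure needed later in Lemma \ref{lem: diameter and L bound} when comparing $\Gamma_\epsilon$ and $\Gamma_{-\epsilon_1-\epsilon_3}$.

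There is no serious obstacle here; the lemma is a clean reformulation rather than a deep result. The only points requiring care are (i) verifying the affineness of $\mathring\Phi^x_k$ and $\mathring\Phi^u_k$ in $\bm M$, so that the box inequalities are genuinely linear in $\vec W$ and $C$ can be taken independent of $\epsilon$; and (ii) stating the equivalence with its existential quantifier correctly, namely that $\Omega_\epsilon$ is exactly the projection of $\Gamma_\epsilon$ onto the $\bm M$-coordinates, which is what the two-directional argument above establishes.
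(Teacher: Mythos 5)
Your proof is correct, and it is exactly the standard epigraph reformulation of $L_1$-norm and absolute-value inequalities that the paper itself invokes when it states the lemma "is based on a standard reformulation method in constrained optimization to handle inequalities involving absolute values so the proof is omitted." You have simply written out the argument the authors left implicit — including the two points that genuinely need checking, namely the affineness of $\bm M\mapsto\mathring\Phi^x_k(\bm M),\,\mathring\Phi^u_k(\bm M)$ and the interpretation of $\|M^{[i]}\|_\infty$ as the maximum absolute row sum — so there is nothing to correct.
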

Since Lemma \ref{lem: standard polytope form} holds for any $\epsilon\in \R$, we can similarly define $\Gamma_{-\epsilon_1-\epsilon_3}=\{\vec W: C \vec W\leq h_{-\epsilon_1-\epsilon_3}\}$ which is equivalent to $\Omega_{-\epsilon_1-\epsilon_3}$.  Lemma \ref{lem: standard polytope form} is based on a standard  reformulation method in constrained optimization to handle inequalities involving absolute values so the proof is omitted.

\textbf{Proof of (i).}
	Firstly, notice that 
	$
	\sum_{j=1}^n (M_{k,j}^{[i]})^2\leq 	\sum_{j=1}^n (Z_{k,j}^{[i]})^2 \leq 	(\sum_{j=1}^n Z_{k,j}^{[i]})^2 \leq 4n \kappa^6(1-\gamma)^{2i-2}
$.
	Then, 
	\begin{align*}
	\sum_{k=1}^m \sum_{i=1}^H 	\sum_{j=1}^n (M_{k,j}^{[i]})^2 \leq \sum_{k=1}^m \sum_{i=1}^H 	\sum_{j=1}^n (Z_{k,j}^{[i]})^2 \leq 4n m \kappa^6\frac{1}{2\gamma-\gamma^2}
	\end{align*}
	Similarly, by the first two  constraints in Lemma \ref{lem: standard polytope form} and by the definition of $\Gamma_{-\epsilon_1-\epsilon_3}$, we have
	$
	 \sum_{k=1}^{2H}\sum_{l=1}^n (Y^x_{i,k,l})^2\leq (d_{x,i}+\epsilon_1+\epsilon_3)^2/\bar w^2\leq (d_{x,i}+\epsilon_*)^2/\bar w^2$ and $
	 \sum_{k=1}^{2H}\sum_{l=1}^n (Y^u_{j,k,l})^2\leq (d_{u,j}+\epsilon_1+\epsilon_3)^2/\bar w^2\leq(d_{u,j}+\epsilon_*)^2/\bar w^2 
$.
	Therefore, 
	$
	   \sum_{i=1}^{k_x} \sum_{k=1}^{2H}\sum_{l=1}^n (Y^x_{i,k,l})^2\leq  \sum_{i=1}^{k_x}(d_{x,i}+\epsilon_*)^2/\bar w^2,$ and 
	$ \sum_{j=1}^{k_u} \sum_{k=1}^{2H}\sum_{l=1}^n (Y^u_{j,k,l})^2\leq \sum_{j=1}^{k_u}(d_{u,j}+\epsilon_*)^2/\bar w^2
$.
Consequently,
	\begin{align*}
	&\|\bm M\|_F^2 + \|\bm Y^x\|_F^2+ \|\bm Y^u \|_F^2 + \|\bm Z\|_F^2 \\
	\leq\ & \frac{8n m \kappa^6}{2\gamma-\gamma^2}+\frac{\sum_{i=1}^{k_x}(d_{x,i}+\epsilon_*)^2+\sum_{j=1}^{k_u}(d_{u,j}+\epsilon_*)^2}{\bar w^2}=\delta_1^2
	\end{align*}
	where $\delta_1=\Theta(\sqrt{mn}+\sqrt{k_c})$ by the boundedness of $\epsilon_*, d_x, d_u$. (Although $\delta_1$ depends linearly on $1/\bar w$, we will show $L=T G_f$ and $G_f$ is quadratic on $\bar w$ by Lemma \ref{lem: part ii}, hence, $L\delta_1$ is still linear with $\bar w$.)
	
	\textbf{Proof of (ii).} Since the gradient of $\mathring f_t(\bm M)$ is bounded by $G_f=\Theta(\sqrt{n^3 m H^3})$, the gradient of $\sum_{t=0}^T \mathring f_t(\bm M)$ is bounded by $LG_f=\Theta(T\sqrt{n^3 m H^3})$.
	
	\textbf{Proof of (iii).} Notice that the differences between $\Gamma_\epsilon$ and $\Gamma_{-\epsilon_1-\epsilon_3}$  come from the first two lines of the right-hand-side of inequalities in Lemma \ref{lem: standard polytope form}, which is $\epsilon+\epsilon_1+\epsilon_3$ in total. 
	
	\textbf{Proof of (iv).} From the proof of Theorem \ref{thm: feasibility}, we know that $\bm M(K_*)\in \Omega_{\epsilon_*-\epsilon_1-\epsilon_3}\subseteq \Omega_{\epsilon}$. Therefore, there exist corresponding $\bm Y^x(K_*), \bm Y^u(K_*), \bm Z(K_*)$ such that $\vec W^{\circ}=\text{vec}(\bm M(K_*),\bm Y^x(K_*), \bm Y^u(K_*), \bm Z(K_*))\in \Gamma_{\epsilon_*-\epsilon_1-\epsilon_3}\subseteq \Gamma_{\epsilon}$. Therefore,    $\min_{\{i:\Delta_i>0\}} (h_{-\epsilon_1-\epsilon_3}\!-\!C\vec W^{\circ})_i\geq \epsilon_1+\epsilon_3-(-\epsilon_*+\epsilon_1+\epsilon_3)=\epsilon_*$.

%% ============================================
%% ============== Section 7 ===================
%% ============================================

\section{Conclusion and Future Work}
This paper studies online optimal control with linear constraints  and linear dynamics with random disturbances. We propose OGD-BZ and  show that OGD-BZ can satisfy all the constraints despite  disturbances and ensure $\tilde O(\sqrt T)$ policy regret. 
There are many interesting future directions, e.g. (i) consider adversarial disturbances and robust stability, (ii) consider soft constraints and unbounded noises,  (iii) consider bandit feedback, (iv) reduce the regret bound's dependence on dimensions, (v) consider  unknown  systems, (vi) consider more general policies than    linear policies, (vii) prove logarithmic regrets for strongly convex costs, etc.

\section*{Acknowledgements}
This work was conducted while the first author was doing internship at
the MIT-IBM Watson AI Lab. We thank the helpful suggestions  from  Jeff Shamma, Andrea Simonetto, Yang Zheng, Runyu Zhang, and the reviewers.

\section*{Ethics Statement}
The primary motivation for this paper is to develop an online control algorithm under linear constraints on the states and actions, and under noisy linear dynamics. Some practical physical systems can be approximated by noisy linear dynamics and most practical systems have to satisfy certain constraints on the states and actions, such as data center cooling and robotics, etc.  Our proposed approach ensures to generate control policies that satisfy the constraints even under the uncertainty of unknown noises. Thus our algorithm can potentially be very beneficial for safety critical applications. However, note that our approach relies on a set of technical assumptions, as mentioned in the paper, which may not directly hold for all practical applications. Hence, when applying our algorithm, particular cares are needed when modeling the system and the constraints.%, especially in adversarial environments. 

\bibliography{citation4onlineLQRconstraint}

%====== Proofread rule ====== 
%  use main.tex + input{appendix technical}
% check all math, all notation, all claims
% Then, for submission file, remove input{appendix_technical}, and remove appendix_technical.tex file}

%=====================================
%=========== Additional Appendix =========
%-====================================
\newpage
\onecolumn
\section*{Appendix}
In the following, we provide a complete proof for the technical results in the main submission. Specifically, Appendix A provides some helping lemmas, Appendix B provides proofs for Section 5.1, and Appendix C provides proofs for Section 5.2.

\section{Helping lemmas}
In this section, we provide some  technical lemmas that will be useful in the proofs. These lemmas are similar to the results established in \cite{agarwal2019online,agarwal2019logarithmic} but involve slightly different coefficients in the bounds because we define $\M$ differently.

\paragraph{- A property of $(\kappa,\gamma)$  strongly stable matrices.}
\begin{lemma}\label{lem: |AK^k|2 bdd}
	When $K$ is $(\kappa,\gamma)$  strongly stable, then $\|(A-BK)^k\|_2 \leq \kappa^2(1-\gamma)^k$ for any integer $k\geq 0$.
\end{lemma}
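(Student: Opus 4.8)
The plan is to invoke the structural decomposition guaranteed by the definition of strong stability and then exploit the fact that similarity transformations commute with taking powers. By Definition of Strong Stability, since $K$ is $(\kappa,\gamma)$-strongly stable there exist an invertible matrix $Q$ and a matrix $L$ with $A-BK = Q^{-1} L Q$, where $\|L\|_2 \le 1-\gamma$ and $\max(\|Q\|_2, \|Q^{-1}\|_2) \le \kappa$. The key observation is that the similarity structure telescopes under exponentiation: for any integer $k \ge 1$,
\[
(A-BK)^k = (Q^{-1} L Q)^k = Q^{-1} L^k Q,
\]
since the interior factors $Q Q^{-1}$ cancel in pairs.

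Given this identity, I would simply apply submultiplicativity of the spectral norm twice. First, $\|L^k\|_2 \le \|L\|_2^k \le (1-\gamma)^k$. Then, bounding the product,
\[
\|(A-BK)^k\|_2 = \|Q^{-1} L^k Q\|_2 \le \|Q^{-1}\|_2 \,\|L^k\|_2\, \|Q\|_2 \le \kappa \cdot (1-\gamma)^k \cdot \kappa = \kappa^2 (1-\gamma)^k,
\]
which is exactly the claimed bound. The case $k=0$ is immediate since $(A-BK)^0 = I$ has $\|I\|_2 = 1 \le \kappa^2$ (recall $\kappa \ge 1$ and $\gamma \in (0,1]$ so $(1-\gamma)^0 = 1$).

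There is no genuine obstacle here: the statement is a direct consequence of unpacking the definition, and the only facts used are the telescoping of similarity transformations and submultiplicativity of the operator norm. The one point worth stating carefully is the telescoping identity $(Q^{-1}LQ)^k = Q^{-1}L^kQ$, which one can justify by a trivial induction on $k$ if desired. This lemma is a quantitative restatement of the familiar fact that the spectral radius of $A-BK$ is below $1-\gamma$, packaged with the explicit conditioning constant $\kappa^2$ coming from the change of basis $Q$; it will be used repeatedly to control the decay of $\AK^H$ terms throughout the subsequent error-bound lemmas.
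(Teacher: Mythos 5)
Your proof is correct and follows exactly the same route as the paper's: unpack the strong-stability decomposition $A-BK=Q^{-1}LQ$, telescope to get $(A-BK)^k=Q^{-1}L^kQ$, apply submultiplicativity of the spectral norm, and handle $k=0$ separately using $\kappa\geq 1$. There is nothing to add or correct.
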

\begin{proof}
	By Definition \ref{def: diagonally strongly stable}, there exist $H$ and $L$ such that $A-BK=H^{-1}L H$. Thus, $(A-BK)^k=H^{-1}L^k H$, and $\|(A-BK)^k\|_2 \leq \|H^{-1}\|_2\|L^k\|_2\|H\|_2\leq  \|H^{-1}\|_2\|L\|_2^k\|H\|_2\leq \kappa^2(1-\gamma)^k$ when $k\geq 1$. When $k=0$, $\|(A-BK)^0\|_2=\|I\|_2=1\leq \kappa^2$ since $\kappa\geq 1$.
\end{proof}

\paragraph{- A property of set $\M$.}

\begin{lemma}\label{lem: M's L2 norm}

	For $\bm M\in \M=\{ \bm M=\{ M^{[1]}, \dots, M^{[H]} \}: \ \|M^{[i]}\|_\infty \leq 2\sqrt n \kappa^3 (1-\gamma)^{i-1}\}$,  we have
	$$\|M^{[i]}\|_2 \leq2\sqrt{mn}\kappa^3 (1-\gamma)^{i-1}$$
\end{lemma}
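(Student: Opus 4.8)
The plan is to pass through the Frobenius norm. Since each $M^{[i]}$ maps the $n$-dimensional disturbance $w_{t-i}$ to the $m$-dimensional action, it is an $m\times n$ matrix, and the $\sqrt m$ factor in the target bound should arise from summing over its $m$ rows. Here $\|M^{[i]}\|_\infty$ denotes the induced $L_\infty$ matrix norm, i.e.\ the maximum absolute row sum; this reading is consistent with the footnote defining $\M$, since converting the $L_2$ bound of $\M_2$ into a row-sum bound costs exactly the $\sqrt n$ factor appearing in the hypothesis.

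First I would use the elementary bound $\|A\|_2\le \|A\|_F$, valid for any matrix because the largest singular value is dominated by the square root of the sum of squares of all singular values. Next I would control $\|A\|_F$ by $\|A\|_\infty$: writing the rows of $A=M^{[i]}$ as $a_1^\top,\dots,a_m^\top$, the hypothesis $\|A\|_\infty\le c$ means $\|a_r\|_1\le c$ for every $r$, and since $\|a_r\|_2\le\|a_r\|_1$ this gives $\|a_r\|_2^2\le c^2$. Summing over the $m$ rows yields $\|A\|_F^2=\sum_{r=1}^m\|a_r\|_2^2\le m c^2$, hence $\|A\|_F\le\sqrt m\,c$. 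Chaining the two steps gives $\|M^{[i]}\|_2\le\sqrt m\,\|M^{[i]}\|_\infty$, and substituting $\|M^{[i]}\|_\infty\le 2\sqrt n\kappa^3(1-\gamma)^{i-1}$ produces exactly $2\sqrt{mn}\kappa^3(1-\gamma)^{i-1}$, as claimed.

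There is no genuine obstacle here; this is a routine norm-comparison estimate. The only point requiring a moment of care is tracking which dimension supplies the $\sqrt m$: it is the number of rows $m$ (the action dimension), entering through the sum over rows in the Frobenius bound, whereas the $\sqrt n$ is already baked into the hypothesis on $\|M^{[i]}\|_\infty$. An equivalent route avoids the Frobenius norm entirely, using the induced-norm inequality $\|A\|_2\le\sqrt m\,\|A\|_\infty$ obtained by Cauchy--Schwarz from $\|Ax\|_2^2=\sum_{r=1}^m(a_r^\top x)^2\le\sum_{r=1}^m\|a_r\|_2^2\|x\|_2^2\le m\,\|A\|_\infty^2\|x\|_2^2$; both paths yield the same constant.
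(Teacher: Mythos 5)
Your proposal is correct and follows the same route as the paper: the paper's proof is exactly the one-line invocation of $\|M^{[i]}\|_2 \leq \sqrt{m}\,\|M^{[i]}\|_\infty$ as a standard matrix-norm property, which you then substantiate with the Frobenius-norm (or Cauchy--Schwarz) argument. Your identification of $\sqrt{m}$ as coming from the row dimension (the action dimension) of the $m\times n$ matrix $M^{[i]}$ is also consistent with the paper's usage.
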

\begin{proof}
	The proof is by $\|M^{[i]}\|_2 \leq \sqrt m \|M^{[i]}\|_\infty$ due to properties of matrix norms.
\end{proof}

\paragraph{- Bounds on the states $x_t$, the actions $u_t$, the approximate states $ \tilde x_t$, and the approximate actions $\tilde u_t$.}

\begin{lemma}\label{lem: bdd on Phix,u}
	For $\bm M_0, \dots, \bm M_{T}\in \M$,  we have
	\begin{align*}
	&	\|\Phi^x_{k}(\bm M_{t-H:t-1})\|_2\leq \kappa^2(1-\gamma)^{k-1}\one_{(k\leq H)}+\phi H (1-\gamma)^{k-2}\one_{(k\geq 2)}\\
	&	\|\Phi^u_{k}(\bm M_{t-H:t})\|_2\leq \kappa^3(1-\gamma)^{k-1}\one_{(k\leq H)}(2\sqrt{nm}+1)+ \kappa\phi H(1-\gamma)^{k-2}\one_{(k\geq 2)}
	\end{align*}
	where $\phi=2\kappa^5 \kappa_B \sqrt{mn} $.

\end{lemma}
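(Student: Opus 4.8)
The plan is to bound each $\Phi^x_k$ and $\Phi^u_k$ directly from their explicit expressions in Proposition \ref{prop: approx}, using the triangle inequality and submultiplicativity of $\|\cdot\|_2$ together with the two norm estimates already in hand: Lemma \ref{lem: |AK^k|2 bdd} for powers of $\AK$ and Lemma \ref{lem: M's L2 norm} for the blocks $M^{[i]}$, supplemented by $\|B\|_2\le \kappa_B$ and $\|\Kb\|_2\le \kappa$ (the latter from $(\kappa,\gamma)$-strong stability of $\Kb$).

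First, for $\Phi^x_{k}(\bm M_{t-H:t-1})=\AK^{k-1}\one_{(k\leq H)}+\sum_{i=1}^H \AK^{i-1}B M_{t-i}^{[k-i]}\one_{(1\leq k-i\leq H)}$, I would bound the leading term by $\|\AK^{k-1}\|_2\le \kappa^2(1-\gamma)^{k-1}$ via Lemma \ref{lem: |AK^k|2 bdd}, producing the $\kappa^2(1-\gamma)^{k-1}\one_{(k\leq H)}$ contribution. For a generic summand I would write $\|\AK^{i-1}B M_{t-i}^{[k-i]}\|_2\le \|\AK^{i-1}\|_2\,\|B\|_2\,\|M_{t-i}^{[k-i]}\|_2$ and substitute $\kappa^2(1-\gamma)^{i-1}$, $\kappa_B$, and $2\sqrt{mn}\kappa^3(1-\gamma)^{(k-i)-1}$ respectively. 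The crucial observation is that the exponents $i-1$ and $(k-i)-1$ add to $k-2$ independently of $i$, so every admissible summand is bounded by $\phi(1-\gamma)^{k-2}$ with $\phi=2\kappa^5\kappa_B\sqrt{mn}$. The main bookkeeping step is to read the indicator $\one_{(1\leq k-i\leq H)}$ correctly: it forces $k\ge 2$ (so the convolution sum is empty when $k=1$) and admits at most $H$ values of $i$, whence summing yields the claimed $\phi H(1-\gamma)^{k-2}\one_{(k\geq 2)}$ term.

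Second, for $\Phi^u_{k}(\bm M_{t-H:t})=M_t^{[k]}\one_{(k\leq H)}-\Kb\Phi^x_{k}(\bm M_{t-H:t-1})$, I would apply the triangle inequality, bound $\|M_t^{[k]}\|_2\le 2\sqrt{mn}\kappa^3(1-\gamma)^{k-1}$ by Lemma \ref{lem: M's L2 norm}, and bound $\|\Kb\Phi^x_{k}\|_2\le\kappa\,\|\Phi^x_{k}\|_2$ by feeding in the estimate just obtained for $\Phi^x_k$. The two $\one_{(k\leq H)}(1-\gamma)^{k-1}$ contributions combine into the factor $(2\sqrt{nm}+1)\kappa^3$, while the remaining piece becomes $\kappa\phi H(1-\gamma)^{k-2}\one_{(k\geq 2)}$, matching the statement exactly.

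The argument is entirely mechanical, so there is no genuine obstacle; the only points demanding care are (i) correctly identifying the exponent $(k-i)-1$ for the block $M^{[k-i]}$ so that the geometric factors collapse to $(1-\gamma)^{k-2}$, and (ii) tracking the indicators so that the edge case $k=1$ (empty sum) is faithfully reflected by $\one_{(k\geq 2)}$ and the count of contributing summands is bounded by $H$. I expect no step beyond this bookkeeping to be nontrivial.
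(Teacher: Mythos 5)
Your proposal is correct and follows essentially the same route as the paper's proof: triangle inequality plus submultiplicativity of $\|\cdot\|_2$, with $\|\AK^{i-1}\|_2\le\kappa^2(1-\gamma)^{i-1}$, $\|M^{[i]}\|_2\le 2\sqrt{mn}\kappa^3(1-\gamma)^{i-1}$, $\|B\|_2\le\kappa_B$, $\|\Kb\|_2\le\kappa$, the exponents collapsing to $(1-\gamma)^{k-2}$, and the indicator bookkeeping giving the factor $H\one_{(k\ge 2)}$. The only cosmetic difference is that you cite Lemma \ref{lem: M's L2 norm} where the paper inlines the equivalent step $\|M^{[k-i]}\|_2\le\sqrt{m}\|M^{[k-i]}\|_\infty$ together with the definition of $\M$.
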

\begin{proof}
	By Proposition \ref{prop: approx}, Lemma \ref{lem: |AK^k|2 bdd}, the definition of $\M$, and Definition \ref{def: disturbance action}, we have
	\begin{align*}
	\|	\Phi_{k}^x(\bm M_{t-H:t-1})\|_2&=	\|\AK^{k-1}\one_{(k\leq H)}+ \sum_{i=1}^H \AK^{i-1}B M_{t-i}^{[k-i]}\one_{(1\leq k-i\leq H)}\|\\
	& \leq \| \AK^{k-1}\|_2\one_{(k\leq H)}+ \sum_{i=1}^H  \|\AK^{i-1}\|_2\|B\|_2 \|M_{t-i}^{[k-i]}\|_2\one_{(1\leq k-i\leq H)}\\
	& \leq \kappa^2(1-\gamma)^{k-1}\one_{(k\leq H)}+ \sum_{i=1}^H \kappa^2(1-\gamma)^{i-1}\kappa_B\sqrt m\|M_{t-i}^{[k-i]}\|_\infty \one_{(1\leq k-i\leq H)}\\
	& \leq \kappa^2(1-\gamma)^{k-1}\one_{(k\leq H)}+ 2\kappa^5 \kappa_B \sqrt{mn}  \sum_{i=1}^H (1-\gamma)^{i-1}(1-\gamma)^{k-i-1}\one_{(1\leq k-i\leq H)}\\
	& \leq \kappa^2(1-\gamma)^{k-1}\one_{(k\leq H)}+2\kappa^5 \kappa_B \sqrt{mn}  H (1-\gamma)^{k-2}\one_{(k\geq 2)}\\
	&=\kappa^2(1-\gamma)^{k-1}\one_{(k\leq H)}+ \phi H (1-\gamma)^{k-2}\one_{(k\geq 2)},
	\end{align*}
	where $\phi=2\kappa^5 \kappa_B \sqrt{mn}  $. 
	
	Further, by Proposition \ref{prop: approx}, Lemma \ref{lem: |AK^k|2 bdd}, the definition of $\M$,  Definition \ref{def: disturbance action}, and the bound above, we have
	\begin{align*}
	\|	\Phi_{k}^u(\bm M_{t-H:t})\|_2&=\|	M_t^{[k]}\one_{(k\leq H)}-\Kb\Phi_{k}^x(\bm M_{t-H:t-1})\|_2\\
	& \leq \|M_t^{[k]}\|_2\one_{(k\leq H)}+\|\Kb\|_2\|\Phi_{k}^x(\bm M_{t-H:t-1})\|_2\\
	& \leq \sqrt m\|M_t^{[k]}\|_\infty\one_{(k\leq H)}+\|\Kb\|_2\|\Phi_{k}^x(\bm M_{t-H:t-1})\|_2\\
	& \leq 2\sqrt{nm}\kappa^3 (1-\gamma)^{k-1}\one_{(k\leq H)}+\kappa\left(\kappa^2(1-\gamma)^{k-1}\one_{(k\leq H)}+ \phi H (1-\gamma)^{k-2}\one_{(k\geq 2)}\right)\\
	&=\kappa^3(1-\gamma)^{k-1}\one_{(k\leq H)}(2\sqrt{nm}+1)+ \kappa\phi H(1-\gamma)^{k-2}\one_{(k\geq 2)}
	\end{align*}

\end{proof}

\begin{lemma}\label{lem: bdd on xt ut xt tilde ut tilde}
	
	When implementing  $\bm M_0, \dots, \bm M_{T}\in \M$, we have 
	\begin{align*}
	\max(	\|x_t\|_2, \|\tilde x_2\|_2)\leq  b_x,\quad \max(	\|u_t\|_2, \|\tilde u_2\|_2)\leq b_u
	\end{align*}
	where 	 $b_x\coloneqq \frac{\sqrt{n}\bar w (\kappa^2+\phi H)}{(1-\kappa^2(1-\gamma)^H)\gamma}$ and $b_u= \kappa  b_x+2\sqrt{mn}\kappa^3\bar w /\gamma$. 
	Define $b=\max(b_x, b_u)=\frac{\kappa\sqrt{n}\bar w (\kappa^2+\phi H)}{(1-\kappa^2(1-\gamma)^H)\gamma}+2\sqrt{mn}\kappa^3\bar w /\gamma$.
	
	Consequently, when $H\geq \frac{\log(2\kappa^2)}{\log((1-\gamma)^{-1})}$, we have $b \leq 8 \sqrt{mn^2} H \bar w \kappa^6 \kappa_B/\gamma$.
\end{lemma}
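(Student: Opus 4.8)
The plan is to control the four quantities by first bounding the \emph{approximate} objects $\tilde x_t,\tilde u_t$ directly and then transferring the bound to the \emph{true} state $x_t$ through the $H$-step identity of Proposition~\ref{prop: approx}. First I would bound $\|\tilde x_t\|_2$: since $\tilde x_t=\sum_{k=1}^{2H}\Phi^x_k(\bm M_{t-H:t-1})w_{t-k}$ and $\|w_{t-k}\|_2\le\sqrt n\,\bar w$, the triangle inequality gives $\|\tilde x_t\|_2\le\sqrt n\,\bar w\sum_{k=1}^{2H}\|\Phi^x_k\|_2$. Substituting the per-term bound of Lemma~\ref{lem: bdd on Phix,u} and summing the two resulting geometric series, each dominated by $\sum_{k\ge 0}(1-\gamma)^k=1/\gamma$, yields $\sum_{k=1}^{2H}\|\Phi^x_k\|_2\le(\kappa^2+\phi H)/\gamma$, so $\|\tilde x_t\|_2\le\beta:=\sqrt n\,\bar w(\kappa^2+\phi H)/\gamma$. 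Since $0<1-\kappa^2(1-\gamma)^H\le 1$ we have $\beta\le b_x$, which already settles the approximate-state bound.

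The main step, and the only place I expect real difficulty, is bounding the true state $x_t$, because the identity $x_t=\AK^H x_{t-H}+\tilde x_t$ couples $x_t$ to the earlier state $x_{t-H}$. I would resolve this by induction on $t$ using the contraction factor $\alpha:=\kappa^2(1-\gamma)^H$, which by Lemma~\ref{lem: |AK^k|2 bdd} dominates $\|\AK^H\|_2$ and is strictly below $1$ under the hypothesis $\kappa^2(1-\gamma)^H<1$. The key algebraic observation is that $b_x$ is exactly the fixed point of the affine map $z\mapsto\alpha z+\beta$, i.e.\ $b_x=\alpha b_x+\beta$. For the base case $t\le H$ the term $\AK^H x_{t-H}$ vanishes, since $x_{t-H}=0$ (as $x_0=0$ and $w_s=0$ for $s\le 0$), so $\|x_t\|_2=\|\tilde x_t\|_2\le\beta\le b_x$. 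For $t>H$, combining the inductive hypothesis $\|x_{t-H}\|_2\le b_x$ with $\|\tilde x_t\|_2\le\beta$ gives $\|x_t\|_2\le\alpha\|x_{t-H}\|_2+\|\tilde x_t\|_2\le\alpha b_x+\beta=b_x$, closing the induction.

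For the actions I would start from $u_t=-\Kb x_t+\sum_{i=1}^H M_t^{[i]}w_{t-i}$ and the analogous expression for $\tilde u_t$ with $x_t$ replaced by $\tilde x_t$. A small care is needed to obtain the stated constant $2\sqrt{mn}\kappa^3\bar w/\gamma$ rather than a looser one carrying an extra $\sqrt n$: I would bound each summand through the $\infty$-norm, $\|M_t^{[i]}w_{t-i}\|_2\le\sqrt m\,\|M_t^{[i]}w_{t-i}\|_\infty\le\sqrt m\,\|M_t^{[i]}\|_\infty\|w_{t-i}\|_\infty\le 2\sqrt{mn}\kappa^3\bar w(1-\gamma)^{i-1}$, and sum the geometric series to get $\sum_i\|M_t^{[i]}w_{t-i}\|_2\le 2\sqrt{mn}\kappa^3\bar w/\gamma$. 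Using $\|\Kb\|_2\le\kappa$ together with $\|x_t\|_2,\|\tilde x_t\|_2\le b_x$ then gives both $\|u_t\|_2\le b_u$ and $\|\tilde u_t\|_2\le b_u$ with $b_u=\kappa b_x+2\sqrt{mn}\kappa^3\bar w/\gamma$.

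Finally, for the closed-form estimate under $H\ge\log(2\kappa^2)/\log((1-\gamma)^{-1})$, I would use that this hypothesis forces $\kappa^2(1-\gamma)^H\le 1/2$, hence $1/(1-\kappa^2(1-\gamma)^H)\le 2$. Since $b=\max(b_x,b_u)=b_u$ (because $\kappa\ge 1$), I would expand $\kappa b_x\le 2\kappa\sqrt n\,\bar w(\kappa^2+\phi H)/\gamma$ with $\phi=2\kappa^5\kappa_B\sqrt{mn}$, whose dominant term is $4\kappa^6\kappa_B\sqrt{mn^2}H\bar w/\gamma$ (using $\sqrt n\sqrt{mn}=\sqrt{mn^2}$). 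The leftover terms $2\kappa^3\sqrt n\,\bar w/\gamma$ and $2\sqrt{mn}\kappa^3\bar w/\gamma$ are each crudely bounded by $2\kappa^6\kappa_B\sqrt{mn^2}H\bar w/\gamma$ via $\kappa,\kappa_B,H\ge 1$ and $\sqrt n,\sqrt{mn}\le\sqrt{mn^2}$, so the three contributions sum to at most $8\sqrt{mn^2}H\bar w\kappa^6\kappa_B/\gamma$, as claimed. I expect everything after the state induction to be routine triangle-inequality and geometric-series bookkeeping, with the only subtlety being the $\infty$-norm estimate that keeps the action constant sharp.
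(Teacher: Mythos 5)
Your proposal is correct and follows essentially the same route as the paper's proof: bound $\|\tilde x_t\|_2$ via the $\Phi^x_k$ estimates of Lemma~\ref{lem: bdd on Phix,u}, transfer to $\|x_t\|_2$ by induction through the identity $x_t=\AK^H x_{t-H}+\tilde x_t$ (the paper performs the same fixed-point induction, just stated less explicitly than your $b_x=\alpha b_x+\beta$ formulation), and then obtain the action bounds using the same $\sqrt m\,\|\cdot\|_\infty$ estimate on the terms $M_t^{[i]}w_{t-i}$. Your concluding arithmetic for the case $H\geq \log(2\kappa^2)/\log((1-\gamma)^{-1})$ also matches the paper's, so nothing further is needed.
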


\begin{proof}
	Firstly, we bound $\|\tilde x_t\|_2$ below.
	\begin{align*}
	\|\tilde x_t \|_2 &=\| \sum_{k=1}^{2H}\Phi^x_{k}(\bm M_{t-H:t-1})w_{t-k}\|_2 \leq \sum_{k=1}^{2H} \|\Phi^x_{k}(\bm M_{t-H:t-1})\|_2 \sqrt{n}\bar w\\
	& \leq \sqrt{n}\bar w\sum_{k=1}^{2H}\left(\kappa^2(1-\gamma)^{k-1}\one_{(k\leq H)}+\phi H (1-\gamma)^{k-2}\one_{(k\geq 2)}\right)\\
	& \leq \sqrt{n}\bar w ( \kappa^2/\gamma+ \phi H/\gamma)\leq  b_x
	\end{align*}
	Next, we bound $\|x_t\|_2$ by $\|\tilde x_t\|_2$'s bound and induction based on the recursive equation in Proposition \ref{prop: approx}. Specifically, we first note that $\|x_{t-H}\|_2=0\leq b_x$ when $t=0$.  At any $t\geq 0$, suppose $\|x_{t-H}\|_2\leq b_x$, then 
	\begin{align*}
	\|x_t\|_2&=\|\AK^H x_{t-H}+ \tilde x_t\|_2\leq \kappa^2(1-\gamma)^H b_x +  \sqrt{n}\bar w ( \kappa^2/\gamma+ \phi H/\gamma)= b_x
	\end{align*}
	Consequently, we have proved that $\|x_t\|_2 \leq b_x$ for all $t$.
	
	Similarly, we bound $\|\tilde u_t\|_2$ and $\|u_t\|_2$ below. 
	\begin{align*}
	\|\tilde u_t\|_2&=\left\|-\Kb \tilde x_t + \sum_{i=1}^H M_t^{[i]}w_{t-i}\right\|_2\leq \|-\Kb \tilde x_t\|_2 + \sum_{i=1}^H \|M_t^{[i]}w_{t-i}\|_2\\
	& \leq \|-\Kb \tilde x_t\|_2 + \sum_{i=1}^H\sqrt m \|M_t^{[i]}w_{t-i}\|_\infty\\
	& \leq \kappa b_x+ \sum_{i=1}^H\sqrt m (2\sqrt n)\kappa^3(1-\gamma)^{i-1}\bar w= \kappa b_x +2\sqrt{mn} \kappa^3\bar w /\gamma = b_u
	\end{align*}

	Further, by Proposition \ref{prop: approx}, we have %\red{(in this step, I revised the first inequality to improve the dependence on the dimension.)}
	\begin{align*}
	\|u_t\|_2&=\left\|-\Kb  x_t + \sum_{i=1}^H M_t^{[i]}w_{t-i}\right\|_2\leq \|-\Kb  x_t \|+\sum_{i=1}^H \sqrt m \| M_t^{[i]}w_{t-i}\|_\infty \\
	&\leq \kappa  b_x+ \sum_{i=1}^H 2\sqrt{mn}\kappa^3 (1-\gamma)^{i-1}\bar w \\
	& \leq \kappa  b_x+2\sqrt{mn}\kappa^3\bar w /\gamma=b_u
	\end{align*}
	
	Finally, notice that when $H\geq \frac{\log(2\kappa^2)}{\log((1-\gamma)^{-1})}$, we have $\kappa^2(1-\gamma)^{H} \leq 1/2$, and thus $b_x \leq 2\sqrt n \bar w(\kappa^2+\phi H)/\gamma$ and $b$'s bound follows naturally. 
	
\end{proof}

\paragraph{- Properties of $f_t(\bm M_{t-H:t})$ and $\mathring f_t(\bm M_t)$}
\begin{lemma}\label{lem: ft(Mt) property} 
	Consider any $\bm M_t \in \M$ and any $\tilde{\bm M_t}\in \M$ for all $t$, then 
			$$ |f_t(\bm M_{t-H:t})-f_t(\tilde{\bm M}_{t-H:t})| \leq  Gb(1+\kappa)\sqrt n \bar w\kappa^2\kappa_B \sum_{i=0}^H(1-\gamma)^{\max(i-1,0)} \sum_{j=1}^H\| M^{[j]}_{t-i}-\tilde{ M}^{[j]}_{t-i}\|_2.$$
			
			Further, $\|\nabla \mathring f_t(\bm M_t)\|_F\leq G_f$ for $\bm M_t \in \M$, where $G_f=\Theta(Gb(1+\kappa)\sqrt n \bar w\kappa^2\kappa_B \sqrt H \frac{1+\gamma}{\gamma})$.
			
			Consequently, when $H\geq \frac{\log(2\kappa^2)}{\log((1-\gamma)^{-1})}$, then $G_f\leq \Theta(\sqrt{n^3 H^3m})$.

\end{lemma}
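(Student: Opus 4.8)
The plan is to exploit the two structural facts already assembled in the excerpt: $f_t(\bm M_{t-H:t})=\E[c_t(\tilde x_t,\tilde u_t)]$ with $c_t$ having gradients bounded by $Gb$ on the ball of radius $b$ (Assumption~\ref{ass: bounded Hessian largest evalue}), and $\tilde x_t,\tilde u_t$ being \emph{affine} in the policy list through the maps $\Phi_k^x,\Phi_k^u$ of Proposition~\ref{prop: approx}. First I would reduce the claim to a Lipschitz estimate for $c_t$ composed with bounds on the perturbations of the surrogate state and action, and then carry out the combinatorial bookkeeping that turns those perturbations into the stated weighted sum of $\|M^{[j]}_{t-i}-\tilde M^{[j]}_{t-i}\|_2$.

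For the Lipschitz step, since Lemma~\ref{lem: bdd on xt ut xt tilde ut tilde} guarantees $\|\tilde x_t\|_2,\|\tilde x_t'\|_2\le b$ and $\|\tilde u_t\|_2,\|\tilde u_t'\|_2\le b$ whenever the policies lie in $\M$, the whole segment joining $(\tilde x_t,\tilde u_t)$ and $(\tilde x_t',\tilde u_t')$ stays in the convex ball on which the gradient bounds hold. Moving one coordinate at a time through the intermediate point $(\tilde x_t',\tilde u_t)$ and integrating the directional derivative gives $|c_t(\tilde x_t,\tilde u_t)-c_t(\tilde x_t',\tilde u_t')|\le Gb(\|\tilde x_t-\tilde x_t'\|_2+\|\tilde u_t-\tilde u_t'\|_2)$; taking expectations preserves this because the ensuing bound on the surrogate differences will be deterministic.

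The main work --- and the main obstacle --- is the third step: bounding $\|\tilde x_t-\tilde x_t'\|_2$ and $\|\tilde u_t-\tilde u_t'\|_2$ and reindexing into the stated form. Because $\Phi_k^x$ is affine in $\bm M$, its deterministic $\AK^{k-1}$ term cancels in the difference and only $\sum_{i=1}^H\AK^{i-1}B(M^{[k-i]}_{t-i}-\tilde M^{[k-i]}_{t-i})\one_{(1\le k-i\le H)}$ survives. Using $\|w_{t-k}\|_2\le\sqrt n\bar w$, $\|\AK^{i-1}\|_2\le\kappa^2(1-\gamma)^{i-1}$ (Lemma~\ref{lem: |AK^k|2 bdd}) and $\|B\|_2\le\kappa_B$, then swapping the order of the sums over $k\in\{1,\dots,2H\}$ and $i\in\{1,\dots,H\}$ and substituting $j=k-i$, the $x$-difference collapses to $\sqrt n\bar w\,\kappa^2\kappa_B\sum_{i=1}^H(1-\gamma)^{i-1}\sum_{j=1}^H\|M^{[j]}_{t-i}-\tilde M^{[j]}_{t-i}\|_2$. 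For $\tilde u_t$ I would use $\Phi_k^u=M_t^{[k]}\one_{(k\le H)}-\Kb\Phi_k^x$: the direct term supplies the lag-$0$ contribution $\sqrt n\bar w\sum_{j=1}^H\|M^{[j]}_{t}-\tilde M^{[j]}_{t}\|_2$, while $-\Kb\Phi_k^x$ reproduces the $x$-difference scaled by $\|\Kb\|_2\le\kappa$. Combining the two, the lag-$\ge 1$ terms pick up the factor $(1+\kappa)$, the lag-$0$ term is absorbed since $1\le(1+\kappa)\kappa^2\kappa_B$, and the index $i$ then ranges over $0,\dots,H$ with weight $(1-\gamma)^{\max(i-1,0)}$, giving exactly the stated bound. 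The bookkeeping of which $(i,j)$ pairs arise from each $k$, and matching the geometric weights, is where care is required.

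Finally, for the gradient bound I would use $\|\nabla\mathring f_t(\bm M)\|_F=\sup_{\|\bm V\|_F=1}\lim_{s\to0}s^{-1}(\mathring f_t(\bm M+s\bm V)-\mathring f_t(\bm M))$ and apply the first inequality with every slot perturbed by $s\bm V$, i.e. $\bm M_{t-i}=\bm M$ and $\tilde{\bm M}_{t-i}=\bm M+s\bm V$ for all $i$. Then $\sum_{i=0}^H(1-\gamma)^{\max(i-1,0)}\le 1+\gamma^{-1}=\frac{1+\gamma}{\gamma}$ and, by Cauchy--Schwarz, $\sum_{j=1}^H\|V^{[j]}\|_2\le\sqrt H\,\|\bm V\|_F=\sqrt H$, which yields $G_f=\Theta\!\big(Gb(1+\kappa)\sqrt n\,\bar w\,\kappa^2\kappa_B\sqrt H\,\tfrac{1+\gamma}{\gamma}\big)$. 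The stated asymptotics then follow by inserting the bound $b\le 8\sqrt{mn^2}\,H\bar w\,\kappa^6\kappa_B/\gamma$ from Lemma~\ref{lem: bdd on xt ut xt tilde ut tilde}, valid under $H\ge\frac{\log(2\kappa^2)}{\log((1-\gamma)^{-1})}$, so that the product $Gb\cdot\sqrt n\,\bar w\cdot\sqrt H=\Theta(n^{1.5}\sqrt m\,H^{1.5}\,\bar w^2)$ up to $\kappa,\kappa_B,\gamma^{-1},G$ factors, i.e. $G_f=\Theta(\sqrt{n^3H^3m})$ with a hidden constant quadratic in $\bar w$.
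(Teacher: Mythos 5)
Your proposal is correct and follows essentially the same route as the paper's proof: the same Lipschitz reduction via Assumption~\ref{ass: bounded Hessian largest evalue} and Lemma~\ref{lem: bdd on xt ut xt tilde ut tilde}, the same cancellation-and-reindexing bound on $\|\tilde x_t-\tilde x_t'\|_2$ and $\|\tilde u_t-\tilde u_t'\|_2$ through $\Phi_k^x,\Phi_k^u$ (including the $(1+\kappa)$ merge and absorption of the lag-$0$ term), and the same Cauchy--Schwarz plus geometric-sum argument for $G_f$. The only cosmetic difference is in the gradient step, where the paper perturbs within an enlarged set $\M_{out,H}$ containing $\M$ in its interior and invokes convexity, while you take a directional-derivative limit $s\to 0$ --- the same substance with the same constants.
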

\begin{proof}

 Let $\tilde x_t$ and $\dbtilde x_t$ denote the approximate states generated by $\bm M_{t-H:t-1}$ and $\tilde{\bm M}_{t-H:t-1}$ respectively. Define $\tilde u_t$ and $\dbtilde u_t$ similarly. We have
	\begin{align*}
	\|\tilde x_t-\dbtilde x_t\|_2& =\|\sum_{k=1}^{2H} (\Phi_{k}^x(\bm M_{t-H:t-1})-\Phi_{k}^x(\tilde{\bm M}_{t-H:t-1}))w_{t-k}\|_2\\
	& \leq \sum_{k=1}^{2H}\|\Phi_{k}^x(\bm M_{t-H:t-1})-\Phi_{k}^x(\tilde{\bm M}_{t-H:t-1})\|_2\sqrt n \bar w\\
	& \leq \sqrt n \bar w\sum_{k=1}^{2H}\| \sum_{i=1}^H \AK^{i-1}B (M_{t-i}^{[k-i]}-\tilde M_{t-i}^{[k-i]})\one_{(1\leq k-i \leq H)}\|_2\\
	&\leq \sqrt n \bar w\sum_{k=1}^{2H}\sum_{i=1}^H \kappa^2 (1-\gamma)^{i-1}\kappa_B\|M_{t-i}^{[k-i]}-\tilde M_{t-i}^{[k-i]}\|_2\one_{(1\leq k-i \leq H)}\\
	& = \sqrt n \bar w\kappa^2\kappa_B \sum_{i=1}^H(1-\gamma)^{i-1} \sum_{j=1}^H \|M_{t-i}^{[j]}-\tilde M_{t-i}^{[j]}\|_2
	\end{align*}
	and 
	\begin{align*}
	\|\tilde u_t-\dbtilde u_t\|_2&=\|-\Kb \tilde x_t +\Kb \dbtilde x_t+ \sum_{i=1}^H M_t^{[i]}w_{t-i}- \sum_{i=1}^H \tilde M_t^{[i]}w_{t-i}\|_2 \\
	&\leq \kappa 	\|\tilde x_t-\dbtilde x_t\|_2 + \sqrt n  \bar w \sum_{i=1}^H \|M_t^{[i]}-\tilde M_t^{[i]}\|_2\\
	& \leq \sqrt n \bar w\kappa^3\kappa_B \sum_{i=0}^H(1-\gamma)^{\max(i-1,0)} \sum_{j=1}^H\| M^{[j]}_{t-i}-\tilde{ M}^{[j]}_{t-i}\|_2 %+ \sqrt{n H} \bar w \|\bm M_t -\tilde{\bm M}_t\|_F
	\end{align*}
	
	Consequently, by Assumption \ref{ass: bounded Hessian largest evalue} and Lemma \ref{lem: bdd on xt ut xt tilde ut tilde}, we can prove the first bound in the lemma's statement below.
	\begin{align*}
	|f_t(\bm M_{t-H:t-1})-f_t(\tilde{\bm M}_{t-H:t-1})|& = |c_t(\tilde x_t, \tilde u_t)-c_t(\dbtilde x_t, \dbtilde u_t)|\\
	& \leq Gb (\|\tilde x_t-\dbtilde x_t\|_2+\|\tilde u_t-\dbtilde u_t\|_2)\\
	&\leq Gb(1+\kappa)\sqrt n \bar w\kappa^2\kappa_B \sum_{i=0}^H(1-\gamma)^{\max(i-1,0)} \sum_{j=1}^H\| M^{[j]}_{t-i}-\tilde{ M}^{[j]}_{t-i}\|_2.
	\end{align*}
	
	Next, we prove the gradient bound on $\mathring f_t(\bm M_t)$. Define a set $\M_{out,H}=\{\bm M: \|M[k]\|_\infty \leq 4\kappa^2 \sqrt n (1-\gamma)^{k-1}\}$, whose interior contains $\M_H$. Similar to Lemma \ref{lem: ft(Mt) property}, we can show $\mathring f_t(\bm M+\Delta \bm M)-\mathring f_t(\bm M)\leq \Theta(b\sqrt n w_{\max} \sum_{i=0}^H(1-\gamma)^{\max(i-1,0)} \sum_{j=1}^H\|\Delta M{[j]}\|_2)$ for any $\bm M\in \M_H$ and $\bm M+\Delta \bm M+ \M_{out,H}$. 
	
	By the definition of the operator's norm, we have
	\begin{align*}
	&	\|\nabla \mathring f_t(\bm M)\|_F= \sup_{\Delta \bm M\not =0,\bm M+\Delta \bm M+ \M_{out,H}} \frac{\langle \nabla \mathring f_t(\bm M), \Delta \bm M\rangle}{\|\Delta \bm M\|_F}\\
		& \leq \sup_{\Delta \bm M\not =0,\bm M+\Delta \bm M+ \M_{out,H}} \frac{\mathring f_t(\bm M+\Delta \bm M)-\mathring f_t(\bm M)}{\|\Delta \bm M\|_F}\\
		& \leq \sup_{\Delta \bm M\not =0,\bm M+\Delta \bm M+ \M_{out,H}}\frac{\Theta(b\sqrt n w_{\max} \sum_{i=0}^H(1-\gamma)^{\max(i-1,0)} \sum_{j=1}^H\|\Delta M^{[j]}\|_2)}{\|\Delta \bm M\|_F}\\
		& \leq \sup_{\Delta \bm M\not =0,\bm M+\Delta \bm M+ \M_{out,H}}\frac{\Theta(b\sqrt n w_{\max} \sum_{i=0}^H(1-\gamma)^{\max(i-1,0)} \sqrt H\|\Delta \bm M\|_F)}{\|\Delta \bm M\|_F}\\
		& \leq \Theta(b\sqrt n w_{\max} \sqrt H \frac{1+\gamma}{\gamma})
	\end{align*}
\end{proof}

\section{Proofs of the lemmas used in the proof of Theorem \ref{thm: feasibility}}
\subsection{Proof of Lemma \ref{lem: bdd xt-H part}}
The proof is straightforward from Lemma \ref{lem: |AK^k|2 bdd} and
	 Lemma \ref{lem: bdd on xt ut xt tilde ut tilde}.
	\begin{align*}
	\|D_x \AK^H x_{t-H}\|_\infty &\leq \|D_x\|_\infty \|\AK^H x_{t-H}\|_\infty \leq 	\|D_x\|_\infty \| \AK^H x_{t-H}\|_2 \\
	&\leq \|D_x\|_\infty \| \AK^H\|_2 \|x_{t-H}\|_2 \leq \|D_x\|_\infty \kappa^2(1-\gamma)^H b\leq \epsilon_1(H)\\
	\|D_u\Kb \AK^H x_{t-H}\|_\infty &\leq \|D_u\|_\infty \|\Kb \AK^H x_{t-H}\|_\infty\leq 	\|D_u\|_\infty \|\Kb \AK^H x_{t-H}\|_2 \\
	&\leq \|D_u\|_\infty \|\Kb\|_2\| \AK^H\|_2 \|x_{t-H}\|_2 \leq \|D_u\|_\infty \kappa^3(1-\gamma)^H b\leq\epsilon_1(H) %\|D_u\|_\infty \kappa^3(1-\gamma)^H\sqrt{k_u}b
	\end{align*}
	Hence, we have $\epsilon_1(H)=c_1 n\sqrt{m} H(1-\gamma)^H$ and $c_1=8 \bar w \kappa^{9}\kappa_B\max(\|D_x\|_\infty, \|D_u\|_\infty)/\gamma$.

\subsection{Proof of Lemma \ref{lem: g(Mt:t-H) to g(Mt) bdd error}}

Lemma \ref{lem: g(Mt:t-H) to g(Mt) bdd error} is proved by first establishing a smoothness property of $g_i^x(\cdot)$ and $g_j^u(\cdot)$ in Lemma \ref{lem: g is Lg Lip cont} and then leveraging the slow updates of OGD. The  details of the proof are provided below.

\begin{lemma}\label{lem: g is Lg Lip cont}

	Consider any $\bm M_t \in \M$ and any $\tilde{\bm M_t}\in \M$ for all $t$, then 
	\begin{align*}
	&	\max_{1\leq i \leq k_x}\left| g^x_{i}(\bm M_{t-H:t-1})-g^x_{i}(\tilde{\bm M}_{t-H:t-1})\right| \leq L_g(H) \sum_{k=1}^H (1-\gamma)^{k-1}\|\bm M_{t-k}-\tilde{\bm M}_{t-k}\|_F\\
	&\max_{	1\leq j \leq k_u}\left| g^u_{j}(\bm M_{t-H:t})-g^u_{j}(\tilde{\bm M}_{t-H:t})\right| \leq L_g(H) \sum_{k=0}^H (1-\gamma)^{\max(k-1,0)}\|\bm M_{t-k}-\tilde{\bm M}_{t-k}\|_F
	\end{align*}
	where $L_g(H)=\bar w \sqrt n\max(\|D_x\|_\infty, \|D_u\|_\infty )\kappa^3\kappa_B \sqrt H.$
\end{lemma}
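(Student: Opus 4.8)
The plan is to exploit two structural facts from Proposition \ref{prop: approx}: the surrogate maps $\Phi^x_s$ and $\Phi^u_s$ are \emph{affine} in the policy parameters, and each $g^x_i,g^u_j$ is a sum over $s$ of $L_1$ norms of a fixed row of $D_x$ (resp.\ $D_u$) applied to these maps. The reverse triangle inequality $\bigl|\,\|a\|_1-\|b\|_1\,\bigr|\le\|a-b\|_1$ then turns a difference of $g$'s into the $L_1$ norm of a difference of $\Phi$'s, and affineness makes that difference a linear function of $\bm M_{t-k}-\tilde{\bm M}_{t-k}$ in which the policy-independent term $\AK^{s-1}$ cancels.

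First I would treat $g^x_i$. Applying the reverse triangle inequality termwise gives
\[
\bigl|g^x_i(\bm M_{t-H:t-1})-g^x_i(\tilde{\bm M}_{t-H:t-1})\bigr|
\le \bar w\sum_{s=1}^{2H}\bigl\|D_{x,i}^\top\bigl(\Phi^x_s(\bm M)-\Phi^x_s(\tilde{\bm M})\bigr)\bigr\|_1,
\]
and by the formula for $\Phi^x_s$ the inner difference equals $\sum_{r=1}^H\AK^{r-1}B\,(M^{[s-r]}_{t-r}-\tilde M^{[s-r]}_{t-r})\one_{(1\le s-r\le H)}$. Bounding each summand via $\|v\|_1\le\sqrt n\|v\|_2$, submultiplicativity, Lemma \ref{lem: |AK^k|2 bdd} (so $\|\AK^{r-1}\|_2\le\kappa^2(1-\gamma)^{r-1}$), $\|B\|_2\le\kappa_B$, and $\|D_{x,i}\|_2\le\|D_x\|_\infty$, each term is at most $\sqrt n\,\|D_x\|_\infty\kappa^2\kappa_B(1-\gamma)^{r-1}\|M^{[s-r]}_{t-r}-\tilde M^{[s-r]}_{t-r}\|_2$.

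The decisive regrouping is to swap the order of summation and substitute $k=s-r$, which collapses the double sum over $(s,r)$ into $\sum_{r=1}^H(1-\gamma)^{r-1}\sum_{k=1}^H\|M^{[k]}_{t-r}-\tilde M^{[k]}_{t-r}\|_2$; a single Cauchy--Schwarz step on the inner sum gives $\sum_{k=1}^H\|M^{[k]}_{t-r}-\tilde M^{[k]}_{t-r}\|_2\le\sqrt H\,\|\bm M_{t-r}-\tilde{\bm M}_{t-r}\|_F$, which is exactly where the $\sqrt H$ in $L_g(H)$ comes from. Collecting constants yields the $x$-bound with coefficient $\kappa^2$, absorbed into $\kappa^3$ since $\kappa\ge1$.

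For $g^u_j$ the only new feature is $\Phi^u_s=M^{[s]}_t\one_{(s\le H)}-\Kb\Phi^x_s$, so its difference splits into a current-policy piece $D_{u,j}^\top(M^{[s]}_t-\tilde M^{[s]}_t)$ — which contributes the $k=0$ term with weight $(1-\gamma)^0=1$ — and a piece $-D_{u,j}^\top\Kb(\Phi^x_s(\bm M)-\Phi^x_s(\tilde{\bm M}))$ bounded exactly as above but with the extra factor $\|\Kb\|_2\le\kappa$ that upgrades $\kappa^2$ to $\kappa^3$; bounding the lone current-policy piece by the same $\kappa^3\kappa_B$ constant unifies the two coefficients, and $(1-\gamma)^{\max(k-1,0)}$ packages the $k=0$ and $k\ge1$ weights. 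I expect the main obstacle to be purely bookkeeping: keeping the two indices straight — $s$ ranging over $\Phi$-components and $r$ over time shifts — through the reindexing $k=s-r$, and checking that the single constant $L_g(H)$ serves both $g^x$ and $g^u$. The analytic content is only the reverse-triangle/affineness reduction together with the one Cauchy--Schwarz step.
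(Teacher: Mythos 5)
Your proposal is correct and follows essentially the same route as the paper's proof: reverse triangle inequality on the $L_1$ norms, cancellation of the policy-independent term via affineness of $\Phi^x_s$, the chain $\|\cdot\|_1\le\sqrt n\|\cdot\|_2$ with Lemma \ref{lem: |AK^k|2 bdd} and $\|D_{x,i}\|_2\le\|D_x\|_\infty$, reindexing the double sum, and a single Cauchy--Schwarz step producing the $\sqrt H$ factor, with the $u$-case split into the current-policy piece (the $k=0$ term) and the $\Kb$-weighted $x$-piece unified under the common constant since $\kappa,\kappa_B\ge1$. No gaps; the bookkeeping you flag is exactly what the paper's proof carries out.
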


\begin{proof}
We first provide a bound on $\|D_{x,i}^\top \Phi^x_{k}(\bm M_{t-H:t-1})\|_1-\|D_{x,i}^\top \Phi^x_{k}(\tilde{\bm M}_{t-H:t-1})\|_1$.
	\begin{align*}
	&|	\|D_{x,i}^\top \Phi^x_{k}(\bm M_{t-H:t-1})\|_1-\|D_{x,i}^\top \Phi^x_{k}(\tilde{\bm M}_{t-H:t-1})\|_1| \leq \| D_{x,i}^\top \Phi^x_{k}(\bm M_{t-H:t-1})-D_{x,i}^\top \Phi^x_{k}(\tilde{\bm M}_{t-H:t-1})\|_1\\
	=\ & \left\|D_{x,i}^\top\left(\sum_{s=1}^H \AK^{s-1}B (M_{t-s}^{[k-s]}-\tilde M_{t-s}^{[k-s]}) \one_{(1\leq k-s\leq H)} \right)\right\|_1\\
	= \ & \left\|\sum_{s=1}^H D_{x,i}^\top\AK^{s-1}B (M_{t-s}^{[k-s]}-\tilde M_{t-s}^{[k-s]}) \one_{(1\leq k-s\leq H)}\right\|_1\\
	\leq \ & \sqrt n \left\|\sum_{s=1}^H D_{x,i}^\top\AK^{s-1}B (M_{t-s}^{[k-s]}-\tilde M_{t-s}^{[k-s]}) \one_{(1\leq k-s\leq H)}\right\|_2\\
	\leq \ &\sqrt n \sum_{s=1}^H \| D_{x,i}^\top\|_2\|\AK^{s-1}\|_2\|B\|_2 \|M_{t-s}^{[k-s]}-\tilde M_{t-s}^{[k-s]}\|_2 \one_{(1\leq k-s\leq H)}\\
	\leq \ & \sqrt n \sum_{s=1}^H  \| D_{x,i}^\top\|_1 \kappa^2 (1-\gamma)^{s-1}\kappa_B \|M_{t-s}^{[k-s]}-\tilde M_{t-s}^{[k-s]}\|_2 \one_{(1\leq k-s\leq H)}\\
	\leq \ & \sqrt n \|D_x\|_\infty \kappa^2 \kappa_B \sum_{s=1}^H (1-\gamma)^{s-1} \|M_{t-s}^{[k-s]}-\tilde M_{t-s}^{[k-s]}\|_2\one_{(1\leq k-s\leq H)}
	\end{align*}
	Therefore, for any $1\leq i \leq k_x$, we have
	\begin{align*}
	&\quad \ 	| g^x_{i}(\bm M_{t-H:t-1})-g^x_{i}(\tilde{\bm M}_{t-H:t-1})| \\
	&\leq \bar w \sum_{k=1}^{2H}|	\|D_{x,i}^\top \Phi^x_{k}(\bm M_{t-H:t-1})\|_1-\|D_{x,i}^\top \Phi^x_{k}(\tilde{\bm M}_{t-H:t-1})\|_1| \\
	&\leq \bar w  \sqrt n \|D_x\|_\infty \kappa^2 \kappa_B\sum_{k=1}^{2H} \sum_{s=1}^H  (1-\gamma)^{s-1} \|M_{t-s}^{[k-s]}-\tilde M_{t-s}^{[k-s]}\|_2\one_{(1\leq k-s\leq H)}\\
	& = \bar w  \sqrt n \|D_x\|_\infty \kappa^2 \kappa_B \sum_{s=1}^H \sum_{k=1}^{2H} (1-\gamma)^{s-1}\|M_{t-s}^{[k-s]}-\tilde M_{t-s}^{[k-s]}\|_2\one_{(1\leq k-s\leq H)}\\
	& \leq \bar w  \sqrt n \|D_x\|_\infty \kappa^2 \kappa_B \sum_{s=1}^H \sum_{k=1}^{2H} (1-\gamma)^{s-1}\|M_{t-s}^{[k-s]}-\tilde M_{t-s}^{[k-s]}\|_F\one_{(1\leq k-s\leq H)}\\
	&\leq \bar w  \sqrt n \|D_x\|_\infty \kappa^2 \kappa_B \sqrt H \sum_{s=1}^H(1-\gamma)^{s-1} \|\bm M_{t-s}-\tilde{\bm M}_{t-s}\|_F
	\end{align*}

	Next, we provide a bound on $\|D_{u,j}^\top \Phi^u_{k}(\bm M_{t-H:t})\|_1-\|D_{u,j}^\top \Phi^u_{k}(\tilde{\bm M}_{t-H:t})\|_1$.
	\begin{align*}
	&|	\|D_{u,j}^\top \Phi^u_{k}(\bm M_{t-H:t})\|_1-\|D_{u,j}^\top \Phi^u_{k}(\tilde{\bm M}_{t-H:t})\|_1| \leq \| D_{u,j}^\top (\Phi^u_{k}(\bm M_{t-H:t})-\Phi^u_{k}(\tilde{\bm M}_{t-H:t})\|_1\\
	\leq \ & \|D_{u,j}^\top (M_t^{[k]}-\tilde M_t^{[k]})\|_1\one_{(k\leq H)}+ \left\| \sum_{s=1}^H D_{u,j}^\top\Kb\AK^{s-1}B (M_{t-s}^{[k-s]}-\tilde M_{t-s}^{[k-s]})\right\|_1 \one_{(1\leq k-s\leq H)}\\
	\leq \ & \sqrt n  \|D_{u,j}^\top (M_t^{[k]}-\tilde M_t^{[k]})\|_2\one_{(k\leq H)}+ \sqrt n \left\| \sum_{s=1}^H D_{u,j}^\top\Kb\AK^{s-1}B (M_{t-s}^{[k-s]}-\tilde M_{t-s}^{[k-s]})\right\|_2 \one_{(1\leq k-s\leq H)}\\
	\leq \ & \sqrt n \|D_{u,j}^\top \|_2 \| M_t^{[k]}-\tilde M_t^{[k]}\|_2\one_{(k\leq H)}+ \sqrt n \sum_{s=1}^H \|D_{u,j}\|_2 \kappa^3(1-\gamma)^{s-1}\kappa_B \| M_{t-s}^{[k-s]}-\tilde M_{t-s}^{[k-s]}\|_2 \one_{(1\leq k-s\leq H)}\\
	\leq \ & \sqrt n \|D_{u,j}^\top \|_1 \| M_t^{[k]}-\tilde M_t^{[k]}\|_2\one_{(k\leq H)}+ \sqrt n \sum_{s=1}^H \|D_{u,j}\|_1 \kappa^3(1-\gamma)^{s-1}\kappa_B \| M_{t-s}^{[k-s]}-\tilde M_{t-s}^{[k-s]}\|_2 \one_{(1\leq k-s\leq H)}
	\end{align*}
	Therefore, for any $1\leq j\leq k_u$, we have
	\begin{align*}
	&\quad\ | g^u_{j}(\bm M_{t-H:t})-g^u_{j}(\tilde{\bm M}_{t-H:t})| \leq \sum_{k=1}^{2H} \bar w | 	\|D_{u,j}^\top \Phi^u_{k}(\bm M_{t-H:t})\|_1-\|D_{u,j}^\top \Phi^u_{k}(\tilde{\bm M}_{t-H:t})\|_1|\\
	&\leq \bar w \sqrt n \|D_u\|_\infty  \sqrt H \|\bm M_t-\tilde{\bm M}_t\|_F + \bar w \sqrt n \|D_u\|_\infty \kappa^3\kappa_B \sqrt H \sum_{s=1}^H (1-\gamma)^{s-1}\|\bm M_{t-s}-\tilde{\bm M}_{t-s}\|_F \\
	& \leq  \bar w \sqrt n \|D_u\|_\infty \kappa^3\kappa_B \sqrt H \sum_{s=0}^H(1-\gamma)^{\max(s-1,0)} \|\bm M_{t-s}-\tilde{\bm M}_{t-s}\|_F
	\end{align*}
	where the last inequality uses $\kappa\geq 1, \kappa_B\geq 1$. 	
\end{proof}

\begin{proof}[Proof of Lemma \ref{lem: g(Mt:t-H) to g(Mt) bdd error}]
Firstly, by OGD's definition and Lemma \ref{lem: ft(Mt) property}, we have $\|\bm M_t-\bm M_{t-1}\|_F\leq \eta G_f$ and  $\|\bm M_t-\bm M_{t-k}\|_F\leq k\eta G_f$.   By Lemma \ref{lem: g is Lg Lip cont}, we have
	\begin{align*}
	&	\max_{1\leq i \leq k_x}\left| g^x_{i}(\bm M_{t-H+1:t})-g^x_{i}({\bm M}_{t}, \dots, \bm M_t)\right| \leq L_g(H) \sum_{k=1}^{H-1} (1-\gamma)^{k-1}\|\bm M_{t-k}-\bm M_t\|_F\\
	& \leq L_g(H) \sum_{k=1}^{H-1} (1-\gamma)^{k-1}k \eta G_f \leq L_g(H)\eta G_f \frac{1}{\gamma^2}
	\end{align*}
	and 
	\begin{align*}
	&	\max_{1\leq j \leq k_u}\left| g^u_{j}(\bm M_{t-H:t})-g^u_{j}(\tilde{\bm M}_{t}, \dots, \bm M_t)\right| \leq L_g(H) \sum_{k=1}^H (1-\gamma)^{k-1}\|\bm M_{t-k}-\bm M_t\|_F\leq \epsilon_2(\eta, H)\\
	& \leq L_g(H) \sum_{k=1}^H (1-\gamma)^{k-1}k \eta G_f \leq L_g(H)\eta G_f \frac{1}{\gamma^2}\leq \epsilon_2(\eta, H)
	\end{align*}
	where $\epsilon_2(\eta, H)=c_2 n^2 \sqrt m H^2\eta $. %\red{(Because of my errors above, I have to change the definition of $\epsilon_2$!!!)}
\end{proof}

\subsection{Proof of Lemma \ref{lem: define M_ap and epsilon3}}

Firstly, notice that when implementing $u_t^K=-K x_t^K$ and $x_0=0$, we have
\begin{align*}
x_t^K&=\sum_{s=1}^t A_{K}^{s-1}w_{t-s}, \qquad u_t^K=-\sum_{s=1}^t K A_{K}^{s-1}w_{t-s},
\end{align*}
where $A_K=A-BK$.

In addition, when implementing a disturbance-action controller $\bm M$, the state satisfies 
\begin{align*}
x_t^{\bm M(K)}&=\sum_{s=1}^t \tilde \Phi^x_{t,s}(\bm M(K))w_{t-s}, \quad \text{ where }
\tilde \Phi^x_{s}(\bm M(K))= \AK^{s-1}+ \sum_{j=\max(1,s-H)}^{s-1} \AK^{j-1}B M^{[s-j]}(K).
\end{align*}
Specifically, when $s\leq H$, we have
\begin{align*}
\tilde \Phi^x_{s}(\bm M(K))& =\AK^{s-1}+ \sum_{j=1}^{s-1}\AK^{j-1}B M^{[s-j]}(K)\\
& = \AK^{s-1}+ \sum_{j=1}^{s-1}\AK^{j-1}B (\Kb-K)A_{K}^{s-j-1}\\
&= \AK^{s-1}+ \sum_{j=1}^{s-1}\AK^{j-1}(A_{K}-\AK)A_{K}^{s-j-1}\\
&=\AK^{s-1}+ \sum_{j=1}^{s-1}\AK^{j-1}A_{K}^{s-j}-\sum_{j=1}^{s-1}\AK^{j}A_{K}^{s-j-1}\\
&=\AK^{s-1}+A_{K}^{s-1}-\AK^{s-1}=A_{K}^{s-1}.
\end{align*}
When $s>H$, 
\begin{align*}
\tilde \Phi^x_{s}(\bm M(K))& =\AK^{s-1}+ \sum_{j=s-H}^{s-1}\AK^{j-1}B M^{[s-j]}(K)\\
&=\AK^{s-1}+ \sum_{j=s-H}^{s-1}\AK^{j-1}B (\Kb-K)A_{K}^{s-j-1}\\
&=\AK^{s-1}+ \sum_{j=s-H}^{s-1}\AK^{j-1}A_{K}^{s-j}-\sum_{j=s-H}^{s-1}\AK^{j}A_{K}^{s-j-1}\\
&=\AK^{s-H-1}A_{K}^H
\end{align*}
Therefore,
\begin{align*}
x_t^{\bm M(K)}=\sum_{s=1}^H  A_{K}^{s-1} w_{t-s}+ \sum_{s=H+1}^t \AK^{s-H-1}A_{K}^H w_{t-s}
\end{align*}
Accordingly, we have a formula for $u_t^{\bm M(K)}$ when implementing $\bm M(K)$.
\begin{align*}
u_t^{\bm M(K)}&=-\Kb x_t^{\bm M(K)} + \sum_{s=1}^H M^{[s]}(K)w_{t-s}\\
& = -\sum_{s=1}^H \Kb A_{K}^{s-1} w_{t-s}- \sum_{s=H+1}^t \Kb \AK^{s-H-1}A_{K}^H w_{t-s}+\sum_{s=1}^H (\Kb-K)A_{K}^{s-1}w_{t-s}\\ 
&=-\sum_{s=1}^H K A_{K}^{s-1} w_{t-s}- \sum_{s=H+1}^t \Kb \AK^{s-H-1}A_{K}^H w_{t-s}
\end{align*}

Therefore, 
\begin{align*}
\|x_t^K-x_t^{\bm M(K)}\|_2&= \|\sum_{s=H+1}^t (\AK^{s-H-1}-A_{K}^{s-H-1})A_{K}^H w_{t-s}\|_2\\
& \leq \sum_{s=H+1}^t (\|\AK^{s-H-1}\|+\|A_{K}^{s-H-1}\|_2 )\|A_{K}^H\|_2 \sqrt n \bar w\\
& \leq \sum_{s=H+1}^t 2 \kappa^2(1-\gamma)^{s-H-1} \kappa^2 (1-\gamma)^H \sqrt n \bar w\\
& \leq 2\kappa^4 \sqrt n \bar w(1-\gamma)^H/\gamma
\end{align*}
and 
\begin{align*}
\|u_t^K-u_t^{\bm M(K)}\|_2&=\|\sum_{s=H+1}^t (\Kb\AK^{s-H-1}-KA_{K}^{s-H-1})A_{K}^H w_{t-s}\|_2\\
& \leq \sum_{s=H+1}^t 2 \kappa^5 (1-\gamma)^{s-H-1}\sqrt n \bar w (1-\gamma)^H\\
& \leq 2\kappa^5 \sqrt n \bar w (1-\gamma)^H/\gamma
\end{align*}
Next, we verify that $\bm M(K) \in \M$:
\begin{align*}
\|M^{[i]}(K)\|_\infty& \leq \sqrt n 	\|M^{[i]}(K)\|_2 \leq 2\sqrt n \kappa \kappa^2(1-\gamma)^{i-1}=2\sqrt n \kappa^3(1-\gamma)^{i-1}.
\end{align*}

Finally, we prove the loose feasibility. 
Notice that 
\begin{align*}
\|D_x x_t^{\bm M(K)}-D_x x_t^K\|_\infty &\leq \|D_x \|_\infty \|x_t^{\bm M(K)} -x_t^K\|_\infty \\
& \leq  \|D_x \|_\infty \|x_t^{\bm M(K)} -x_t^K\|_2 \\
& \leq \|D_x\|_\infty 2\kappa^4 \sqrt n \bar w(1-\gamma)^H/\gamma
\end{align*}
Therefore, for any $1\leq i \leq k_x$, we have
\begin{align*}
D_{x,i}^\top x_t^{\bm M(K)}\leq \|D_x x_t^{\bm M(K)}-x_t^K\|_\infty + D_{x,i}^\top x_t^K \leq \|D_x\|_\infty 2\kappa^4 \sqrt n \bar w(1-\gamma)^H/\gamma+ d_{x,i}
\end{align*}
Similarly, 
\begin{align*}
D_{u,j}^\top u_t^{\bm M(K)}\leq \|D_u u_t^{\bm M(K)}-D_u u_t^K\|_\infty + D_{u,j}^\top u_t^K \leq \|D_u\|_\infty 2\kappa^5 \sqrt n \bar w(1-\gamma)^H/\gamma+ d_{u,j}
\end{align*}
Let $\epsilon_3(H)=\max(\|D_x\|_\infty, \|D_u\|_\infty)2\kappa^5 \sqrt n \bar w(1-\gamma)^H/\gamma$. This completes the proof.

\subsection{Proof of Corollary \ref{cor: K's M in Omega}}
	Let $\{(x_t^K, u_t^K)\}_{t=0}^T $ and $\{(x_t^{\bm M(K)}, u_t^{\bm M(K)})\}_{t=0}^T $ denote the state-action trajectory under linear controller $K$ and disturbance-action policy $\bm M$ respectively.  Similar to Lemma \ref{lem: define M_ap and epsilon3}, we have $D_x x_t^{\bm M(K)} \leq d_x-\epsilon_0\one_{k_x}+\epsilon_3\one_{k_x}$ and $D_u u_t^{\bm M(K)} \leq d_u-\epsilon_0\one_{k_u}+\epsilon_3\one_{k_u}$ for any disturbances $\{w_k \in \W\}_{k=0}^T$. Since $D_x x_t^{\bm M(K)}=D_x \AK^H x^{\bm M(K)}_{t-H}+D_x \tilde x_t^{\bm M(K)}$, by Lemma \ref{lem: bdd xt-H part}, we have  $D_x \tilde x_{t}^{\bm M(K)}\leq d_x +(\epsilon_1+\epsilon_3-\epsilon_0)\one_{k_x}$. Similarly, we can show that $D_u \tilde u_{t}^{\bm M(K)}\leq d_u +(\epsilon_1+\epsilon_3-\epsilon_0)\one_{k_u}$. Following the procedures in Step 2 of Section \ref{sec: algorithm} and noticing that $\bm M(K)$ is time-invariant, we can show $	\bm M(K)\in \Omega_{\epsilon_0-\epsilon_1-\epsilon_3}$ by the definitions. 
\section{Proofs of lemmas used in the proof of Theorem \ref{thm: regret bdd}}

\subsection{Proof of Lemma \ref{lem:  part i}}
We first prove the bound on Part i (Lemma \ref{lem: part i}) because we will use this bound when proving Lemma \ref{lem: part iii second half}.

To bound Part i, we consider the following two lemmas.
\begin{lemma}\label{lem: c and f's difference}	
	For $\bm M_t \in \M$ for all $t$, we have
	\begin{align*}\label{lem: c and f's difference}
	\left| J_T(\bm M_{0:T})- \sum_{t=0}^T f_t(\bm M_{t-H:t})\right| \leq T G b^2 \kappa^2 (1-\gamma)^H (1+\kappa)
	\end{align*}
\end{lemma}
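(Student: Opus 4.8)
The plan is to decompose the difference into per-stage contributions and bound each one using the state/action approximation from Proposition~\ref{prop: approx} together with the local Lipschitz property of the stage costs. Writing $J_T(\bm M_{0:T})=\E[\sum_{t=0}^T c_t(x_t,u_t)]$ and $f_t(\bm M_{t-H:t})=\E[c_t(\tilde x_t,\tilde u_t)]$, I would first use the triangle inequality to get
\begin{align*}
\left| J_T(\bm M_{0:T})-\sum_{t=0}^T f_t(\bm M_{t-H:t})\right|\leq \sum_{t=0}^T \E\!\left[|c_t(x_t,u_t)-c_t(\tilde x_t,\tilde u_t)|\right],
\end{align*}
so it suffices to bound each summand pointwise in the disturbance realization and then sum.

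Next I would control the gap between the true and surrogate trajectories. By Proposition~\ref{prop: approx}, $x_t-\tilde x_t=\AK^H x_{t-H}$ and $u_t-\tilde u_t=-\Kb\AK^H x_{t-H}$. Applying Lemma~\ref{lem: |AK^k|2 bdd} to obtain $\|\AK^H\|_2\leq \kappa^2(1-\gamma)^H$, using $\|\Kb\|_2\leq\kappa$ from the strong stability of $\Kb$, and invoking Lemma~\ref{lem: bdd on xt ut xt tilde ut tilde} for $\|x_{t-H}\|_2\leq b$, this gives $\|x_t-\tilde x_t\|_2\leq \kappa^2(1-\gamma)^H b$ and $\|u_t-\tilde u_t\|_2\leq \kappa^3(1-\gamma)^H b$.

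To convert this into a cost gap I would invoke Assumption~\ref{ass: bounded Hessian largest evalue}. Since Lemma~\ref{lem: bdd on xt ut xt tilde ut tilde} guarantees $\|x_t\|_2,\|\tilde x_t\|_2\leq b$ and $\|u_t\|_2,\|\tilde u_t\|_2\leq b$, and the region $\{\|x\|_2\leq b,\ \|u\|_2\leq b\}$ is convex, the gradient bounds $\|\nabla_x c_t\|_2\leq Gb$ and $\|\nabla_u c_t\|_2\leq Gb$ hold along the segments joining the two points. Splitting $c_t(x_t,u_t)-c_t(\tilde x_t,\tilde u_t)$ as $\bigl(c_t(x_t,u_t)-c_t(\tilde x_t,u_t)\bigr)+\bigl(c_t(\tilde x_t,u_t)-c_t(\tilde x_t,\tilde u_t)\bigr)$ and integrating the gradient along each segment yields
\begin{align*}
|c_t(x_t,u_t)-c_t(\tilde x_t,\tilde u_t)|\leq Gb\bigl(\|x_t-\tilde x_t\|_2+\|u_t-\tilde u_t\|_2\bigr)\leq Gb^2\kappa^2(1-\gamma)^H(1+\kappa).
\end{align*}

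Finally, this pointwise estimate is uniform over the disturbance realization, so taking expectations and summing over the horizon gives the claim; the $t=0$ term vanishes because $x_{-H}=0$ under $x_0=0$ and $w_t=0$ for $t\leq 0$, leaving the stated factor $T$. The only delicate point is the legitimacy of the segment-wise Lipschitz estimate, which requires that both the true and surrogate trajectories stay in the ball of radius $b$ where the gradient bound applies; this is precisely what Lemma~\ref{lem: bdd on xt ut xt tilde ut tilde} supplies, and I expect verifying this containment (rather than any heavy computation) to be the main thing to get right.
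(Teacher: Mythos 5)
Your proposal is correct and follows essentially the same route as the paper's proof: per-stage triangle inequality, the identities $x_t-\tilde x_t=\AK^H x_{t-H}$ and $u_t-\tilde u_t=-\Kb\AK^H x_{t-H}$ from Proposition~\ref{prop: approx}, the bounds $\|\AK^H\|_2\leq\kappa^2(1-\gamma)^H$, $\|\Kb\|_2\leq\kappa$, $\|x_{t-H}\|_2\leq b$, the gradient bound $Gb$ from Assumption~\ref{ass: bounded Hessian largest evalue} on the ball certified by Lemma~\ref{lem: bdd on xt ut xt tilde ut tilde}, and a vanishing $t=0$ term. The only cosmetic difference is that you obtain the Lipschitz estimate by integrating the gradient along segments, whereas the paper invokes the convexity gradient inequality at the endpoints; both yield the identical per-stage bound $Gb^2\kappa^2(1+\kappa)(1-\gamma)^H$.
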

\begin{proof}
	Remember that $f_t(\bm M_{t-H:t})=\E[c_t(\tilde x_t, \tilde u_t)]$. By Proposition \ref{prop: approx}, Lemma \ref{lem: bdd on xt ut xt tilde ut tilde}, and Assumption \ref{ass: bounded Hessian largest evalue}, when $t\geq 1$, we have
	\begin{align*}
	| \E[c_t(x_t,u_t)]-\E[c_t(\tilde x_t, \tilde u_t)]| & \leq \E[|c_t(x_t,u_t)-c_t(\tilde x_t, \tilde u_t)|]\\
	& \leq \E [\left| \langle \nabla_x c_t(x_t, u_t), x_t-\tilde x_t \rangle \right|]+  \E [\left| \langle \nabla_u c_t(x_t, u_t), u_t-\tilde u_t \rangle \right|]\\
	& \leq \E[Gb(\|x_t-\tilde x_t\|_2+\|u_t-\tilde u_t\|_2)]\\
	& \leq Gb\E[\|\AK^H x_{t-H}\|_2+ \|\Kb \AK^H x_{t-H}\|_2]\\
	& \leq Gb (\kappa^2+\kappa^3) (1-\gamma)^H b 
	\end{align*}
	When $t=0$, since $c_0(0,0)=0$, $x_0=\tilde x_0=0$, and $u_0=\tilde u_0=0$, we have $\E[c_0(x_0, u_0)]-\E[c_0(\tilde x_0, \tilde u_0)]=0$. 
	The proof is completed by summing over $t=0, \dots, T$.
\end{proof}
\begin{lemma}
	Apply Algorithm \ref{alg:ogd} with constant stepsize $\eta$. Then, 
	\begin{align*}
	\left| \sum_{t=0}^T f_t(\bm M_{t-H:t})-\sum_{t=0}^T \mathring f_t(M_t)\right| \leq Gb(1+\kappa)\sqrt n \bar w\kappa^2\kappa_B \sqrt H G_f \eta \frac{1}{\gamma^2}T
	\end{align*}
\end{lemma}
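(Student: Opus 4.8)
The plan is to reduce the whole estimate to a single per-stage comparison and then exploit the fact that the OGD iterates move slowly. Fix a stage $t$ and recall that $\mathring f_t(\bm M_t)=f_t(\bm M_t,\dots,\bm M_t)$, so $|f_t(\bm M_{t-H:t})-\mathring f_t(\bm M_t)|$ is precisely a difference of $f_t$ evaluated at two lists of policies: the true history $\bm M_{t-H:t}$ and the constant list whose every slot equals the current $\bm M_t$. First I would invoke Lemma \ref{lem: ft(Mt) property} with $\tilde{\bm M}_{t-k}=\bm M_t$ for all $0\le k\le H$, which immediately yields
$$|f_t(\bm M_{t-H:t})-\mathring f_t(\bm M_t)|\le Gb(1+\kappa)\sqrt n\,\bar w\,\kappa^2\kappa_B\sum_{i=0}^H(1-\gamma)^{\max(i-1,0)}\sum_{j=1}^H\|M^{[j]}_{t-i}-M^{[j]}_t\|_2.$$

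Next I would convert the inner sum of spectral norms into a Frobenius distance between policy lists. Using $\|\cdot\|_2\le\|\cdot\|_F$ together with Cauchy--Schwarz gives $\sum_{j=1}^H\|M^{[j]}_{t-i}-M^{[j]}_t\|_2\le\sqrt H\,\|\bm M_{t-i}-\bm M_t\|_F$. The key analytic input is the slow variation of the iterates produced by OGD with a constant stepsize $\eta$: by the non-expansiveness of the Euclidean projection onto $\Omega_\epsilon$ and the gradient bound $\|\nabla\mathring f_t\|_F\le G_f$ from Lemma \ref{lem: ft(Mt) property}, consecutive iterates satisfy $\|\bm M_t-\bm M_{t-1}\|_F\le\eta G_f$, hence $\|\bm M_{t-i}-\bm M_t\|_F\le i\eta G_f$ (exactly the estimate already recorded in the proof of Lemma \ref{lem: g(Mt:t-H) to g(Mt) bdd error}). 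Substituting this and summing the resulting series, $\sum_{i=0}^H(1-\gamma)^{\max(i-1,0)}\,i\le\sum_{i=1}^\infty i(1-\gamma)^{i-1}=\gamma^{-2}$, collapses the per-stage bound to $Gb(1+\kappa)\sqrt n\,\bar w\,\kappa^2\kappa_B\sqrt H\,G_f\,\eta\,\gamma^{-2}$.

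Finally I would sum over $t$. I expect the only subtlety to be the boundary stages where $t-i<0$: with the convention $w_t=0$ for $t\le0$, the components $w_{t-k}$ with $t-k\le0$ vanish, so those terms contribute nothing and the per-stage estimate above remains valid. Moreover at $t=0$ one has $\tilde x_0=\tilde u_0=0$ for every choice of policies, whence $f_0(\bm M_{-H:0})=\mathring f_0(\bm M_0)=c_0(0,0)$ and the $t=0$ summand is exactly $0$; the sum therefore effectively runs over $t=1,\dots,T$, producing precisely the factor $T$ in the claimed bound. Adding up the $T$ identical per-stage bounds gives the stated inequality. The argument is essentially mechanical once Lemma \ref{lem: ft(Mt) property} is in hand; the main thing to get right is the bookkeeping of matrix norms in the passage to $\|\bm M_{t-i}-\bm M_t\|_F$ and the telescoped distance bound $\|\bm M_{t-i}-\bm M_t\|_F\le i\eta G_f$, so that the constants match Lemma \ref{lem: ft(Mt) property} exactly.
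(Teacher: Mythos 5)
Your proposal is correct and follows essentially the same route as the paper's own proof: apply Lemma \ref{lem: ft(Mt) property} with the constant list $(\bm M_t,\dots,\bm M_t)$, pass to Frobenius distances, use the OGD slow-variation bound $\|\bm M_{t-i}-\bm M_t\|_F\le i\eta G_f$, sum the series $\sum_i i(1-\gamma)^{i-1}\le \gamma^{-2}$, and note that the $t=0$ term vanishes before summing over stages. Your treatment of the boundary indices and the explicit Cauchy--Schwarz step merely spell out details the paper leaves implicit.
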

\begin{proof}
	Firstly, by OGD's procedures, we have $\|\bm M_t -\bm M_{t-k}\|_F\leq k \eta G_f$. 
	
	Next, by Lemma \ref{lem: ft(Mt) property}, and since $\mathring f_t(\bm M)=f_t(\bm M, \dots, \bm M)$, we have
	\begin{align*}
	| f_t(\bm M_{t-H:t})-\mathring f_t(\bm M_t)| &\leq Gb(1+\kappa)\sqrt n \bar w\kappa^2\kappa_B \sqrt H \sum_{i=0}^H(1-\gamma)^{\max(i-1,0)}  \|\bm M_{t-i}-{\bm M}_{t}\|_F\\
	& \leq Gb(1+\kappa)\sqrt n \bar w\kappa^2\kappa_B \sqrt H G_f \eta \frac{1}{\gamma^2}
	\end{align*}
	when $1\leq t \leq T$ and $f_t(\bm M_{t-H:t})-\mathring f_t(\bm M_t)=0$ when $t=0$. Therefore, by summing over $1\leq t \leq T$, we complete the proof.
\end{proof}
In conclusion, we can bound Part i by summing up the  bounds in the two lemmas above.

\subsection{Proof of Lemma \ref{lem:  part ii}}
OGD's regret bound is standard in the literature (see e.g. \cite{hazan2019introduction}). The bound on $G_f$ is proved in Lemma \ref{lem: ft(Mt) property}. Next, we bound the diameter of $\Omega_\epsilon$. 	For any $\bm M, \tilde{\bm M} \in \Omega_\epsilon$, we have $\bm M, \tilde{\bm M} \in \M$. Therefore, 
\begin{align*}
\|\bm M-\tilde{\bm M}\|_F& \leq \sum_{i=1}^H \|M^{[i]}-\tilde M^{[i]}\|_F \leq \sum_{i=1}^H (\|M^{[i]}\|_F+\|\tilde M^{[i]}\|_F)\leq \sqrt m \sum_{i=1}^H (\|M^{[i]}\|_\infty+\|\tilde M^{[i]}\|_\infty)\\
& \leq \sqrt m \sum_{i=1}^H 4 \sqrt n \kappa^3 (1-\gamma)^{i-1} =4 \sqrt{mn} \kappa^3/\gamma=\delta
\end{align*}
\subsection{Proof of Lemma \ref{lem: part iii second half}}
For notational simplicity, we slightly abuse the notation and let $(x_t, u_t)$ denote the trajectory generated by $\bm M_{\text{ap}}$ in this proof. 

By Lemma \ref{lem: bdd on xt ut xt tilde ut tilde}, the proof of Lemma \ref{lem: define M_ap and epsilon3}, and Assumption \ref{ass: bounded Hessian largest evalue}, we can bound $	J_T(\bm M_{\text{ap}})-J_T(K^*)$.
\begin{align*}
J_T(\bm M_{\text{ap}})-J_T(K^*)&=\sum_{t=0}^T\E[c_t(x_t,u_t)-c_t(x_t^*,u_t^*)]\\
& =\sum_{t=1}^T\E[c_t(x_t,u_t)-c_t(x_t^*,u_t^*)]\\
& \leq \sum_{t=1}^T G b 2\kappa^4(1+\kappa) \sqrt n \bar w (1-\gamma)^H/\gamma\\
&=2TG b \kappa^4(1+\kappa) \sqrt n \bar w (1-\gamma)^H/\gamma
\end{align*}
Further, by Lemma \ref{lem: c and f's difference}, we have
\begin{align*}
\sum_{t=0}^T \mathring f_t(\bm M_{\text{ap}})-J_T(K^*) &= \sum_{t=0}^T \mathring f_t(\bm M_{\text{ap}})-J_T(\bm M_{\text{ap}})+J_T(\bm M_{\text{ap}})-J_T(K^*) 
\\
& \leq T G b^2 \kappa^2 (1-\gamma)^H (1+\kappa)+2TG b \kappa^4(1+\kappa) \sqrt n \bar w (1-\gamma)^H/\gamma\\
& \leq \Theta(T n^2 m H^2(1-\gamma)^H)
\end{align*}
when $H\geq \frac{\log(2\kappa^2)}{\log((1-\gamma)^{-1})}$.

\end{document}